\DeclareMathAlphabet{\mathsf}{OT1}{LibertinusSans-LF}{m}{n}
\SetMathAlphabet{\mathsf}{bold}{OT1}{LibertinusSans-LF}{bx}{n}
\newcommand{\IO}[2]{\Input{\footnotesize #1} \Output{\footnotesize #2}\BlankLine\BlankLine }
\newcommand\Oh{{\mathcal{O}}}
\newcommand\oh{o}
\newcommand\vsAlph{\sigma}
\newcommand\vTColl{\mathcal{T}}	\newcommand\vT{\vTColl}
\newcommand\vsTColl{\mathit{n}}	\newcommand\vn{\vsTColl}
\newcommand\vPat{\mathit{P}}	\newcommand\vP{\vPat}
\newcommand\vsPat{\mathit{m}}	\newcommand\vm{\vsPat}
\newcommand\vnOcc{\mathit{occ}} \newcommand\occ{\vnOcc}
\newcommand\vSPosPat{\mathit{sp}}     \newcommand\vsp{\vSPosPat}
\newcommand\vEPosPat{\mathit{ep}}     \newcommand\vep{\vEPosPat}
\newcommand\vnBWTRuns{\mathit{r}}	\newcommand\vr{\vnBWTRuns}
\newcommand\vSamp{\mathit{t}'}
\newcommand\vt{\mathit{t}}
\newcommand\vSampFac{s}			\newcommand\vs{\vSampFac}
\newcommand\vSPosMRun{\mathit{sm}}     \newcommand\vsm{\vSPosMRun}
\newcommand\vEPosMRun{\mathit{em}}     \newcommand\vem{\vEPosMRun}
\newcommand\SA{\mathsf{SA}}
\newcommand\ISA{\mathsf{\SA\textsuperscript{-1}}}
\newcommand\CSA{\textsf{csa}}
\newcommand\FMIdx{\textsf{fm-index}}
\newcommand\RLCSA{\textsf{rlcsa}}
\newcommand\RLFMIdx{\textsf{rlfm-index}}
\newcommand\RIdx{\textsf{\textit{r}-index}}
\newcommand\RCSA{\textsf{\textit{r}-\CSA}}
\newcommand\SRIdx{\textsf{\textit{sr}-index}}
\newcommand\SRCSA{\textsf{\textit{sr}-\CSA}}
\newcommand\LZIdx{\textsf{lz-index}}
\newcommand\GIdx{\textsf{\textit{g}-index}}
\newcommand\HybridIdx{\textsf{hyb-index}}
\newcommand\LZEndIdx{\textsf{lze-index}}
\newcommand\BWT{\mathsf{BWT}}
\newcommand\LF{\mathsf{LF}}
\newcommand\fnPhi{\mathsf{\Phi}}
\newcommand\fnIPhi{\mathsf{\fnPhi\textsuperscript{-1}}}
\DeclareMathOperator{\ftSearch}{t_{search}(\vsPat)}
\DeclareMathOperator{\ftLookup}{t_{lookup}(\vsTColl)}
\newcommand\nameDS{\texttt}
\newcommand\nameOp{\texttt}
\newcommand\nameIdx{\textsf}
\newcommand\nameColl{\textsf}
\newcommand\rank{\nameOp{rank}}
\newcommand\select{\nameOp{select}}
\newcommand\predecessor{\nameOp{pred}}
\newcommand\successor{\nameOp{succ}}
\newcommand\mapping{\nameOp{map}}
\newcommand\findStartToehold{\nameOp{findStartToehold}}
\newcommand\Start{\nameDS{Start}}
\newcommand\Letter{\nameDS{Letter}}
\newcommand\First{\nameDS{First}}
\newcommand\FirstToRun{\nameDS{FirstToRun}}
\newcommand\Samples{\nameDS{Samples}}
\newcommand\SamplesF{\mathtt{F_{\SA}}}
\newcommand\MarksL{\mathtt{L_{\vTColl}}}
\newcommand\MapMarksL{\mathtt{Map_{LF}}}
\newcommand\SamplesL{\mathtt{L_{\SA}}}
\newcommand\MarksF{\mathtt{F_{\vTColl}}}
\newcommand\MapMarksF{\mathtt{Map_{FL}}}
\newcommand\Valid{\nameDS{Valid}}
\newcommand\ValidF{\mathtt{Valid_{F}}}
\newcommand\ValidL{\mathtt{Valid_{L}}}
\newcommand\ValidAreaF{\mathtt{ValidArea_{F}}}
\newcommand\ValidAreaL{\mathtt{ValidArea_{L}}}
\newcommand\Removed{\mathtt{Del}}
\begin{document}

% Title portion. Note the short title for running heads
\title{Fast and Small Subsampled R-indexes}

%\titlerunning{Subsampling the r-index} %TODO optional, please use if title is longer than one line

% Authors
\author{Dustin Cobas}
\affiliation{
  \institution{CeBiB --- Center for Biotechnology and Bioengineering}
%    \city{Santiago}
  \country{Chile}
}
\affiliation{
  \institution{University of Chile}
  \department{Dept. of Computer Science}
%    \city{Santiago}
  \country{Chile}
}
\email{dustin.cobas@gmail.com}
\orcid{https://orcid.org/0000-0001-6081-694X}
%{ANID/Scholarship Program/DOCTORADO BECAS CHILE/2020-21200906, Chile. Basal Funds FB0001, ANID, Chile.}%TODO mandatory, please use full name; only 1 author per \author macro; first two parameters are mandatory, other parameters can be empty. Please provide at least the name of the affiliation and the country. The full address is optional
%

\author{Travis Gagie}
\affiliation{
  \institution{CeBiB --- Center for Biotechnology and Bioengineering}
  \country{Chile}
}
\affiliation{
  \institution{Dalhousie University}
  \country{Canada}
}
\email{travis.gagie@gmail.com}
\orcid{https://orcid.org/0000-0003-3689-327X}
%{NSERC Discovery Grant RGPIN-07185-2020. Basal Funds FB0001, ANID, Chile.}

\author{Gonzalo Navarro}
\affiliation{
  \institution{CeBiB --- Center for Biotechnology and Bioengineering}
  \country{Chile}
}
\affiliation{
  \institution{University of Chile}
  \department{Dept. of Computer Science}
  \country{Chile}
}
\email{gnavarro@dcc.uchile.cl}
\orcid{https://orcid.org/0000-0002-2286-741X}
\begin{abstract}
  The $\RIdx$ (Gagie et al., JACM 2020) represented a breakthrough in compressed indexing of repetitive text collections, outperforming its alternatives by orders of magnitude in query time.
  Its space usage, $\Oh(r)$ where $r$ is the number of runs in the Burrows--Wheeler Transform of the text, is however higher than Lempel--Ziv and grammar-based indexes, and makes it uninteresting in various real-life scenarios of milder repetitiveness.
  In this paper we introduce the $\SRIdx$, a variant that limits a large fraction of the space to $\Oh(\min(r,n/s))$ for a text of length $n$ and a given parameter $s$, at the expense of multiplying by $s$ the time per occurrence reported.
  The $\SRIdx$ is obtained by carefully subsampling the text positions indexed by the $\RIdx$, in a way that we prove is still able to support pattern matching with guaranteed performance.
  Our experiments demonstrate that the theoretical analysis falls short in describing the practical advantages of the $\SRIdx$, because it performs much better on real texts than on synthetic ones: the $\SRIdx$ retains the performance of the $\RIdx$ while using 1.5--4.0 times less space, sharply outperforming {\em virtually every other} compressed index on repetitive texts in both time and space.
  Only a particular Lempel--Ziv-based index uses less space---about half---than the $\SRIdx$, but it is an order of magnitude slower.

  Our second contribution are the $\RCSA$ and $\SRCSA$ indexes. Just like the $\RIdx$ adapts the well-known FM-Index to repetitive texts, the $\RCSA$ adapts Sadakane's Compressed Suffix Array (CSA) to this case. We show that the principles used on the $\RIdx$ turn out to fit naturally and efficiently in the CSA framework. The $\SRCSA$ is the corresponding subsampled version of the $\RCSA$. While the CSA performs better than the FM-Index on classic texts with alphabets larger than DNA, our experiments show that the $\SRCSA$ outperforms the $\SRIdx$ on repetitive texts not only over those larger alphabets, but on some DNA texts as well.

  Overall, our new subsampled indexes sweep the table of the existing indexes for highly repetitive text collection, by combining the exceptional speed of the $\RIdx$ with drastically reduced storage use.
\end{abstract}

\maketitle

\section{Introduction} \label{sec:intro}

The rapid surge of massive repetitive text collections, like genome and sequence read sets and versioned document and software repositories, has raised the interest in text indexing techniques that exploit repetitiveness to obtain orders-of-magnitude space reductions, while supporting pattern matching directly on the compressed text representations~\cite{GNencyc18,Nav20}.

Traditional compressed indexes rely on statistical compression~\cite{NavarroM07:CFT}, but this is ineffective to capture repetitiveness~\cite{KreftN13:CIR}.
A new wave of repetitiveness-aware indexes~\cite{Nav20} build on other compression mechanisms like Lempel--Ziv~\cite{LZ76} or grammar compression~\cite{KY00}.
A particularly useful index of this kind is the $\RLFMIdx$~\cite{MakinenN05:SSA,MakinenNSV10:SRH}, because it emulates the classical suffix array~\cite{ManberM93:SAN} and this simplifies translating suffix-array based algorithms to run on it~\cite{MBCT15}.

The $\RLFMIdx$ represents the Burrows--Wheeler Transform ($\BWT$)~\cite{BurrowsW94:BSL} of the text in run-length compressed form, because the number $r$ of maximal equal-letter runs in the $\BWT$ is known to be small on repetitive texts \cite{KK19}.
A problem with the $\RLFMIdx$ is that, although it can count the number of occurrences of a pattern using $\Oh(\vr)$ space, it needs to sample the text at every $\vs$th position, for a parameter $\vs$, in order to locate each of those occurrences in time proportional to $\vs$.
The $\Oh(\vn/\vs)$ additional space incurred on a text of length $\vn$ ruins the compression on very repetitive collections, where $\vr \ll \vn$.
The recent $\RIdx$~\cite{GagieNP20:FFS} closed the long-standing problem of efficiently locating the occurrences within $\Oh(\vnBWTRuns)$ space, offering pattern matching time orders of magnitude faster than previous repetitiveness-aware indexes.

In terms of space, however, the $\RIdx$ is considerably larger than Lempel--Ziv based indexes of size $\Oh(z)$, where $z$ is the number of phrases in the Lempel--Ziv parse.
Gagie et al.~\cite{GagieNP20:FFS} show that, on extremely repetitive text collections where $n/r = 500$--$10{,}000$, $r$ is around $3z$ and the $\RIdx$ size is $0.06$--$0.2$ bits per symbol (bps), about twice that of the $\LZIdx$~\cite{KreftN13:CIR}, a baseline Lempel--Ziv index.
However, $r$ degrades faster than $z$ as repetitiveness drops: in an experiment on bacterial genomes in the same article, where $n/r \approx 100$, the $\RIdx$ space approaches $0.9$ bps, $4$ times that of the $\LZIdx$; $r$ also approaches $4z$.
%In other DNA sequence collections, $r$ is as high as $7z$~\cite{Nav20}.
Experiments on other datasets show that the $\RIdx$ tends to be considerably larger~\cite{NS19,CNP21,DN21,BCGHMNR21}. %\footnote{The $n/r$ measurements in the article~\cite{BCGHMNR21} are not correct.}
Indeed, while in some realistic cases $n/r$ can be over 1{,}500, in most cases it is well below: 40--160 on versioned software and document collections and fully assembled human chromosomes, 7.5--50 on virus and bacterial genomes (with $r$ in the range $4z$--$7z$), and just 4--9 on sequencing reads; see Section~\ref{sec:result}.
An $\RIdx$ on such a small $n/r$ ratio easily becomes larger than the plain sequence data.

In this paper we tackle the problem of the (relatively) large space usage of the $\RIdx$.
This index manages to locate the pattern occurrences by sampling $\vr$ text positions (corresponding to the ends of $\BWT$ runs).
We show that one can remove some carefully chosen samples so that, given a parameter $s$, the index stores only $\Oh(\min(r,n/s))$ samples while its locating machinery can still be used to guarantee that every pattern occurrence is located within $\Oh(s)$ steps.
We call the resulting index the {\em subsampled $\RIdx$}, or $\SRIdx$.
The worst-case time to locate the $\occ$ occurrences of a pattern of length $\vm$ on an alphabet of size $\sigma$ then rises from $\Oh((\vm+\occ)\log(\sigma+n/r))$ in the implemented $\RIdx$ to $\Oh((\vm+\vs \cdot \occ)\log(\sigma+n/r))$ in the $\SRIdx$, which matches the search cost of the $\RLFMIdx$.

The $\SRIdx$ can then be seen as a hybrid between the $\RIdx$ (matching it when $s=1$) and the $\RLFMIdx$ (obtaining its time with less space; the spaces become similar when repetitiveness drops).
In practice, however, the $\SRIdx$ performs {\em much better than both} on repetitive texts, retaining the time performance of the $\RIdx$ while using $1.5$--$4.0$ times less space, and sharply dominating the $\RLFMIdx$, the best grammar-based index~\cite{CNP21}, and the $\LZIdx$, both in space and time.
Its only remaining competitor is a hybrid between a Lempel--Ziv based and a statistical index~\cite{FKP18}.
This index can use up to half the space of the $\SRIdx$, but it is an order of magnitude slower.
Overall, the $\SRIdx$ stays {\em orders of magnitude faster than all the alternatives} while using small space---generally less---in a wide range of repetitiveness scenarios.

For historical reasons, the $\RIdx$ was developed on top of the $\RLFMIdx$, which performs best on small alphabets like DNA. Another well-known alternative to the $\RLFMIdx$, the $\RLCSA$  \cite{MakinenN05:SSA,MakinenNSV10:SRH}, performs better on larger alphabets but suffers from the same space-time tradeoff: one needs to spend $\Oh(n/s)$ space in order to report each occurrence in time proportional to $s$. Our second contribution is to adapt the $\RIdx$ mechanisms to run on the $\RLCSA$, to obtain what we dub {\em $\RCSA$}. It turns out that the techniques used on the $\RIdx$ apply naturally and efficiently to the $\RLCSA$ data structures, leading to a space- and time-efficient index. We further apply the subsampling mechanism of the $\SRIdx$ to the $\RCSA$ to obtain the {\em subsampled $\RCSA$}, or $\SRCSA$. Our experiments show that the $\SRCSA$ outperforms the $\SRIdx$ on texts over large alphabets, as well as on some repetitive DNA collections.

Overall, the development of the $\SRIdx$ and the $\SRCSA$ represents a major improvement in the state of the art of compressed indexes for highly repetitive text collections. By combining the speed of the $\RIdx$, which was by far the fastest index but used significantly more space than others, with a drastic reduction in space that does not sacrifice time, our new subsampled indexes sharply dominate most of the existing actors in compressed text indexing.

A conference version of this paper appeared in {\em Proc. CPM 2021} \cite{CGN21}. This article contains more detailed explanations, more extensive experiments, improved implementations, and the full development of the $\RCSA$ and $\SRCSA$ indexes.

\section{Background} \label{sec:background}

%The {\em suffix tree}~\cite{Wei73:LPM} of a string $\vTColl$ is a compressed digital tree storing all the suffixes $\vTColl[i..\vsTColl]$, for all $1 \leq i \leq \vsTColl$.
%The suffix tree node reached by following the symbols of a pattern $\vPat[1..\vm]$ is called the {\em locus} of $\vPat$ and is the ancestor of all the $\occ$ leaves corresponding to the positions of $\vPat$ in $\vTColl$.
%The suffix tree uses $\Oh(\vsTColl \log \vsTColl)$ bits and lists all the occurrences of $\vPat$ in time $\Oh(\vsPat + \occ)$.

\subsection{Suffix arrays}

The {\em suffix array}~\cite{ManberM93:SAN} $\SA[1..\vsTColl]$ of a string $\vTColl[1..\vsTColl]$ over alphabet $[1..\sigma]$ is a permutation of the starting positions of all the suffixes of $\vTColl$ in lexicographic order, $\vTColl[\SA[i].. \vsTColl] < \vTColl[\SA[i + 1].. \vsTColl]$ for all $1 \leq i < \vsTColl$.
For technical convenience we assume that $\vTColl[\vsTColl]=\$$, a special terminator symbol that is smaller than every other symbol in $\vTColl$.
The suffix array can be binary searched in time $\Oh(\vsPat \log \vsTColl)$ to obtain the range $\SA[\vSPosPat.. \vEPosPat]$ of all the suffixes prefixed by a search pattern $\vPat[1..\vsPat]$. Once this range is determined, the occurrences of $\vPat$ can be {\em counted} (i.e., return their number $\occ = \vEPosPat - \vSPosPat + 1$ of occurrences in $\vTColl$), and also {\em located} (i.e., returning their positions in $\vTColl$) in time $\Oh(\occ)$ by simply listing their starting positions, $\SA[\vSPosPat],\ldots, \SA[\vEPosPat]$.
The suffix array can then be stored in $\vsTColl \lceil \log \vsTColl \rceil$ bits\footnote{We use $\log$ to denote the binary logarithm.} (plus the $n\lceil\log \sigma\rceil$ bits to store $\vT$) and we say it {\em searches} (i.e., counts and locates) for $\vP$ in $\vT$ in total time $\Oh(m\log n + \occ)$. Its main drawback is that its space usage is too high to maintain it in main memory for current text collections.

\subsection{Compressed suffix arrays} \label{subsec:csa}

\emph{Compressed suffix arrays (CSAs)}~\cite{NavarroM07:CFT} are space-efficient representations of both the suffix array ($\SA$) and the text ($\vT$).
They can find the interval $\SA[\vSPosPat..\vEPosPat]$ corresponding to $\vPat[1..\vsPat]$ in time $\ftSearch$ and access any cell $\SA[i]$ in time $\ftLookup$, so they can be used to search for $\vPat$ in time $\Oh(\ftSearch + \occ \ftLookup)$.

% CSA based on psi function (Grossi & Vitter, Sadakane, RLCSA)
\subsubsection{$\Psi$--based CSAs} Grossi and Vitter \cite{GrossiV05:CSA} and Sadakane \cite{Sad03} introduced CSAs based on another permutation, $\Psi[1..n]$, related to the suffix array:
$$ \Psi(i) = \SA^{-1}[(\SA[i] \bmod \vsTColl) + 1],$$
that is, $\Psi(i) = j$ such that $\SA[\Psi(i)] = \SA[j] = \SA[i] + 1$.
%\footnote{Note that, because of the terminator $\$$, it holds $\SA[1] = \vsTColl$, so $\SA[\Psi(i)] = 1$}
$\Psi$ is then a permutation of $[1..\vsTColl]$ where $\Psi(i)$ holds the position of $\SA[i]+1$ in $\SA$, which allows us virtually move forward in $\vTColl$ from $\SA$: if $\SA[i]$ points to $\vTColl[j]$, then $\SA[\Psi(i)]$ points to $\vTColl[j+1]$. Sadakane's CSA \cite{Sad03}, which we call simply $\CSA$, adds a bitvector $D[1..n]$ that marks with $D[i]=1$ the $\sigma$ positions $\SA[i]$ where the first symbol of the suffixes changes, and an $o(n)$-space data structure \cite{Cla96} that computes in constant time $\rank(D,i)$, the number of 1s in $D[1..i]$. This allows us reading any string pointed from $\SA[j]$: the consecutive symbols are $\rank(D,i)$, $\rank(D,\Psi(i))$, $\rank(D,\Psi(\Psi(i)))$, $\ldots$ so we can compare $\vPat$ with the suffix pointed from any suffix array position along the binary search in $\Oh(\vsPat)$ time, and thus find the suffix array range $\SA[\vSPosPat..\vEPosPat]$ in time $\ftSearch = \Oh(\vsPat\log\vsTColl)$.

For locating, we must be able to compute any $\SA[i]$. The $\CSA$ stores the $\SA$ values that point to text positions that are a multiple of $s$, for a space-time tradeoff parameter $s$. A bitvector $B[1..n]$ with $\rank$ support is used to mark with $B[i]=1$ the sampled positions $\SA[i]$, so a sampled entry $\SA[i]$ is stored at position $\rank(B,i)$ of a sampled array. If $\SA[i]$ is not sampled, the $\CSA$ tries $\SA[\Psi(i)]$, $\SA[\Psi^2(i)]$, and so on, until it finds a sampled $\SA[\Psi^k(i)]$ (i.e., $B[\Psi^k(i)]=1$), for some $k<s$. It then holds that $\SA[i] = \SA[\Psi^k(i)]-k$, so the $\CSA$ supports $\ftLookup = \Oh(s)$. The $\CSA$ then searches in time $\Oh(\vsPat \log \vsTColl + s \cdot \occ)$.

Compression is obtained thanks to the regularities of permutation $\Psi$. For example, because $\Psi$ is increasing in the area of the suffixes starting with the same symbol, it can be represented within the zero-order statistical entropy of $\vTColl$, while supporting constant-time access \cite{Sad03} (more complex $\Psi$-based CSAs obtain higher-order entropy space \cite{GrossiV05:CSA}). To this space, we must add the $\Oh(\vsTColl)$ bits for bitvectors $D$ and $B$, and the $\lfloor \vsTColl/s\rfloor\log \vsTColl$ bits for the samples of $\SA$. The text $\vTColl$ is {\em not} stored.

M\"akinen and Navarro \cite{MakinenN05:SSA} observed another regularity of $\Psi$: it features {\em runs} of consecutive values, that is, $\Psi(i+1)=\Psi(i)+1$. They designed the so-called {\em Run-Length CSA}, or $\RLCSA$, which aimed to use $\Oh(r_\Psi)$ space, where $r_\Psi$ is the number of maximal runs in $\Psi$. It was soon noted that $r_\Psi$ is particularly small on repetitive text collections, which enabled space reductions that are much more significant than those obtained via statistical entropy \cite{MakinenNSV10:SRH}.

Function $\Psi$ was represented in $\Oh(r_\Psi \log \vsTColl)$ bits by encoding the runs $\Psi(i..i+l) = \Psi(i),\Psi(i)+1,\ldots,\Psi(i)+l$ as the pair $\langle \Psi(i),l\rangle$.
The time to access $\Psi$ increases, using modern predecessor data structures \cite{BelazzouguiN15:OLU}, to $\Oh(\log\log_w(\vsTColl/r_\Psi))$, where $w$ is the size in bits of the computer word (we give the details in Section~\ref{subsec:rcsa-counting}). By also representing bitvectors $D$ and $B$ with predecessor data structures, the $\RLCSA$ searches in time $\Oh((m\log \vsTColl + s \cdot \occ) \log\log_w(\sigma+s+\vsTColl/r_\Psi))$.
The total space of the $\RLCSA$ is then $\Oh((r_\Psi + \vsTColl/s)\log n)$ bits. In highly repetitive text collections, the term $\vsTColl/s$ overshadows $r_\Psi$ and ruins the high compression achieved by collapsing the runs in $\Psi$.

\subsubsection{$\BWT$--based CSAs}
The \emph{Burrows--Wheeler Transform}~\cite{BurrowsW94:BSL} of $\vTColl$ is a permutation $\BWT[1..\vsTColl]$ of the symbols of $\vTColl[1..\vsTColl]$ defined as $$\BWT[i]=\vTColl[\SA[i]-1]$$
(and $\vTColl[\vsTColl]=\$$ if $\SA[i]=1$), which boosts the compressibility of $\vTColl$.
The $\FMIdx$~\cite{FerraginaM05:ICT, FerraginaMMN07:CRS} is a CSA that represents $\SA$ and $\vT$ within the high-order statistical entropy of $\vT$, by exploiting the connection between the $\BWT$ and $\SA$.
For counting, the $\FMIdx$ resorts to \emph{backward search}, which successively finds the suffix array ranges $\SA[\vsp_i..\vep_i]$ of $\vP[i..\vm]$, for $i=\vm$ to $1$, starting from $\SA[\vsp_{\vm+1}..\vep_{\vm+1}] = [1..\vn]$ and then
\begin{align*}
    \vsp_i &= C[c] + \rank_c(\BWT,\vsp_{i+1}-1)+1, \\
    \vep_i &= C[c] + \rank_c(\BWT,\vep_{i+1}),
\end{align*}
where $c=\vP[i]$, $C[c]$ is the number of occurrences of symbols smaller than $c$ in $\vT$, and $\rank_c(\BWT,j)$ is the number of times $c$ occurs in $\BWT[1..j]$.
Thus, $[\vsp,\vep]=[\vsp_1,\vep_1]$ if $\vsp_i \le \vep_i$ holds for all $1 \le i \le \vsPat$, otherwise $\vPat$ does not occur in $\vTColl$.

For locating the occurrences $\SA[\vSPosPat],\ldots, \SA[\vEPosPat]$, the $\FMIdx$ samples $\SA$ just like the $\CSA$. The function used to traverse the text towards a sampled position is the so-called {\em $\LF$-step}, which simulates a backward traversal of $\vTColl$: if $\SA[i]=j$, the value $i'$ such that $\SA[i']=j-1$ is $\LF(i)$, where
\[
    \LF(i) = C[c] + \rank_c(\BWT,i),
\] %\todo{To be consistent with previous definition, it must be $\LF(j) = C[c] + \rank_c(\BWT,j)$}
where $c=\BWT[i]$.
Note that $\LF(i)$ is the inverse function of $\Psi(i)$. Starting from $\SA[i]$, we compute $\LF$ successively until, for some $0 \le k < s$, we find a sampled entry $\SA[\LF^k(i)]$, which is stored explicitly. It then holds $\SA[i] = \SA[\LF^k(i)]+k$.

By implementing $\BWT$ with a wavelet tree \cite{GGV03}, for example, access and $\rank_c$ on $\BWT$ can be supported in time $\Oh(\log\sigma)$, and the $\FMIdx$ searches in time $\Oh((m+s\cdot occ)\log\sigma)$~\cite{FerraginaMMN07:CRS}. With more sophisticated wavelet tree representations \cite{MN08,KP11}, the space of the $\FMIdx$ is the high-order entropy of $\vTColl$ plus the $\Oh((n/s)\log n)$ bits for the sampling of $\SA$.

The \emph{Run-Length FM-index}, $\RLFMIdx$ \cite{MakinenN05:SSA, MakinenNSV10:SRH} is an adaptation of the $\FMIdx$ aimed at repetitive texts, just like the $\RLCSA$ is to the $\CSA$. Say that the $\BWT[1..\vsTColl]$ is formed by $\vnBWTRuns$ maximal {\em runs of equal symbols}, then it holds that  $\vnBWTRuns$ is small in repetitive collections (in particular, it holds $r_\Psi \le r \le r_\Psi+\sigma$ \cite{MakinenN05:SSA}). For example, it is now known that $\vnBWTRuns = \Oh(z\log^2 \vsTColl)$, where $z$ is the number of phrases of the Lempel--Ziv parse of $\vTColl$~\cite{KK19}.

The $\RLFMIdx$ supports counting within $\Oh(\vnBWTRuns\log n)$ bits, by implementing the backward search over data structures that use space proportional to the number of $\BWT$ runs.
It marks in a bitvector $\Start[1..\vn]$ with $1$s the positions $i$ starting $\BWT$ runs, that is, where $i=1$ or $\BWT[i] \not= \BWT[i-1]$.
The first letter of each run is collected in an array $\Letter[1..\vr]$.
Since $\Start$ has only $r$ $1$s, it can be represented within $\vr\log(\vn/\vr)+\Oh(\vr)$ bits, so that any bit $\Start[i]$ and $\rank(\Start,i)$
are computed in time $\Oh(\log(\vn/\vr))$~\cite{OS07}.
We then simulate $\BWT[j]=\Letter[\rank(\Start,j)]$ in $\Oh(\vr\log n)$ bits.
The backward search formula can be efficiently simulated as well, by adding another bitvector that records the run lengths in lexicographic order. Overall, the search time becomes $\Oh((m+s\cdot occ)\log(\sigma+n/r))$ (by replacing the sparse bitvectors with predecessor data structures and using an alternative to wavelet trees \cite{GMR06}, one can reach $\Oh((m+s\cdot occ)\log\log(\sigma+n/r))$).
The $\RLFMIdx$ still uses $\SA$ samples to locate, however, and when $\vnBWTRuns \ll \vsTColl$ (i.e., on repetitive texts), the $\Oh((\vsTColl / s)\log \vn)$ added bits ruin the $\Oh(\vr\log\vn)$-bit space (unless one accepts high locating times by setting $s \approx r$).

The $\RIdx$~\cite{GagieNP20:FFS} closed the long-standing problem of efficiently locating the pattern occurrences using $\Oh(\vnBWTRuns\log n)$-bit space.
The experiments showed that the $\RIdx$ outperforms all the other implemented indexes by orders of magnitude in space or in search time on highly repetitive datasets.
However, other experiments on more typical repetitiveness scenarios~\cite{NS19,CNP21,DN21,BCGHMNR21} showed that the space of the $\RIdx$ degrades very quickly as repetitiveness decreases.
For example, a grammar-based index (which can be of size $g = \Oh(z\log(\vsTColl/z))$) is usually slower but significantly smaller~\cite{CNP21}, and an even slower Lempel--Ziv based index of size $\Oh(z)$~\cite{KreftN13:CIR} is even smaller.
Some later proposals~\cite{NT20} further speed up the $\RIdx$ by increasing the constant accompanying the $\Oh(\vr\log\vn)$-bit space.
The unmatched time performance of the $\RIdx$ comes then with a very high price in space on all but the most highly repetitive text collections, which makes it of little use in many relevant application scenarios.
This is the problem we address in this paper.

\iftoggle{rindex}{
\section{The \texorpdfstring{$\RIdx$}{r-index} Sampling Mechanism} \label{sec:rindex}

Gagie et al.~\cite{GagieNP20:FFS} provide an $\Oh(\vnBWTRuns\log\vn)$-bits data structure that not only finds the range $\SA[\vSPosPat.. \vEPosPat]$ of the occurrences of $\vPat$ in $\vTColl$, but also gives the value $\SA[\vEPosPat]$, that is, the text position of the last occurrence in the range.
They then provide a second $\Oh(\vnBWTRuns\log n)$-bits data structure that, given $\SA[j]$, efficiently finds $\SA[j-1]$.
This suffices to efficiently find all the occurrences of $\vPat$, in time $\Oh((m+occ)\log\log(\sigma+n/r))$ in their theoretical version.

%To compute the occurrences of pattern $\vPat$ in a text $\vTColl$ using only $\Oh(\vnBWTRuns)$ space, Gagie et al.~\cite{GagieNP20:FFS} proposed the new $\RIdx$ based on a new sampling scheme.
%The original proposal stores some information associated with the text character $\vTColl[i]$ if $\vTColl[i]$ is the first or last character in its $\BWT$ run.
In addition to the theoretical design, Gagie et al.\ and Boucher et al.~\cite{GagieNP20:FFS,BGKLMM19} provided a carefully engineered $\RIdx$ implementation.
The counting data structures (which find the range $\SA[\vSPosPat.. \vEPosPat]$) require, for any small constant $\epsilon>0$, $\vr \cdot ((1+\epsilon)\log(\vn/\vr)+\log\sigma+\Oh(1))$ bits (largely dominated by the described arrays $\Start$ and $\Letter$), whereas the locating data structures (which obtain $\SA[\vEPosPat]$, and $\SA[j-1]$ given $\SA[j]$), require $\vr \cdot (2\log \vn + \Oh(1))$ further bits.
The locating structures are then significantly heavier in practice, especially when $\vn/\vr$ is not that large.
Together, the structures use $\vr \cdot ((1+\epsilon)\log(\vn/\vr)+2\log \vn + \log\sigma+\Oh(1))$ bits of space and perform backward search steps and $\LF$-steps in time $\Oh(\frac{1}{\epsilon}\log(\sigma+\vn/\vr))$, so they search for $\vP$ in time $\Oh(\frac{1}{\epsilon}(\vm+\occ)\log(\sigma+\vn/\vr))$.

%\footnotemark.
%\footnotetext{\url{https://github.com/nicolaprezza/r-index}}
%They presented an efficient and practical representation to store the samples required in the \nameOp{locate} machinery.
For conciseness, we do not describe the counting data structures of the $\RIdx$, which are the same of the $\RLFMIdx$ and which we do not modify in our index.
The $\RIdx$ locating structures, which we do modify, are formed by the following components:

\begin{description}
\item[{$\First[1..\vsTColl]$}:] a bitvector marking with $1$s the {\em text} positions of the letters that are the first in a $\BWT$ run.
That is, if $j=1$ or $\BWT[j] \not= \BWT[j-1]$, then $\First[\SA[j]-1]=1$.
Since $\First$ has only $\vr$ $1$s, it is represented in compressed form using $\vr\log(\vn/\vr)+\Oh(\vr)$ bits, while supporting $\rank_1$ in time $\Oh(\log(\vn/\vr))$ and, in $\Oh(1)$ time, the operation $\select_1(\First,j)$ (the position of the $j$th $1$ in $\First$)~\cite{OS07}.
This allows one find the rightmost $1$ up to position $i$, $\predecessor(\First,i)=\select_1(\First,\rank_1(\First,i))$.

\item[{$\FirstToRun[1..\vnBWTRuns]$}:] a vector of integers
(using $\vr\lceil \log \vnBWTRuns \rceil$ bits) mapping each letter marked in $\First$ to the $\BWT$ run where it lies.
That is, if the $p$th $\BWT$ run starts at $\BWT[j]$, and $\First[i]=1$ for $i=\SA[j]-1$, then $\FirstToRun[\rank_1(\First,i)]=p$.

\item[{$\Samples[1..\vnBWTRuns]$}:] a vector of $\lceil \log \vn\rceil$-bit integers storing samples
of $\SA$, so that $\Samples[p]$ is the text position $\SA[j]-1$ corresponding to the last letter $\BWT[j]$ in the $p$th $\BWT$ run.
\end{description}

These structures are used in the following way in the $\RIdx$ implementation~\cite{GagieNP20:FFS}:

\begin{description}
\item[Problem 1:] When computing the ranges $\SA[\vSPosPat.. \vEPosPat]$ along the backward search, we must also produce the value $\SA[\vEPosPat]$.
They actually compute all the values $\SA[\vep_i]$.
This is stored for $\SA[\vep_{\vm+1}]=\SA[n]$ and then, if $\BWT[\vep_{i+1}]=\vP[i]$, we know that $\vep_i=\LF(\vep_{i+1})$ and thus $\SA[\vep_i] = \SA[\vep_{i+1}]-1$.
Otherwise, $\vep_i=\LF(j)$ and $\SA[\vep_i]=\SA[j]-1$, where $j \in [\vsp_{i+1}..\vep_{i+1}]$ is the largest position with $\BWT[j]=\vP[i]$.
The position $j$ is efficiently found with their counting data structures, and the remaining problem is how to compute $\SA[j]$.
Since $j$ must be an end of run, however, this is simply computed as $\Samples[p]+1$, where
$p=\rank_1(\Start,j)$ is the run where $j$ belongs.

\item [Problem 2:] When locating we must find $\SA[j-1]$ from $i=\SA[j]-1$.
There are two cases:
\begin{itemize}
    \item $j-1$ ends a $\BWT$ run, that is, $\Start[j]=1$, and then $\SA[j-1]=\Samples[p-1]+1$, where $p$ is as in Problem 1;

    \item $j-1$ is in the same $\BWT$ run of $j$, in which case they compute $\SA[j-1]=\phi(i)$, where%
\footnote{The special case where $\rank_1(\First,i)=0$ is handled separately.}
\vspace*{-5mm}
\begin{equation} \label{eq:rindex}
\phi(i) = \Samples[\FirstToRun[\rank_1(\First,i)]-1]+1+(i-\predecessor(\First,i)).
\end{equation}
\end{itemize}
\end{description}

\begin{figure}[t]
\begin{center}
\includegraphics[width=\textwidth]{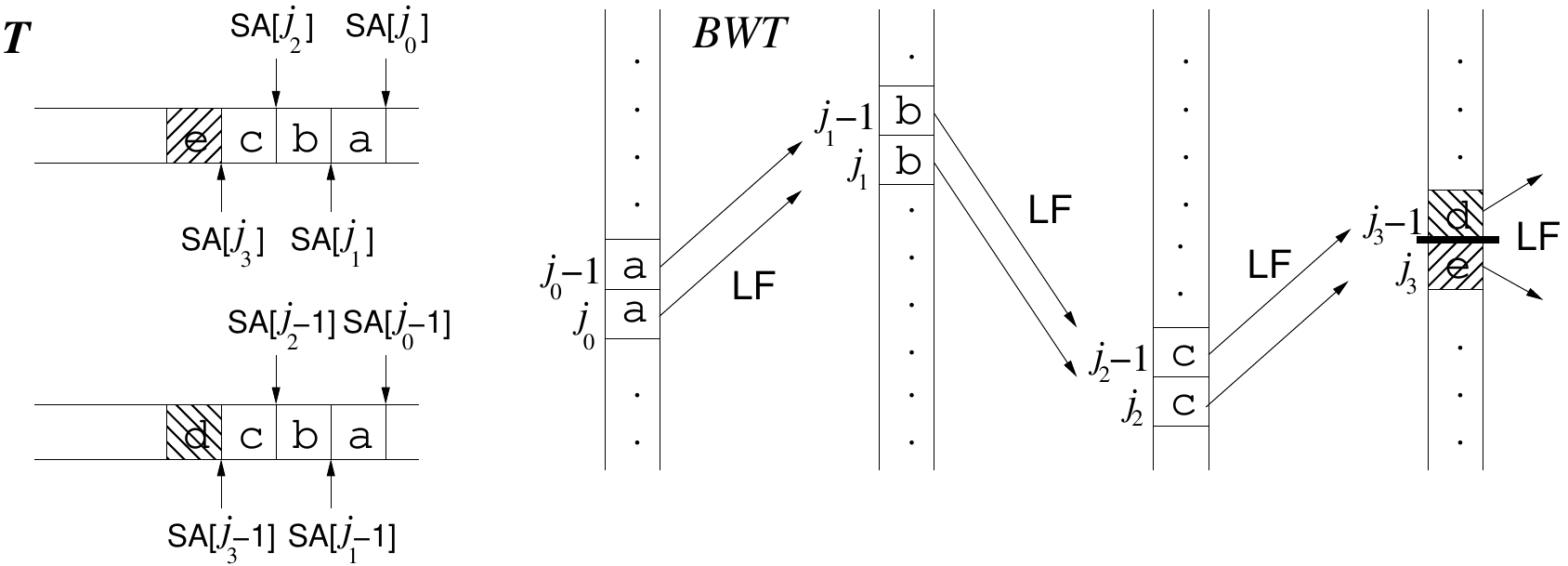}
\end{center}
\caption{Schematic example of the sampling mechanism of the $\RIdx$.
There is a run border between $j_3-1$ and $j_3$.}
\label{fig:figura}
\end{figure}

This formula works because, when $j$ and $j-1$ are in the same $\BWT$ run, it holds that $\LF(j-1)=\LF(j)-1$~\cite{FerraginaM05:ICT}.
Figure~\ref{fig:figura} explains why this property makes the formula work.
Consider two $\BWT$ positions, $j=j_0$ and $j' = j-1 = j_0-1$, that belong to the same run.
The $\LF$ formula will map them to consecutive positions, $j_1$ and $j_1'=j_1-1$.
If $j_1$ and $j_1-1$ still belong to the same run, $\LF$ will map them to consecutive positions again, $j_2$ and $j_2'=j_2-1$, and once again, $j_3$ and $j_3'=j_3-1$.
Say that $j_3$ and $j_3-1$ do not belong to the same run.
This means that $j_3-1$ ends a run (and thus it is stored in $\Samples$) and $j_3$ starts a run (and thus $\SA[j_3]-1$ is marked in $\First$).
To the left of the $\BWT$ positions we show the areas of $\vT$ virtually traversed as we perform consecutive $\LF$-steps.
Therefore, if we know $i=\SA[j]-1=\SA[j_0]-1$, the nearest $1$ in $\First$ to the left is at $\predecessor(\First,i)=\SA[j_3]-1$ (where there is an \texttt{e} in $\vT$) and $p=\FirstToRun[\rank(i)]$ is the number of the $\BWT$ run that starts at $j_3$.
If we subtract $1$, we have the $\BWT$ run ending at $j_3-1$, and then $\Samples[p-1]$ is the position preceding $\SA[j_3-1]$ (where there is a \texttt{d} in $\vT$).
We add $1+(i-\predecessor(\First,i))=4$ to obtain $\SA[j_0-1]=\SA[j-1]$.

These components make up, effectively, a sampling mechanism of $\Oh(\vr\log n)$ bits (i.e., sampling the end of runs), instead of the traditional one of  $\Oh((n/s)\log n)$ bits (i.e., sampling every $s$th text position).

%\section{Run-length Compressed Suffix Array} \label{sec:rlcsa}
%
%\todo[inline]{Description of original RLCSA}
%
%The \emph{Run-Length Compressed Suffix Array} (\RLCSA) is based on the CSA of M{\"a}kinen et al. 2004\todo{Add cite}.
}

\section{\texorpdfstring{\RCSA}{r-CSA}: A \texorpdfstring{$\Psi$}{Psi}-based Index for Repetitive Texts} \label{sec:rcsa}

In this section we introduce the $\RCSA$, an equivalent to the $\RIdx$ based on the $\RLCSA$. We first describe the counting algorithm and data structures, which is just a modern version of those given in the original $\RLCSA$ \cite{MakinenNSV10:SRH}. We then show how to locate within $\Oh(r)$ space, by translating the techniques of the $\RIdx$ \cite{GagieNP20:FFS} to this scenario.

%The function $\Psi$ serves as the core data structure within the Run-Length Compressed Suffix Array ($\RLCSA$)~\cite{MakinenNSV10:SRH} index. The authors put forward a theoretical and practical solution requiring $\Oh(\vnBWTRuns \log(\sigma \vsTColl / \vnBWTRuns) \log(\vsTColl) / B)$ bits, where $B$ represents the block size in bits.
%
%In the first part, our objective is to develop a more lightweight representation for $\Psi$, demanding only $\Oh(\vnBWTRuns \log \vsTColl)$ bits, that can be effectively utilized in the backward search process. To achieve this, we leverage the \emph{runs} or \emph{quasi-repetitions} found in $\Psi$, which have been previously explored in other studies and will be explained upon below. In the subsequent section, we further extend our proposal to include support for locating queries by designing a sampling scheme similar to the $\RIdx$~\cite{GagieNP20:FFS}.

\subsection{Counting in $\Oh(r)$ space}\label{subsec:rcsa-counting}

Let $\Psi_{c} = \Psi[i..j]$ be the range in $\Psi$ corresponding to each symbol $c \in [1..\sigma]$, such that all the suffixes of $\vTColl$ starting with $c$ are in the range $\SA[i..j]$. As said,
$\Psi_{c}$ is strictly increasing over $[1..\vsTColl]$; let us say that it contains $r_{c}$ maximal runs of consecutive values.
By definitions of $\Psi$ and $\BWT$, if $\Psi_{c}(i) = k$ then $\BWT[k] = \vTColl[\SA[i]] = c$.
Consequently, a one-to-one relation exists between the $p$th $\Psi_{c}$ run and the $p$th $\BWT$ run of symbol $c$, for $1 \le p \le \vnBWTRuns_{c}$.
It then follows that $\vnBWTRuns = \sum_{c=1}^{\sigma} \vnBWTRuns_{c}$.\footnote{
  Some sequences of consecutive values in $\Psi$ can extend beyond the limit of the corresponding range $\Psi_{c}$. This is why it holds $r_\Psi \le r$, which becomes an equality if we split those runs of $\Psi$ by allowing only runs inside each $\Psi_c$.} We call $\Psi$-runs those $r$ maximal runs in $\Psi$ that are inside some range $\Psi_c$.

A backward search process was also devised for the $\CSA$ \cite{Sad02} and used in the $\RLCSA$ \cite{MakinenNSV10:SRH}. Once the range $\SA[sp_{i+1}..ep_{i+1}]$ for $P[i+1..m]$ is known, we obtain $\SA[sp_i..ep_i]$ by binary searching within $\Psi_c$, for $c=P[i]$, the maximal range of positions $j$ such that $\SA[j] \in [sp_{i+1}..ep_{i+1}]$. Indeed, those are the suffixes that start with $c=P[i]$ and follow with $P[i+1..m]$. This technique yields the same $\Oh(m\log n)$ counting time, but has better locality of reference.

As in the the $\RLCSA$, we represent each $\Psi$-run $\Psi(i..i+l)$ as a pair $\langle \Psi(i), l \rangle$. Unlike the original $\RLCSA$, we construct a predecessor data structure $\mathcal{P}_{\Psi}$ on the $\vnBWTRuns$ $\Psi$-run heads within the universe $\sigma\vsTColl$ by concatenating all the $\sigma$ ranges $\Psi_{c}$.
Specifically, the predecessor $\mathcal{P}_{\Psi}$ stores $x + (c-1)\vsTColl$ if $x=\Psi(i)$ is a $\Psi$-run head in $\Psi_{c}$, allowing us to compute the $\Psi$-run that contains or precedes a given value $y \in [1..\vsTColl]$.
Thus, the predecessor operation $\predecessor_c$ on $\Psi_c$ is defined in terms of a classic predecessor function $\predecessor$ on the $\Psi$-run heads, as
\[\predecessor_c(y) = \predecessor(\mathcal{P}_{\Psi}, y + (c-1)\vsTColl) = \langle x, k \rangle\]
where $x$ is the actual predecessor value in $\mathcal{P}_{\Psi}$, and $k$ is its $\Psi$-run rank (i.e., the number of $\Psi$-runs up to the one that starts with value $x$).
Using recent predecessor structures \cite[Thm. A.1]{BelazzouguiN15:OLU} to represent $\mathcal{P}_{\Psi}$, we use $\Oh(\vnBWTRuns \log(\vsTColl/\vnBWTRuns))$ bits of space and answer queries in $\Oh(\log \log_{w}(\vsAlph \vsTColl / \vnBWTRuns))$ time.

In addition, we associate each $\Psi$-run head $x=\Psi(i)$ with its global position $i$ in $\Psi$ in an array $I_{\Psi}[1..\vnBWTRuns]$, where $I_{\Psi}[j] = i$ iff $\Psi(i)$ is the first item of the $j$th $\Psi$-run.
$I_{\Psi}$ is used to support the backward search, computing the length of $\Psi$-runs and of each new range in $\SA$.
Structure $I_{\Psi}[1..\vnBWTRuns]$ replaces the array $C[1..\sigma]$ of the original $\RLCSA$~\cite{MakinenNSV10:SRH}.
Figure~\ref{fig:rcsa-counting} illustrates definitions and relations.

%! Author = Dustin Cobas <dustin.cobas@gmail.com>
%! Date = 8/12/2023

\begin{figure}[t]
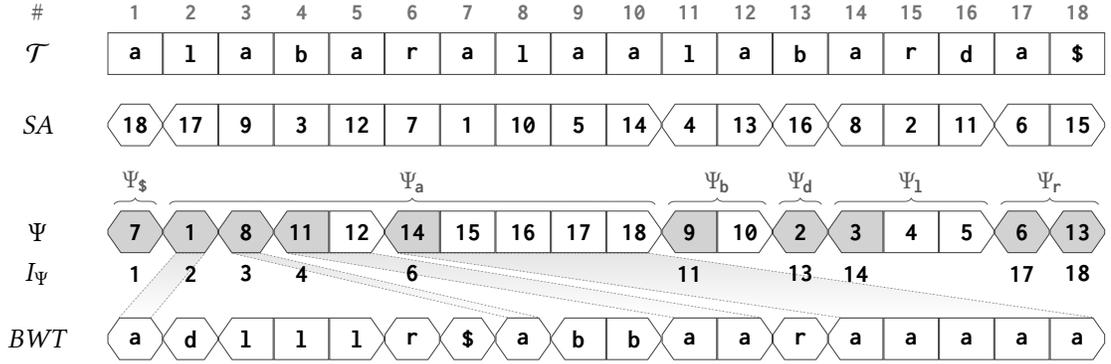

  \begin{center}
    \includestandalone[width=\textwidth]{./rcsa-counting-tikz}
  \end{center}
  \caption{
    Data structures for the counting mechanism of the $\RCSA$ on an example text.
%    Each run in $\SA$, $\Psi$ and $\BWT$ is represented by an entire block.
    The blocks in $\SA$ cover the suffixes starting with the same symbol, which align with the areas of each $\Psi_c$. The blocks in $\Psi$ and $\BWT$ represent runs.
    The gray cells in $\Psi$ are the $\vnBWTRuns$ run heads.
    The stripes show the relation between each run of $\Psi_\mathsf{a}$ and its corresponding $\BWT$ run.
  }
  \label{fig:rcsa-counting}
\end{figure}

Algorithm~\ref{alg:rcsa_counting} computes the suffix array range $A[sp..ep]$ of the occurrences of a pattern $\vPat$ in the text $\vTColl$, using the predecessor structure $\mathcal{P}_{\Psi}$ and the array $I_{\Psi}$. Note that it must consider the cases where the answer is within a $\Psi$-run or not. This yields the following result. %\marginpar{\tiny
%\textcolor{red}{En el código ves primero que $ep$ no esté en el mismo run de $sp$ antes de llamar a findNextEndPos?}
%\newline\textcolor{blue}{Tenía pensada esa optimización pero no llegué a implementarla para los experimentos del paper. Incluso se podría extender al r-index. Intenté ahora agregarla al código pero me está fallando, me tomaría un poco más de tiempo para tenerla completa. Cree que sea recomendable implementarla?}
%}

%%%%%%%%%%
% Algorithm Counting with r-CSA
%%%%%%%%%%
\begin{algorithm}[t]
  \IO{Query pattern $\vPat[1..\vsPat]$.}
  {Range $\langle \vsp,\vep \rangle$ on $\SA$ for $\vPat$.}

  \begin{multicols}{2}
    \SetKwFunction{fnCount}{count}
\SetKwFunction{fnNextSP}{findNextStartPos}
\SetKwFunction{fnNextEP}{findNextEndPos}

\Function{\fnCount{$\vPat[1..\vsPat]$}}{
  $\langle \vsp, \vep \rangle \leftarrow \langle 1, \vsTColl \rangle$\;

  \For{$i \leftarrow \vsPat$ \KwDownTo $1$}{
    $c \leftarrow \vPat[i]$\;

    $\vsp \leftarrow \fnNextSP(\vsp, c)$\;
    $\vep \leftarrow \fnNextEP(\vep, c)$\;
    \If{$\vsp > \vep$} {\Return{``$\vPat$ is not in $\vTColl$''}}
  }

  \Return{$\langle \vsp, \vep \rangle$}\;
}
\BlankLine
\BlankLine

\Function{\fnNextSP{$\vsp$, $c$}}{\label{fn:NextSP}
  $\vsp' \leftarrow \vsp + (c-1)\cdot\vsTColl$\;
  $\langle x, k \rangle \leftarrow \predecessor(\mathcal{P}_{\Psi}, \vsp')$\;
  \If{$\vsp' < x + (I_{\Psi}[k+1]-I_{\Psi}[k])$}{\Return{$I_{\Psi}[k] + (\vsp' - x)$}}
  \Return{$I_{\Psi}[k+1]$}\;
}
\BlankLine
\BlankLine

\Function{\fnNextEP{$\vep$, $c$}}{
  $\vep' \leftarrow \vep + (c-1)\cdot\vsTColl$\;
  $\langle x, k \rangle \leftarrow \predecessor(\mathcal{P}_{\Psi}, \vep')$\;
  \If{$\vep' < x + (I_{\Psi}[k+1]-I_{\Psi}[k])$}{\Return{$I_{\Psi}[k] + (\vep' - x)$}}
  \Return{$I_{\Psi}[k+1] - 1$}\;
}

  \end{multicols}
  \BlankLine

  \caption{Counting pattern occurrences with $\RCSA$.}
  \label{alg:rcsa_counting}
\end{algorithm}

%%%%%%%%%%
% Theorem: Counting r-CSA
%%%%%%%%%%
\begin{theorem}[Counting with $\RCSA$]\label{thm:rcsa_counting}
  The $\RCSA$ of a text $\vTColl[1..\vsTColl]$ over alphabet $[1..\sigma]$, with $\vnBWTRuns$ $\Psi$-runs, can be represented using $\Oh(\vnBWTRuns \log \vsTColl)$ bits of space and count the occurrences of any pattern $\vPat[1..\vsPat]$ in $\Oh(\vsPat \log \log_{w}(\vsAlph \vsTColl / \vnBWTRuns))$ time, where $w$ is the computer word size.
\end{theorem}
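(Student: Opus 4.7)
The plan is to verify the space and time bounds by directly adding up the costs of the two components the $\RCSA$ uses for counting—the predecessor $\mathcal{P}_{\Psi}$ on the $\vnBWTRuns$ $\Psi$-run heads and the array $I_{\Psi}[1..\vnBWTRuns]$ of their global positions in $\Psi$—and then arguing that Algorithm~\ref{alg:rcsa_counting} correctly realizes one backward-search step using $\Oh(1)$ arithmetic plus two predecessor queries.

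For the space, the predecessor $\mathcal{P}_{\Psi}$ stores $\vnBWTRuns$ elements in a universe of size $\sigma\vsTColl$, so by \cite[Thm.~A.1]{BelazzouguiN15:OLU} it occupies $\Oh(\vnBWTRuns \log(\sigma\vsTColl/\vnBWTRuns)) = \Oh(\vnBWTRuns \log \vsTColl)$ bits (using $\sigma \le \vsTColl$). The array $I_{\Psi}$ takes $\vnBWTRuns \lceil \log \vsTColl \rceil$ bits, and the stored run lengths together with a length-$\sigma$ directory locating the first $\Psi$-run of each symbol class fit within the same asymptotic bound. Summing gives $\Oh(\vnBWTRuns \log \vsTColl)$ bits.

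For the correctness and time of a single backward step, I would argue inductively. Assume $\SA[sp_{i+1}..ep_{i+1}]$ is the range of occurrences of $\vPat[i+1..\vm]$, and let $c = \vPat[i]$. By the definition of $\Psi$, the new range $\SA[sp_i..ep_i]$ consists of the positions $j \in \Psi_c$ with $\Psi(j) \in [sp_{i+1}..ep_{i+1}]$, and strict monotonicity of $\Psi$ on $\Psi_c$ guarantees this set is contiguous. The endpoints can be computed from the two queries $\predecessor_c(sp_{i+1}) = \langle x_L, k_L\rangle$ and $\predecessor_c(ep_{i+1}) = \langle x_R, k_R\rangle$: using $I_{\Psi}$ and the stored run length, each endpoint of $[sp_i..ep_i]$ is obtained in $\Oh(1)$ as either an interior offset of the form $I_{\Psi}[k] + (\cdot - x)$, the start $I_{\Psi}[k+1]$ of the next $\Psi$-run in $\Psi_c$ when the query fell in a gap between runs, or a flag that the new range is empty. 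The step is dominated by the two predecessor queries, each running in $\Oh(\log\log_w(\sigma\vsTColl/\vnBWTRuns))$ time by the same result of \cite{BelazzouguiN15:OLU}. Iterating for $i = \vm, \vm-1, \ldots, 1$ yields the claimed $\Oh(\vsPat \log\log_w(\sigma\vsTColl/\vnBWTRuns))$ total time.

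The bulk of the argument is bookkeeping; the only delicate part is the endpoint case analysis inside the backward step. In particular, one must check that the $\Psi$-run returned by $\predecessor_c$ actually belongs to $\Psi_c$ (rather than the tail of the preceding symbol class, which is what $\mathcal{P}_{\Psi}$ naturally returns because it concatenates all the $\Psi_c$), and that the offset arithmetic using $I_{\Psi}$ and the stored run length does not overshoot the end of the current run. Once these cases are resolved uniformly, as is done in Algorithm~\ref{alg:rcsa_counting}, the two bounds in the statement follow.
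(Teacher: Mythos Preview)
Your proposal is correct and mirrors the paper's own argument: the theorem is stated immediately after the description of $\mathcal{P}_{\Psi}$, $I_{\Psi}$, and Algorithm~\ref{alg:rcsa_counting}, with the proof being exactly that each of the $\vsPat$ backward steps costs a constant number of predecessor queries on $\mathcal{P}_{\Psi}$ plus $\Oh(1)$ arithmetic, and that the two structures fit in $\Oh(\vnBWTRuns\log\vsTColl)$ bits. Your write-up is in fact more explicit than the paper's about the endpoint case analysis (inside a run versus in a gap, and the symbol-class boundary check), and your observation that run lengths need not be stored separately---they are recoverable from consecutive entries of $I_{\Psi}$---is consistent with the paper's design.
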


\subsection{Locating in $\Oh(r)$ space}\label{subsec:rcsa-locating}

Following the $\RIdx$ \cite{GagieNP20:FFS}, we reduce the problem of locating the occurrences of pattern $\vPat$ with the $\RCSA$ to two subproblems: (1) maintaining the text position of $\SA[sp_i]$ along the backward search, (2) finding $\SA[j+1]$ given $\SA[j]$. After the backward search, then, we know $\SA[sp]$ by (1) and then find $\SA[sp+1], \SA[sp+2], \ldots, \SA[ep]$ with (2).

%\subsubsection{Solving Problem 1}\label{subsubsec:rcsa-problem-1}
\subsubsection{Counting with toehold}\label{subsubsec:rcsa-problem-1}

In the same vein as the $\RIdx$ \cite[Lem.~3.2]{GagieNP20:FFS}, we show how to enhance the backward search so that we always know $\SA[sp_i]$ (called the ``toehold''). We give a proof that this can be done that is better suited for practical $\Psi$-based indexes.

%%%%%%%%%%
% Toehold lemma
%%%%%%%%%%
\begin{lemma}\label{lem:r-csa-toehold}
  The $\RLCSA$ backward search process on $\Psi$ can be enhanced to retrieve, along with the range $\langle \vSPosPat, \vEPosPat \rangle$ on $\SA$ for the pattern $\vPat[1..\vsPat]$, the toehold value $\SA[\vSPosPat]$ in $\Oh(1)$ additional time per backward step and with $\Oh(\vnBWTRuns \log \vsTColl)$ additional bits of space.
\end{lemma}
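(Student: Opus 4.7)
The plan is to piggy-back an invariant on the $\RLCSA$ backward search described in Section~\ref{subsec:rcsa-counting}: while maintaining the range $\langle \vSPosPat_i, \vEPosPat_i\rangle$ for $\vPat[i..\vm]$, also maintain the toehold $t_i = \SA[\vSPosPat_i]$. The whole scheme rests on the defining identity $\SA[\Psi(j)] = \SA[j]+1$, which lets us relate $\SA[\vSPosPat_i]$ to $\SA[\vSPosPat_{i+1}]$ whenever $\Psi(\vSPosPat_i) = \vSPosPat_{i+1}$, together with a new array $\Samples[1..\vnBWTRuns]$ storing $\SA[I_\Psi[k]]$ at every $\Psi$-run head, which uses $\vnBWTRuns\lceil\log \vsTColl\rceil = \Oh(\vnBWTRuns \log \vsTColl)$ extra bits.

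First I would establish a structural dichotomy: at every step, $\vSPosPat_i$ is either the head of some $\Psi$-run inside $\Psi_c$ (with $c=\vPat[i]$), or it satisfies $\Psi(\vSPosPat_i) = \vSPosPat_{i+1}$. Indeed, $\vSPosPat_i$ is the smallest index in $\Psi_c$ with $\Psi(\vSPosPat_i) \ge \vSPosPat_{i+1}$; if $\Psi(\vSPosPat_i) > \vSPosPat_{i+1}$ and $\vSPosPat_i$ is not already the first index of $\Psi_c$, then minimality gives $\Psi(\vSPosPat_i-1) < \vSPosPat_{i+1} \le \Psi(\vSPosPat_i)-1$, forcing a gap $\Psi(\vSPosPat_i)-\Psi(\vSPosPat_i-1) \ge 2$, so $\vSPosPat_i$ starts a new $\Psi$-run.

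Second, I would plug this dichotomy into the step the counting procedure already performs. Its predecessor call $\predecessor_c(\vSPosPat_{i+1}) = \langle x,k\rangle$ returns the $\Psi$-run of $\Psi_c$ containing or preceding $\vSPosPat_{i+1}$; let $l$ be its length. If $\vSPosPat_{i+1} \le x+l$, then $\vSPosPat_i = I_\Psi[k] + (\vSPosPat_{i+1}-x)$ and by construction $\Psi(\vSPosPat_i) = \vSPosPat_{i+1}$, so I update $t_i \leftarrow t_{i+1}-1$. Otherwise $\vSPosPat_i = I_\Psi[k+1]$ is a $\Psi$-run head, and I read $t_i \leftarrow \Samples[k+1]$ directly from the sample array. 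Each update is $\Oh(1)$ on top of the predecessor the counting step already paid for. The recursion is initialized with $\vSPosPat_{\vm+1}=1$ and $t_{\vm+1} = \SA[1] = \vsTColl$, since the terminator $\$$ makes $\SA[1]$ point to position $\vsTColl$.

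The delicate point I expect concerns boundary cases: when the predecessor returns nothing within $\Psi_c$ (so $\vSPosPat_{i+1}$ is smaller than every $\Psi$-run head of $c$), or when $I_\Psi[k+1]$ would fall outside the block of $\Psi_c$; both correspond to an empty new range and are already reported as pattern mismatches by the counting algorithm, so the toehold update is simply not required. When $\vSPosPat_{i+1}$ coincides with a run head ($x = \vSPosPat_{i+1}$), the two branches of the dichotomy overlap and agree, since $t_{i+1}-1$ and $\Samples[k]$ both equal $\SA[I_\Psi[k]]$. Putting this together gives the claimed $\Oh(1)$ per-step time overhead and $\Oh(\vnBWTRuns \log \vsTColl)$ additional bits.
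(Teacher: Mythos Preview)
Your proposal is correct and follows essentially the same approach as the paper: store $\SA$ at the $\Psi$-run heads (the paper calls this array $F_{\SA}$ rather than $\Samples$), and at each backward step either decrement the previous toehold when $\vSPosPat_{i+1}$ falls inside the predecessor run, or read $F_{\SA}[k+1]$ when it does not. Your explicit justification of the dichotomy and the boundary-case discussion add detail the paper leaves implicit, but the core argument is identical.
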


\begin{proof}
  We store in a new array $F_{\SA}[1..\vnBWTRuns]$ the text positions of the $\Psi$-run heads, that is, $F_{\SA}[j] = \SA[i]$ iff the $j$th $\Psi$-run begins at position $i$.
  The backward search initiates with the entire interval $\langle \vSPosPat_{\vsPat+1}, \vEPosPat_{\vsPat+1} \rangle = \langle 1, \vsTColl \rangle$ of $\SA$.
  The initial toehold is then $\SA[\vSPosPat_{\vsPat+1}] = \SA[1] = F_{\SA}[1] = \vsTColl$.

  Let $\langle \vSPosPat_{i+1}, \vEPosPat_{i+1} \rangle$ be the range on $\SA$ for the occurrences of $\vPat[i+1..\vsPat]$, with $\SA[\vSPosPat_{i+1}]$ being a known value.
  As described in Algorithm~\ref{alg:rcsa_counting}, the function $\fnNextSP$ relies on the operation $\predecessor(\mathcal{P}_{\Psi}, \vSPosPat_{i+1}, \vPat[i]) = \langle x, k \rangle$ to compute the starting position $\vSPosPat_{i}$ of the interval for $\vPat[i..\vsPat]$.

  Extending $\fnNextSP$ to additionally obtain the value $\SA[\vSPosPat_{i}]$ when the pattern occurs in $\vTColl$ results in two possible cases.
  If the $k$th $\Psi$-run contains the value $\vSPosPat_{i+1}$, then $\vSPosPat_{i+1} = \Psi(\vSPosPat_{i})$ because the first value of the range $\SA[sp_{i+1}..ep_{i+1}]$ is preceded by $P[i]$. Thus the next toehold is straightforwardly calculated as $\SA[\vSPosPat_{i}] = \SA[\vSPosPat_{i+1}]-1$.
  If, instead, $\vSPosPat_{i+1}$ does not belong to the $k$th $\Psi$-run, then $\vSPosPat_{i}$ is the head of the $(k+1)$th $\Psi$-run.
  Using the $F_{\SA}$ array, the next toehold is computed as $\SA[\vSPosPat_{i}] = F_{\SA}[k+1]$, also in constant time.
  %%% ThesisX: Show figures for each case with the jump from s_{p+1} to s_{p}
\end{proof}

%%% ThesisX: Show the function FindNextStartPos updated with toehold

\begin{algorithm}[t]
  \IO{Query pattern $\vPat[1..\vsPat]$.}
  {Range $\langle \vsp,\vep \rangle$ on $\SA$ for $\vPat$;
  Value $\SA[\vsp]$.}

  \begin{multicols}{2}
    \SetKwFunction{fnCount}{count}
\SetKwFunction{fnNextSP}{findNextStartPos}
\SetKwFunction{fnNextEP}{findNextEndPos}

\Function{\fnCount{$\vPat[1..\vsPat]$}}{
  $\langle \vsp, \vep, v \rangle \leftarrow \langle 1, \vsTColl, n \rangle$\;
  \For{$i \leftarrow \vsPat$ \KwDownTo $1$}{
    $c \leftarrow \vPat[i]$\;

    $\langle v,\vsp\rangle \leftarrow \findStartToehold(v, \vsp, c)$\;
    $\vep \leftarrow \fnNextEP(\vep, c)$\;
    \If{$\vsp > \vep$} {\Return{``$\vPat$ is not in $\vTColl$''}}
  }

  \Return{$\langle \vsp, \vep, v \rangle$}\;
}
\BlankLine
\BlankLine

\Function{\findStartToehold{($v$, $\vsp$, $c$)}}{\label{fn:NextSP2}
  $\vsp' \leftarrow \vsp + (c-1)\cdot\vsTColl$\;
  $\langle x, k \rangle \leftarrow \predecessor(\mathcal{P}_{\Psi}, \vsp')$\;
  \If{$\vsp' < x + (I_{\Psi}[k+1]-I_{\Psi}[k])$}{\Return{$\langle v-1,I_{\Psi}[k] + (\vsp' - x)\rangle$}}
  \Return{$\langle F_{\SA}[k+1],I_{\Psi}[k+1]\rangle$}\;
}

  \end{multicols}
  \BlankLine

  \caption{Counting pattern occurrences and finding value $\SA[\vsp]$ with $\RCSA$.}
  \label{alg:rcsa_count_toehold}
\end{algorithm}
Algorithm~\ref{alg:rcsa_count_toehold} gives the corresponding pseudocode.

%\subsubsection{Solving Problem 2}\label{subsubsec:rcsa-problem-2}
\subsubsection{Locating from toehold}\label{subsubsec:rcsa-problem-2}

While it is possible to employ a sampling scheme nearly identical to that utilized by the $\RIdx$, we opt for an alternative one that virtually moves forwards in the text, rather than backwards. This choice is influenced by the nature of the $\Psi$ function, which enables more efficient forward than backward traversal. % and holds the potential to facilitate future improvements.

\begin{lemma} \label{lem:sa_fw}
Given a text position $\SA[i]$,
let $\Psi[l]$ be the tail of the $\Psi$-run with the smallest text position $\SA[l] \ge \SA[i]$. Then,
\begin{equation}
  \label{eq:sa_fw}
  \SA[i+1] = \SA[l+1] + (\SA[l] - \SA[i]).
\end{equation}
\end{lemma}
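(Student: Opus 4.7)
The plan is to mirror the $\phi$-function argument behind Equation (1) of the $\RIdx$ in the forward direction, replacing $\BWT$-runs by $\Psi$-runs. The local observation driving everything is: if positions $i$ and $i+1$ lie in the same $\Psi$-run then, by definition of $\Psi$-runs, $\Psi(i+1) = \Psi(i)+1$; applying the identity $\SA[\Psi(k)] = \SA[k]+1 \pmod{\vsTColl}$ to both sides yields $\SA[i+1]+1 = \SA[\Psi(i)+1]$. Thus, whenever $i$ is interior to its $\Psi$-run, computing $\SA[i+1]$ reduces to computing $\SA[\Psi(i)+1]$, an adjacent pair whose text positions are each one larger than those of the original pair.

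I would then iterate this observation. Define $i_0 = i$ and $i_{t+1} = \Psi(i_t)$, so that $\SA[i_t] = \SA[i]+t$. As long as $i_0, \ldots, i_{t-1}$ are all non-tails of their $\Psi$-runs, the local identity telescopes to $\SA[i_t+1] = \SA[i+1]+t$. Let $k$ be the smallest index for which $i_k$ is a $\Psi$-run tail; setting $l = i_k$ yields $\SA[l] = \SA[i]+k$ and $\SA[l+1] = \SA[i+1]+k$, which rearranges to express $\SA[i+1]$ in terms of $\SA[l+1]$, $\SA[l]$, and $\SA[i]$ with offset $k = \SA[l]-\SA[i]$, matching the formula up to a sign I would verify explicitly against the statement.

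The final step is to identify $l = i_k$ with the tail singled out in the statement. Because the iteration advances text positions by exactly one per step, the set $\{\SA[i_0], \SA[i_1], \ldots, \SA[i_k]\}$ consists of the consecutive integers $\SA[i], \SA[i]+1, \ldots, \SA[l]$. Any other $\Psi$-run tail $l'$ with $\SA[i] \le \SA[l'] < \SA[l]$ would therefore have to coincide with some $i_t$ for $t < k$, contradicting the minimality of $k$; hence $l$ is precisely the $\Psi$-run tail whose text position is the smallest one that is at least $\SA[i]$. The main subtlety I anticipate is handling wrap-around near the sentinel $\vTColl[\vsTColl] = \$$; because the sentinel is unique in the alphabet, the position $\SA^{-1}[\vsTColl]$ lies in a singleton $\Psi$-run and is itself a tail, so the iteration cannot cross $\vsTColl$ before landing on a tail, and the edge case $i = l$ (i.e.\ $i$ already a tail, $k=0$) collapses the formula to the trivial identity $\SA[i+1] = \SA[l+1]$.
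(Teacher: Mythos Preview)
Your proposal is correct and follows essentially the same argument as the paper: both iterate $\Psi$ forward from $i$, maintaining the invariant $\Psi^p(i)+1=\Psi^p(i+1)$ until a run tail $l=\Psi^\Delta(i)$ is reached, and then read off the identity from $\SA[l]=\SA[i]+\Delta$ and $\SA[l+1]=\SA[i+1]+\Delta$. You are slightly more careful than the paper in arguing explicitly that this $l$ has minimal $\SA[l]\ge\SA[i]$ and in treating the sentinel boundary, and your instinct to double-check the sign is warranted---the paper's own chain of equalities in fact yields $\SA[i+1]=\SA[l+1]-\Delta$, consistent with the minus sign in its later practical formula for $\fnIPhi$.
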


\begin{proof}
  There are two possible cases. The first case, where
%
%\begin{enumerate}
  %\item
  {$\Psi[i]$ is the last symbol of a $\Psi$-run}, is trivial because $i=l$.
%  This is the trivial case, since $i=l$.
  %where,
  %\[
  %  i = l \quad \Rightarrow \quad \SA[i+1] = \SA[l+1]
  %\]
  %thus satisfying the lemma.
For the second case, where
  %\item
  {$\SA[i]$ is not the last symbol of a $\Psi$-run},
  let $\Delta = \SA[l] - \SA[i]$, that is, $l = \Psi^\Delta(i)$. By the definition of $l$, it holds for all $0 \leq p < \Delta$ that
  $\Psi^{p}(i)$ is {\em not} the last element of a $\Psi$-run. Consequently, for all $0 \le p \le \Delta$, it holds that
  \[
    \Psi^{p}(i) + 1 = \Psi^{p}(i+1).
  \]

  %The previous result\footnote{$\Psi^{p}(i) + 1 = \Psi^{p}(i+1)$ can be deduced considering that $\LF^{p}(\Psi^{p}(j)+1) = j+1$.}
  This
  shows that each pair $\Psi^{p}(i)$ and $\Psi^{p}(i+1)$ are adjacent positions within a $\Psi$-run until the position $l=\Psi^{\Delta}(i)$ is reached.
  That is, text positions $\SA[\Psi^{p}(i)]$ and $\SA[\Psi^{p}(i+1)]$ traverse forward together in the suffix array for $\Delta$ steps, so
  \[
    \SA[i+1] = \SA[\Psi^{\Delta}(i+1)] - \Delta = \SA[\Psi^\Delta(i)+1] - \Delta =  \SA[l+1] + (\SA[l] - \SA[i]),
  \]
%  \end{enumerate}
where the first equality holds just by definition of $\Psi$.
  Figure~\ref{fig:rcsa-phi-1} illustrates the proof.
\end{proof}

  %! Author = Dustin Cobas <dustin.cobas@gmail.com>
%! Date = 8/12/2023

\begin{figure}[t]
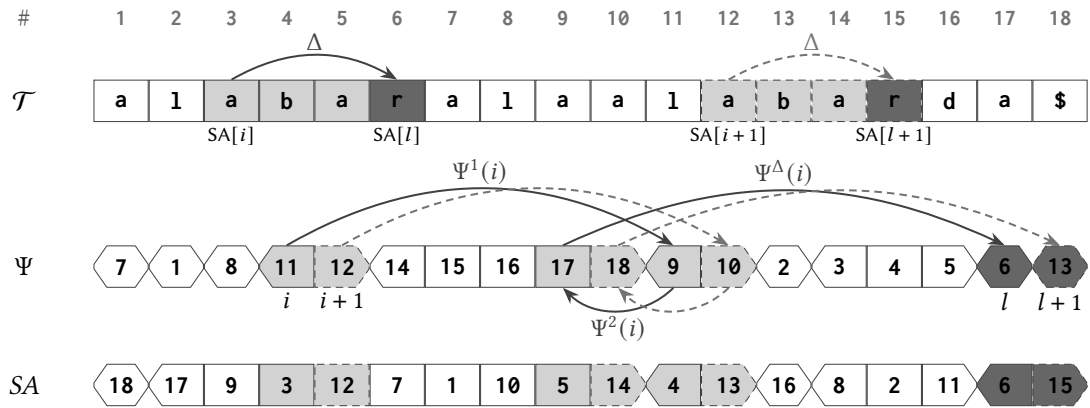

  \begin{center}
    \includestandalone[width=1\textwidth]{./rcsa-phi-1-psi-tikz}
  \end{center}
  \caption{
    Example of the sampling mechanism of the $\RCSA$.
    The arrows in $\vTColl$ show the $\Delta$ distance from the given $\SA[i]$ to its $\Psi$-run last element successor $\Psi[l]$.
    In $\Psi$, each run is represented by a block;
    solid arrows are $\Psi$ steps for $i$;
    and dashed arrows are $\Psi$ steps for $i+1$.
%    and the dotted arrow shows the relation between the run last symbol $\BWT[l]$ and its linked run first symbol $\BWT[f]$.
  }
  \label{fig:rcsa-phi-1}
\end{figure}

%%%%%%%%%%
% Definition of $\fnIPhi$
%%%%%%%%%%

K{\"a}rkk{\"a}inen et al.~\cite{KarkkainenMP09} defined the function $\fnPhi$ that returns $\SA[i-1]$ for the given text position $\SA[i]$.
Gagie et al.~\cite{GagieNP20:FFS} later added the function $\fnIPhi$ returning $\SA[i+1]$.
This last function is formally defined below.

\begin{definition}[Gagie et al.~\cite{GagieNP20:FFS}]
  Function $\fnIPhi$ is a permutation of $[1..\vsTColl]$ such that,
  for any text position $j$ and its related position $i$ in the suffix array (i.e., $j = \SA[i]$), is defined as
  \begin{equation}
    \label{eq:phi_inv_def}
    \fnIPhi(j) =
    \begin{cases}
      \SA[\ISA[j] + 1] = \SA[i + 1], & \text{if}\ i < \vsTColl\\
      \SA[1] = \vsTColl, & \text{if}\ i = \vsTColl
    \end{cases}
  \end{equation}
\end{definition}

Gagie et al.~\cite{GagieNP20:FFS} show how to store the permutations $\fnPhi$ and $\fnIPhi$ in $\Oh(\vnBWTRuns\log n)$ bits of space using a predecessor data structure.
We achieve a similar result based on Lemma~\ref{lem:sa_fw}, yet using a successor function over the text positions of $\Psi$-run tails. %$BWT$-run last symbols.

\begin{lemma}
  The function $\fnIPhi$ can be evaluated in $\Oh(\log \log_{w}(\vsTColl / \vnBWTRuns))$ time using an $\Oh(\vnBWTRuns \log \vsTColl)$-bits successor data structure.
\end{lemma}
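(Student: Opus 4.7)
The plan is to mirror how Gagie et al.\ realize $\fnIPhi$ in the $\RIdx$, but using the $\Psi$-based characterization given by Lemma~\ref{lem:sa_fw} instead of an $\LF$-based one. The key observation is that the right-hand side of \eqref{eq:sa_fw} depends on $\SA[i]$ only through $\SA[l]$ and $\SA[l+1]$, where $\SA[l]$ is the smallest text position among the $\vnBWTRuns$ $\Psi$-run tails that is at least $\SA[i]$. Therefore a single successor query over a precomputed set is enough to evaluate $\fnIPhi$.

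Concretely, I would build a successor data structure whose keys are the $\vnBWTRuns$ text positions $\{\SA[l] : l \text{ is the global position of a } \Psi\text{-run tail}\}$, regarded as a subset of the universe $[1..\vsTColl]$. Each key $\SA[l]$ carries as satellite information the value $\SA[l+1]$ (with the boundary convention $\SA[\vsTColl+1] := \vsTColl$, matching the second clause in the definition of $\fnIPhi$). I would implement the keys with the Belazzougui--Navarro predecessor/successor structure \cite[Thm.~A.1]{BelazzouguiN15:OLU}, which uses $\Oh(\vnBWTRuns \log(\vsTColl/\vnBWTRuns))$ bits and answers successor queries in $\Oh(\log\log_w(\vsTColl/\vnBWTRuns))$ time while also returning the rank of the answer within the stored set. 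Using that rank to index into a plain array of $\vnBWTRuns$ satellite words adds $\Oh(\vnBWTRuns \log \vsTColl)$ bits, meeting the claimed space bound.

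The query for $\fnIPhi(j)$ with $j = \SA[i]$ then issues a successor lookup on $j$ to recover the pair $(\SA[l], \SA[l+1])$ and returns the expression prescribed by Lemma~\ref{lem:sa_fw} in $\Oh(1)$ arithmetic operations, giving the stated overall time $\Oh(\log\log_w(\vsTColl/\vnBWTRuns))$.

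The main point that I would check carefully is the boundary behaviour: (i) when $\SA[i]$ itself is a $\Psi$-run tail text position, the successor returns $\SA[i]$ and the formula degenerates correctly to $\SA[l+1]$; (ii) when $l = \vsTColl$, $\SA[l+1]$ must be interpreted via the cyclic convention so that the output is still consistent with \eqref{eq:phi_inv_def}; and (iii) the case $i = \vsTColl$ (where $\fnIPhi$ is required to output $\vsTColl$) is captured either by pre-assigning the satellite of the largest stored key appropriately, or by a single extra comparison against $\SA[\vsTColl]$ before the successor call. Once these small cases are dispatched, the space and time bounds follow immediately from the guarantees of the Belazzougui--Navarro structure.
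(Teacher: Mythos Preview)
Your proposal is correct and matches the paper's proof essentially step for step: both store the $\vnBWTRuns$ text positions of $\Psi$-run tails in a successor structure (Belazzougui--Navarro, Thm.~A.1), attach to each key $\SA[l]$ the satellite value $\SA[l+1]$, and evaluate $\fnIPhi(j)$ by a single successor query followed by the arithmetic of Lemma~\ref{lem:sa_fw}. Your discussion of the boundary cases is more explicit than the paper's, but otherwise the two arguments coincide.
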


\begin{proof}
  Let $L$ be the set of $\vnBWTRuns$ text positions such as $x = \SA[l] \in L$ iff $\Psi[l]$ is the last element in its $\Psi$-run,
  and $\mathcal{S}_{L}$ be a successor function over the values of $L$.
  Also, each value $\SA[l] \in \mathcal{S}_{L}$ is paired with the text position $y = \SA[l+1]$ of its next $\Psi$-run head.
  Given a value $j = \SA[i]$,
  if $\left< x, y \right> = \mathcal{S}_{L}(j)$,
  then $\SA[i+1] = y + (x - j)$ by Lemma~\ref{lem:sa_fw}.

  A recent predecessor data structure \cite[Thm. A.1]{BelazzouguiN15:OLU} represents  $\mathcal{S}_{L}$ within $\Oh(\vnBWTRuns \log \vsTColl)$ bits and answers successor queries in time $\Oh(\log \log_{w}(\vsTColl / \vnBWTRuns))$.
\end{proof}

\begin{algorithm}[ht]
  \IO{Query pattern $\vPat[1..\vsPat]$.}
  {Occurrences of $\vPat$: $V[1..\vnOcc] = \SA[\vsp..\vep]$.}

  \SetKwFunction{fnLocate}{locate}
\SetKwFunction{fnCount}{count}

\Function{\fnLocate{$\vPat[1..\vsPat]$}}{
  $\langle \vsp, \vep, v \rangle \leftarrow \fnCount(\vPat)$\;
  $V[1] \leftarrow v$\;

  \For{$i \leftarrow 2$ \KwTo $\vep-\vsp+1$}{
    $V[i] \leftarrow \fnIPhi(V[i-1])$\;
  }

  \Return{$V$}\;
}

  \BlankLine

  \caption{Locating pattern occurrences with $\RCSA$.}
  \label{alg:rcsa_locate}
\end{algorithm}

Algorithm~\ref{alg:rcsa_locate} shows how $\fnIPhi$ is used to compute all the occurrences of $P$ given the first one.
We have now arrived at the primary outcome of this section, which is stated in the following form for compatibility with the $\RIdx$, using that $\Oh(\log(\sigma+n/r)) = \Oh(\log(\sigma n/r))$.

%%%%%%%%%%
% Theorem: Locating r-CSA
%%%%%%%%%%
\begin{theorem}[Locating with $\RCSA$] \label{thm:rcsa_locating}
  The $\RCSA$ of a text $\vTColl[1..\vsTColl]$ over alphabet $[1..\sigma]$, with $\vnBWTRuns$ $\Psi$-runs, can be represented within $\Oh(\vnBWTRuns \log \vsTColl)$ bits and locate the $\vnOcc$ occurrences of a pattern $\vPat[1..\vsPat]$ in $\Oh(\vsPat \log \log_{w}(\vsAlph +\vsTColl / \vnBWTRuns) + \vnOcc \log \log_{w}(\vsTColl / \vnBWTRuns))$ time, where $w$ is the computer word size.
\end{theorem}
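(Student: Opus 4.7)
The statement is essentially a composition of the two preceding lemmas, so the plan is to assemble the enhanced counting procedure (Lemma~\ref{lem:r-csa-toehold}) with the forward-stepping function $\fnIPhi$ (the lemma just before the theorem) and verify that the time and space bounds add up as claimed. First I would invoke the $\RCSA$ counting structures of Theorem~\ref{thm:rcsa_counting} together with the array $F_{\SA}[1..\vnBWTRuns]$ from Lemma~\ref{lem:r-csa-toehold}, which cost $\Oh(\vnBWTRuns\log \vsTColl)$ bits combined, to perform backward search on $\vPat$ while maintaining the toehold $\SA[\vSPosPat_i]$. By Lemma~\ref{lem:r-csa-toehold}, this takes $\Oh(\log\log_w(\sigma \vsTColl/\vnBWTRuns))$ time per backward step plus $\Oh(1)$ for the toehold update, giving the first summand $\Oh(\vsPat\log\log_w(\sigma + \vsTColl/\vnBWTRuns))$ after writing $\sigma \vsTColl/\vnBWTRuns = (\sigma + \vsTColl/\vnBWTRuns) \cdot \Theta(1)$ inside the iterated logarithm. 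At the end we have $\langle \vSPosPat, \vEPosPat\rangle$ together with the text position $\SA[\vSPosPat]$; if $\vSPosPat > \vEPosPat$ we report no occurrences.

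Next I would add the successor structure $\mathcal{S}_L$ for $\fnIPhi$ provided by the preceding lemma, costing an additional $\Oh(\vnBWTRuns\log \vsTColl)$ bits, and apply Algorithm~\ref{alg:rcsa_locate}: starting from $\SA[\vSPosPat]$, iterate $\SA[\vSPosPat+k] = \fnIPhi(\SA[\vSPosPat+k-1])$ for $k=1,\ldots,\vEPosPat-\vSPosPat$, reporting each value. Correctness of each step is exactly the definition of $\fnIPhi$, and each call costs $\Oh(\log\log_w(\vsTColl/\vnBWTRuns))$ time, yielding the second summand $\Oh(\vnOcc\log\log_w(\vsTColl/\vnBWTRuns))$. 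Summing the two phases gives the stated time bound.

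For the space, the three components---the counting structure ($\mathcal{P}_\Psi$ and $I_\Psi$), the toehold array $F_{\SA}$, and the successor structure $\mathcal{S}_L$---each occupy $\Oh(\vnBWTRuns\log\vsTColl)$ bits, so their union still fits in $\Oh(\vnBWTRuns\log\vsTColl)$ bits. I expect no significant obstacle: the two lemmas already do all the conceptual work, and the only nontrivial point is making sure the universe inside the $\log\log_w$ is written correctly in each phase (the counting predecessor acts over the universe $\sigma\vsTColl$ of concatenated $\Psi_c$ ranges, while the forward successor acts over text positions with universe $\vsTColl$), so the $\sigma$ only appears in the $\vsPat$ term. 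A minor bookkeeping point is that when $\vSPosPat=\vEPosPat$ no $\fnIPhi$ call is needed and the result reduces to reporting the toehold, which is consistent with the bound.
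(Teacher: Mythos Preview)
Your proposal is correct and follows exactly the paper's route: the theorem is stated immediately after the two lemmas (counting with toehold, and $\fnIPhi$ via a successor structure) and is simply their composition, with the same observation that $\Oh(\log\log_w(\sigma\vsTColl/\vnBWTRuns))=\Oh(\log\log_w(\sigma+\vsTColl/\vnBWTRuns))$. One tiny wording fix: it is not that $\sigma\vsTColl/\vnBWTRuns=(\sigma+\vsTColl/\vnBWTRuns)\cdot\Theta(1)$, but rather that $\log(\sigma\vsTColl/\vnBWTRuns)=\Theta(\log(\sigma+\vsTColl/\vnBWTRuns))$, which is what makes the iterated logarithms agree.
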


\subsection{Practical design}\label{subsec:rcsa-practical}

While the theoretical result yields $\Oh(r\log n)$ space without full details on the constants, a finer design is needed in order to obtain a space-competitive data structure, even if it does not yield the same time complexities.
%Despite the theoretical results shown in the previous section, some of them are not feasible in a practical design of the index. Therefore, we proposed the following data structures to build our practical $\RCSA$.
%
%In Section~\ref{subsec:rcsa-counting}, we describe a run-length encoding applied to the function $\Psi$. This compressed representation, utilizing the first element $x$ and length $l$ of each $\Psi$-run, is enhanced by a predecessor data structure $\mathcal{P}_{\Psi}$ to facilitate efficient backward searching operations. However, due to challenges associated with efficient implementation of $\mathcal{P}_{\Psi}$, we propose replacing it with a more practical alternative.

Following M{\"a}kinen et al.~\cite{MakinenNSV10:SRH} (and Sadakane \cite{Sad03}), we decompose $\Psi$ into $\vsAlph$ partial functions $\Psi_{c}$, one per symbol $c$.
Because each $\Psi_{c}$ is strictly increasing, we differentially encode the first and last values of the $\Psi$-runs, using $\delta$-codes to represent the differences.
To accelerate access to the function $\Psi$, we sample every $B$-th absolute value, creating a reduced sequence $\widehat{\Psi_{c}}$.
Parameter $B$ yields a tradeoff between space and time to access $\Psi$: one spends $\lfloor r/B \rfloor \log n$ bits on the samples and accesses any cell of $\Psi$ in time $\Oh(B)$, by accessing the preceding cell of $\widehat{\Psi_{c}}$ and then decoding up to $B$ $\delta$-codes.
Further, function $\predecessor(\mathcal{P}_\Psi,i)$ can be computed in time $\Oh(\log(r/B)+B) = \Oh(\log r + B)$, by binary searching the samples $\widehat{\Psi_{c}}$ and then decoding up to $B$ values.
%Storing the absolute rank of the sampled values $\widehat{\Psi_{c}}$, the function $\rank_{c}(\Psi, i)$ can be computed in two steps: binary searching the largest sampled value $\widehat{\Psi_{c}}(x)$ less than or equal to $i$, in time $\Oh(\log(\vnBWTRuns / B))$; and sequentially finding the run containing position $i$ in the $x$th $B$-interval of delta encoded $\Psi_{c}$, in time $\Oh(B)$.

This compressed representation of $\Psi$ requires
\begin{equation*}
  \vnBWTRuns \cdot (\log(\vsAlph \vsTColl / \vnBWTRuns) + \log(\vsTColl / \vnBWTRuns) + \Oh(\log\log(\vsAlph \vsTColl / \vnBWTRuns))) + \Oh((\vnBWTRuns  / B) \log \vsTColl) + \Oh(\vsAlph \log \vsTColl)
%  \vnBWTRuns \cdot (2 \log( \vsTColl / \vnBWTRuns) + \log \vsAlph + \Oh(\log\log(\vsAlph \vsTColl / \vnBWTRuns))) + \Oh(\vnBWTRuns  / B \log \vsTColl) + \Oh(\vsAlph \log \vsTColl) \\
\end{equation*}
bits of space.
The initial term is the worst-case size of the run-length encoding of $\Psi$, using $\delta$-codes to store the length of each $\Psi$-run and the gap between them (i.e., the distance between the first value of a $\Psi$-run and the last value of the preceding one).
The second term covers the first value $\Psi(i)$ of every $B$th run, and their absolute ranks.
The last term represents the array $C[1..\vsAlph]$ (used instead of array $I_{\Psi}[1..\vnBWTRuns]$ in our practical proposal), and additional samples of $\Psi$ for the first element in each partial $\Psi_c$.

The second aspect to consider is the practical implementation of function $\fnIPhi$. This relies on several components; Figure~\ref{fig:rcsa-scheme} illustrates their relation.

%! Author = Dustin Cobas <dustin.cobas@gmail.com>
%! Date = 12/5/2023

\begin{figure}[t]
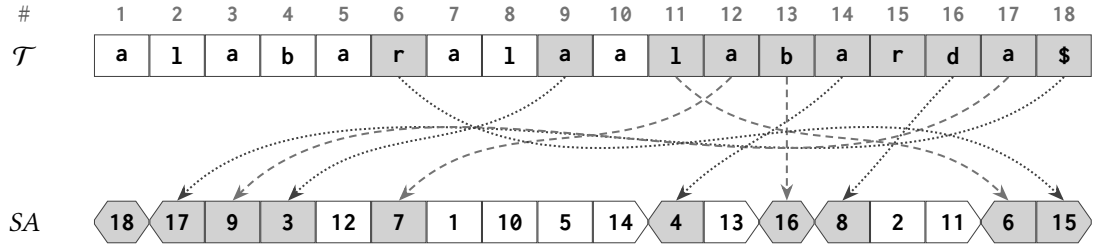

  \begin{center}
    \includestandalone[width=1\textwidth]{./rcsa-scheme-tikz}
  \end{center}
  \caption{
    Practical data structures used for locating mechanism of the $\RCSA$.
    The gray cells in $\vTColl$ are the $\vnBWTRuns$ elements marked in the bitvector $\MarksL$.
    The gray cells in $\SA$ are the $\vnBWTRuns$ samples stored in the array $\SamplesF$.
    The arrows show the mapping $\MapMarksL$ (different arrow styles are used to improve visualization).
  }
  \label{fig:rcsa-scheme}
\end{figure}

\begin{description}
  \item[{$\MarksL[1..\vsTColl]$}:] a bitvector marking with $\MarksL[j] = 1$ the text positions $j = \SA[i]$ where $\Psi(i)$ is the last symbol of a $\Psi$-run.
  Since $\MarksL$ has only $\vr$ $1$s, it is represented in compressed form using $\vr \log(\vn / \vr) + \Oh(\vr)$ bits, while supporting $\rank(\MarksL, i)$ in time $\Oh(\log(\vn / \vr))$ and, in $\Oh(1)$ time, the operation $\select(\MarksL, p)$ (the position of the $p$th $1$ in $\MarksL$)~\cite{OS07}. %\todo{Remove details of space and time if this is given the description of $\RIdx$}
  This allows one to find the leftmost $1$ from position $i$ as
  \begin{equation*}
    \successor(\MarksL, i) = \select_1(\MarksL, \rank_1(\MarksL, i - 1) + 1).
  \end{equation*}

  \item[{$\MapMarksL[1..\vnBWTRuns]$}:] an array of integers (using $\vr\lceil \log \vnBWTRuns \rceil$ bits) mapping each text position marked in $\MarksL$ to the related sample in $\SamplesF$.
  Note that, if $\MarksL[j] = 1$ with $j = \SA[l]$, then there exists $p$ such that $\SamplesF[p] = \SA[l + 1]$, because $\Psi(l)$ is the last symbol in a $\Psi$-run.\footnote{In the particular case where $l = \vsTColl$ (that is, $j = \SA[l]$ is the last symbol of the final $\Psi$-run), the associated sample in $\SamplesF$ is $p = \mapping(\MarksL, j) = \MapMarksL[\vnBWTRuns] = 1$, which corresponds to the $\Psi$-run of the special symbol \$.}
  We find it with
  \begin{equation*}
    p = \mapping(\MarksL, j) = \MapMarksL[\rank_1(\MarksL, j - 1) + 1].
  \end{equation*}
\end{description}

If we apply this formula on a text position $j$ that does not correspond to the end of a run, it will return the run number $p$ of the next text position that corresponds to the end of a run. Using also the $\SamplesF$ array in addition to the components mentioned above, we can then compute the function $\fnIPhi$ as follows:
%\marginpar{\tiny
%\textcolor{red}{Ver si está bien.}
%\textcolor{blue}{Está Ok. No estoy seguro si es necesario aclarar que siempre encontraremos un sucesor para $j$ ya que el último símbolo en $\vTColl$ es \$ y forma una run de tamaño 1. Que cree?} No, creo que lo dejaría así.
%}
\begin{equation} \label{eq:rcsa_iphi}
  \fnIPhi(j) = \SamplesF[\mapping(\MarksL, j)] - (\successor(\MarksL, j) - j).
\end{equation}

%\begin{lemma}[Gagie et al.~\cite{GagieNP20:FFS}]
%%    The function $\fnIPhi$ can be represented using $\vnBWTRuns \log (\vsTColl / \vnBWTRuns) + \Oh(\vnBWTRuns)$ bits and evaluated in $\Oh(\log (\vsTColl / \vnBWTRuns))$ time.
%  \begin{equation}
%    \label{eq:phi_inv}
%%        \fnIPhi(i) = \SamplesF[\MapMarksL[\rank_1(\MarksL, i - 1) + 1]] + 1 + (\successor(\MarksL, i) - i)
%    \fnIPhi(i) = \SamplesF[\mapping(\MarksL, i)] - (\successor(\MarksL, i) - i)
%  \end{equation}
%\end{lemma}

A straightforward examination of the preceding data structures reveals that they collectively yield the following result.

%%%%%%%%%%
% Theorem: Practical r-CSA
%%%%%%%%%%
\begin{theorem}[Practical $\RCSA$] \label{thm:rcsa_practical}
  The practical $\RCSA$ of a text $\vTColl[1..\vsTColl]$ over alphabet $[1..\sigma]$, with $\vnBWTRuns$ $\Psi$-runs, is represented using
  \begin{equation*}
%  \vnBWTRuns \cdot (2 \log( \vsTColl / \vnBWTRuns) + \log \vsAlph + \Oh(\log\log(\vsAlph \vsTColl / \vnBWTRuns))) + \Oh(\vnBWTRuns  / B \log \vsTColl) + \Oh(\vsAlph \log \vsTColl) + \vnBWTRuns \log \vsTColl + \vnBWTRuns \log(\vsTColl / \vnBWTRuns) + \Oh(\vnBWTRuns) + \vnBWTRuns \log \vnBWTRuns \\
    \vnBWTRuns \cdot \big(2\log \vsTColl + 2 \log( \vsTColl / \vnBWTRuns) + \log \vsAlph + \Oh(\log\log(\vsAlph \vsTColl / \vnBWTRuns))\big) + \Oh((\vnBWTRuns  / B) \log \vsTColl) + \Oh(\vsAlph \log \vsTColl) \\
  \end{equation*}
  bits and can locate the $\vnOcc$ occurrences of a pattern $\vPat[1..\vsPat]$ in $\Oh(\vsPat (\log\vnBWTRuns + B) + \vnOcc \log(\vsTColl / \vnBWTRuns))$ time, where $B$ is the block size for the representation of $\Psi$.
\end{theorem}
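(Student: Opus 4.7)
\medskip
\noindent
\textbf{Proof plan.} The plan is to assemble the statement directly by summing the space usage of each component of the practical $\RCSA$ and then bounding the counting and locating times separately.

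First, I would list the four space contributions described in Section~\ref{subsec:rcsa-practical}: (i)~the compressed, block-sampled representation of $\Psi$, which occupies
\[
\vnBWTRuns \cdot \bigl(\log(\vsAlph\vsTColl/\vnBWTRuns) + \log(\vsTColl/\vnBWTRuns) + \Oh(\log\log(\vsAlph\vsTColl/\vnBWTRuns))\bigr) + \Oh((\vnBWTRuns/B)\log\vsTColl) + \Oh(\vsAlph\log\vsTColl)
\]
bits, as derived in the subsection; (ii)~the sample array $\SamplesF[1..\vnBWTRuns]$, occupying $\vnBWTRuns\lceil\log\vsTColl\rceil$ bits; (iii)~the compressed bitvector $\MarksL[1..\vsTColl]$, which has $\vnBWTRuns$ ones and therefore takes $\vnBWTRuns\log(\vsTColl/\vnBWTRuns)+\Oh(\vnBWTRuns)$ bits while supporting $\select$ in $\Oh(1)$ and $\rank$ in $\Oh(\log(\vsTColl/\vnBWTRuns))$ time; and (iv)~the permutation $\MapMarksL[1..\vnBWTRuns]$, using $\vnBWTRuns\lceil\log\vnBWTRuns\rceil$ bits.

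The key accounting step is to combine these bounds. Summing the four terms yields
\[
\vnBWTRuns \bigl(\log\vsAlph + 2\log(\vsTColl/\vnBWTRuns)\bigr) + \vnBWTRuns\log\vsTColl + \vnBWTRuns\log(\vsTColl/\vnBWTRuns) + \vnBWTRuns\log\vnBWTRuns + \Oh(\vnBWTRuns\log\log(\vsAlph\vsTColl/\vnBWTRuns))
\]
plus the $\Oh((\vnBWTRuns/B)\log\vsTColl) + \Oh(\vsAlph\log\vsTColl)$ tail. The only non-routine rearrangement is to absorb the $\MapMarksL$ contribution via the identity $\log\vnBWTRuns = \log\vsTColl - \log(\vsTColl/\vnBWTRuns)$, which cancels one of the three $\log(\vsTColl/\vnBWTRuns)$ summands and promotes a $\log\vsTColl$ factor. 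After this cancellation the dominant part reduces exactly to $\vnBWTRuns(2\log\vsTColl + 2\log(\vsTColl/\vnBWTRuns) + \log\vsAlph + \Oh(\log\log(\vsAlph\vsTColl/\vnBWTRuns)))$, matching the statement.

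For the time bounds, I would argue as follows. Each backward-search step invokes a single $\predecessor$ query on $\mathcal{P}_{\Psi}$, realized by binary searching the $\Oh(\vnBWTRuns/B)$ samples of $\widehat{\Psi_c}$ and then sequentially decoding at most $B$ $\delta$-encoded differences, giving $\Oh(\log\vnBWTRuns + B)$ per step; the toehold maintenance from Lemma~\ref{lem:r-csa-toehold} adds only $\Oh(1)$ per step. Hence counting (together with the toehold) costs $\Oh(\vsPat(\log\vnBWTRuns + B))$. After counting, each of the remaining $\vnOcc-1$ occurrences is obtained through one evaluation of $\fnIPhi$ via Equation~\eqref{eq:rcsa_iphi}, whose cost is dominated by $\successor$ and $\mapping$ on $\MarksL$, each $\Oh(\log(\vsTColl/\vnBWTRuns))$; the constant-time $\select$ and array lookups are absorbed. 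Adding this to the counting cost yields the claimed locating time.

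The main obstacle I expect is purely bookkeeping: keeping the $\log\vnBWTRuns$, $\log(\vsTColl/\vnBWTRuns)$, and $\log\vsTColl$ terms carefully separated so that the $\MapMarksL$ term is rewritten via $\log\vnBWTRuns = \log\vsTColl - \log(\vsTColl/\vnBWTRuns)$ in the right direction, rather than being loosely upper-bounded by $\log\vsTColl$ (which would yield a $3\log(\vsTColl/\vnBWTRuns)$ instead of the advertised $2\log(\vsTColl/\vnBWTRuns)$). Apart from this identity, the proof is a direct aggregation of the per-component bounds already established in this section.
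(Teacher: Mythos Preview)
Your proposal is correct and follows the same approach as the paper, which in fact offers no proof beyond the remark that ``a straightforward examination of the preceding data structures reveals that they collectively yield the following result.'' Your explicit component-by-component accounting, including the rewriting $\log\vnBWTRuns = \log\vsTColl - \log(\vsTColl/\vnBWTRuns)$ to collapse the $\MapMarksL$ term, is exactly the bookkeeping the paper leaves implicit.
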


\section{Subsampled \texorpdfstring{$\BWT/\Psi$}{BWT/Ψ}-based indexes} \label{sec:srcsa}

While the $\RIdx$ and $\RCSA$ sampling mechanisms perform very well on highly repetitive texts, they can be less efficient in areas where the run heads or tails split the text into many short blocks. The text is sampled too frequently in such areas, creating unnecessary redundancy in the indexes.
Figure~\ref{fig:samples-densities} illustrates an analysis of commonly used datasets confirming that these oversampled areas indeed arise in various types of text.

\begin{figure}[t]
  \begin{center}
    \includegraphics[width=0.90\textwidth]{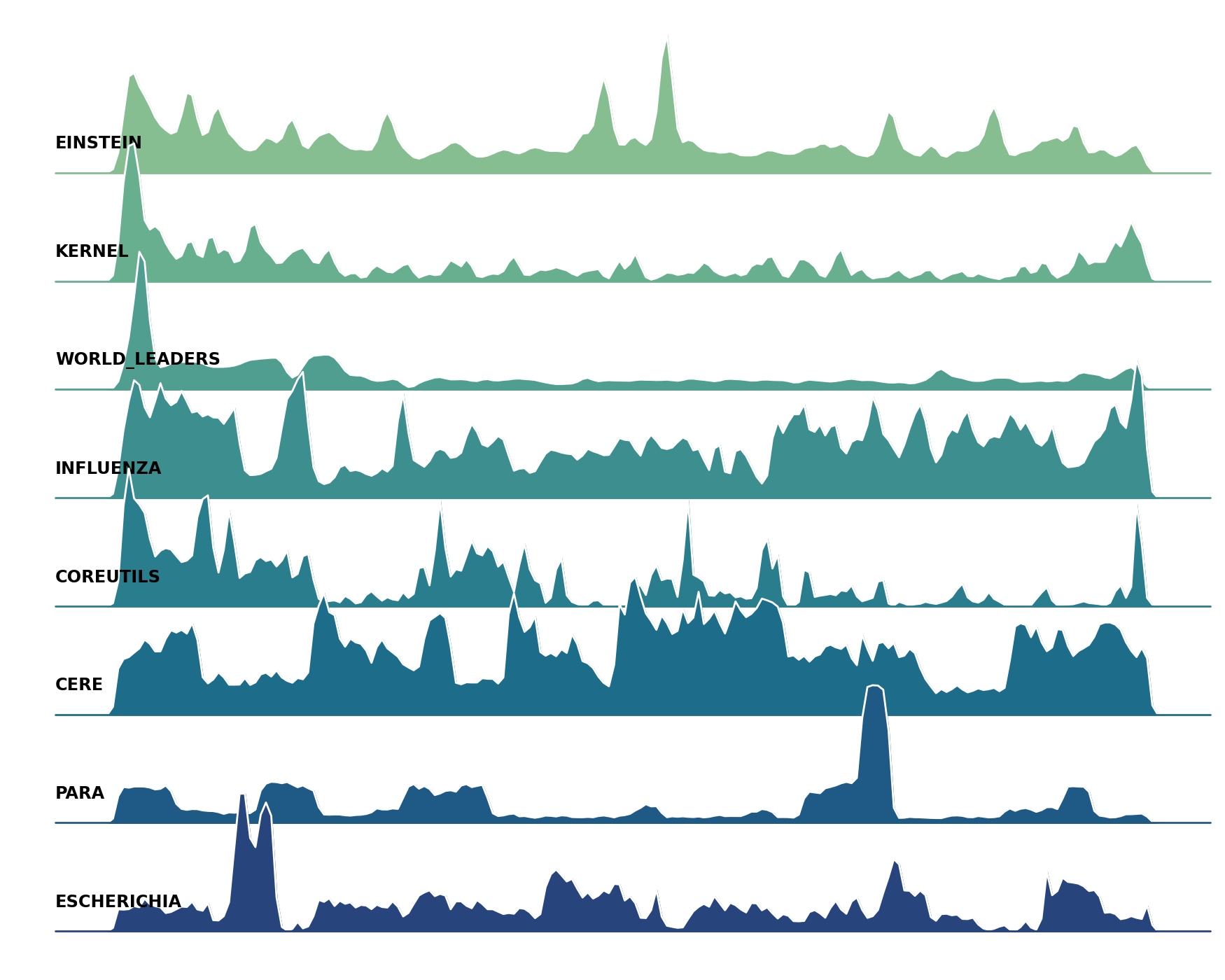}
  \end{center}
  \caption{
    Distribution of $\BWT$-run heads within the Pizza-Chili repetitive texts \cite{PizzaChili}.
    The $x$-axis represents the positions of these run heads along the text.
    The $y$-axis shows the density, which indicates how frequently $\BWT$-run heads appear at different locations.
    The smooth curve is obtained using the statistical method Kernel Density Estimation (KDE).
  }
  \label{fig:samples-densities}
\end{figure}

The existence of those short blocks in the texts is to be expected. Consider, in the particular case of DNA sequences, the site of a single-nucleotide polymorphism, where some genomes have {\tt A} and the others have {\tt G}, in all cases followed by the same string $\alpha$ and preceded by the same symbols $\beta_m \cdots \beta_1$. Those {\tt A}s and {\tt G}s are likely to be intermingled in the $\BWT$ area of the suffixes that start with $\alpha$, but the characters $\beta_1$ will be separated into two $\BWT$ areas: those of the suffixes $\mathtt{A}\alpha$ and $\mathtt{G}\alpha$. The same will happen to the characters $\beta_2$ (separated in the areas of the suffixes $\beta_1\mathtt{A}\alpha$ and $\beta_1\mathtt{G}\alpha$), $\beta_3$, and so on. But for large enough $m$, $\beta_m \cdots \beta_1$ will be unique in the collection, and the suffix array areas of $\beta_m \cdots \beta_1 \mathtt{A} \alpha$ and $\beta_m \cdots \beta_1 \mathtt{G}\alpha$ will be merged again.

The character $\beta_1$ preceding (in the genomes) the first/last (in the $\BWT$) of those {\tt A}s and of those {\tt G}s is quite likely to be the start/end of a run in the $\BWT$, and so are the corresponding characters $\beta_2$, $\beta_3$, and so on, until the two areas merge. It follows that, if a character in the genomes is at a boundary between runs in the $\BWT$, then the character immediately to its left is quite likely to be as well.  In other words, the characters at boundaries between runs in the $\BWT$, will tend to cluster in the genomes.

%\textcolor{red}{At least for the case of collections of many similar genomes, it is not really surprising that the $\SRIdx$ often outperforms the $\RIdx$.  Consider the site of a single-nucleotide polymorphism, where some genomes have {\tt A} and the others have {\tt G}.  Those {\tt A}s and {\tt G}s are likely to be intermingled in a single block in the $\BWT$, but the characters immediately to their left in the genomes will be separated into two blocks in the $\BWT$, as will the characters 2 positions to their left, and 3 positions to their left, and so on, until we reach characters far enough to the left of those {\tt A}'s and {\tt G} that they once again share a longer context with each other than with other characters, at which point the two blocks merge back into a single block in the $\BWT$.}

%\textcolor{red}{The character immediately to the left in the genomes of the last of the first of those {\tt A}s is quite likely to be the start of a run in the $\BWT$, and so is the character immediately to its left in the genomes, and so on, until the two blocks merge; symmetrically, the character immediately to the left of the last of those {\tt A}s is quite likely to be at the end of a run.  The same is true of the characters immediately to the left of the first and last of those {\tt G}s.  It follows that, if a character in the genomes is at a boundary between runs in the $\BWT$, then the character immediately to its left is quite likely to be as well.  In other words, the characters at boundaries between runs in the $\BWT$, will tend to cluster in the genomes.}

This section introduces two new indexing schemes for repetitive texts: the \emph{subsampled $\RIdx$} ($\SRIdx$) and the \emph{subsampled $\RCSA$} ($\SRCSA$).
Both are built by combining aspects of existing methods.
The $\SRIdx$ is a hybrid between the $\RIdx$ and the $\RLFMIdx$.
The $\SRCSA$, on the other hand, combines the $\RCSA$ with the $\RLCSA$.

For the sake of clarity, we will employ the designation \emph{sr-indexes} to refer to our two subsampled solutions ($\SRIdx$ and $\SRCSA$), \emph{r-indexes} to mean both $\BWT/\Psi$-runs based indexes ($\RIdx$ and $\RCSA$), and \emph{rl-indexes} to represent the run-length based indexes $\RLFMIdx$ and $\RLCSA$.

Similarly to their corresponding \emph{r}-indexes, {\em sr}-indexes take text position samples at the beginning or end of each run.
Yet, they remove samples from text areas where there are too many of them, ensuring that no three samples in a row are closer together than a certain distance (defined by a parameter $s$).
This approach is a relaxation of the regular text sampling used in the \emph{rl}-indexes, where consecutive samples are separated exactly by $s$ text positions. Unlike \emph{rl}-indexes, some consecutive samples can be very far apart in some areas of the text, but unlike \emph{r}-indexes, \emph{sr}-indexes ensure that they are never too close to each other.
To achieve this goal, the \emph{sr}-indexes face various challenges related to maintaining correctness and efficiency upon removal of samples, in particular ensuring, like \emph{rl}-indexes, that they never require more than $s$ steps to simulate a backward step or computation of an entry of $\SA$.

Our $\SRIdx$ and $\SRCSA$ are based on a similar design, with the primary distinction being the use of the $\fnPhi$ or $\fnIPhi$ function, respectively.
The $\fnPhi$ function employs a predecessor data structure to locate the remaining values in $\SA[\vSPosPat..\vEPosPat-1]$, whereas the $\fnIPhi$ function relies on the successor to compute $\SA[\vSPosPat+1..\vEPosPat]$ (recall that the $\LF$ function of the $\RIdx$ is the inverse of the $\Psi$ function of the $\RCSA$).
To avoid redundant explanations, we will focus our attention on the $\SRIdx$, making pertinent remarks when differences with the $\SRCSA$ are significant and require further explanation.
We will directly present the practical data structures that implement the subsampled indexes.

\subsection{Subsampling}

%%%%%%%%%%
%% Practical definition of r-index
%%%%%%%%%%
The $\RIdx$ locating structures are formed by the following components, analogous to those of the $\RCSA$ we described in Section~\ref{subsec:rcsa-practical}.
Our subsampling solutions will later modify them.

\begin{description}
  \item[{$\SamplesL[1..\vnBWTRuns]$}:] an array of $\vnBWTRuns$ sampled text positions, where $\SamplesL[p] = \SA[i]-1$ iff $\BWT[i]$ is the last letter in the $p$th $\BWT$-run.

  \item[{$\MarksF[1..\vsTColl]$}:] a bitvector marking with $1$s the \emph{text} positions of the letters that are the first in a $\BWT$-run.
  That is, if $i = 1$ or $\BWT[i] \not= \BWT[i-1]$, then $\MarksF[\SA[i] - 1] = 1$.
  This allows one to find the rightmost $1$ up to position $j$,
%$\predecessor(\MarksF, j) = \select(\MarksF, \rank(\MarksF, j))$ (with $\select(\MarksF,0)=0$).
  \begin{equation*}
    \predecessor(\MarksF, j) = \select(\MarksF, \rank(\MarksF, j)),
  \end{equation*}
  with $\select(\MarksF,0) = 0$.

  \item[{$\MapMarksF[1..\vnBWTRuns]$}:] an array mapping each letter marked in $\MarksF$ to the $\BWT$-run preceding the one in which it is located.
  If $\MarksF[j] = 1$ with $j = \SA[i] - 1$, then there exists $p$ such that $\SamplesL[p] = \SA[i - 1] - 1$, because $\BWT[i]$ is the first letter in a $\BWT$-run,\footnote{Note that $i$ cannot belong to the first run (i.e., $i=1$), as then we would be on the suffix $\mathcal{T}[\SA[1]] = \$$.}
  and
  \begin{equation*}
    p = \mapping(\MarksF, j) = \MapMarksF[\rank(\MarksF, j)],
  \end{equation*}
  where $\MapMarksF[0]$ yields the $\BWT$-run preceding to the run of the special symbol \$.
\end{description}

The $\RIdx$ computes the $\fnPhi$ function as
\begin{equation} \label{eq:rindex_phi}
  \fnPhi(j) = \SamplesL[\mapping(\MarksF, j - 1)] + (j - \predecessor(\MarksF, j - 1)).
\end{equation}

%%%%%%%%%%
%% Subsampling of sr-index
%%%%%%%%%%
The $\SRIdx$ subsampling process removes $\RIdx$ samples in oversampled areas.
Concretely, let $\vSamp_1 < \cdots < \vSamp_{\vnBWTRuns}$ be the text positions of the last letters in $\BWT$-runs, that is, the sorted values in array $\SamplesL$.
For any $1 < i < \vr$, the sample $\vSamp_{i}$ is removed if $\vSamp_{i+1} - \vSamp_{i-1} \le \vSampFac$, where $\vs$ is a parameter.
This condition is tested and applied sequentially for $i = 2,\ldots,\vr-1$.
If, for example, we removed $\vSamp_2$ because $\vSamp_3 - \vSamp_1 \le \vs$, then we next remove $\vSamp_3$ if $\vSamp_4-\vSamp_1 \le \vs$; if we had not removed $\vSamp_2$, then we remove $\vSamp_3$ if $\vSamp_4-\vSamp_2 \le \vs$.
Let us designate $\vt_1,\vt_2,\ldots$ as the sequence of the {\em remaining} samples.

The structures $\MarksF$, $\MapMarksF$, and $\SamplesL$ are constructed exclusively on the remaining subsamples $\vt_i$.
Consequently, the removal of the sample $\SamplesL[p] = \vSamp$ also entails the removal of the $1$ in $\MarksF$ corresponding to the first letter of the $(p+1)$th $\BWT$-run, which is the very instance that Eq.~\eqref{eq:rindex_phi} would have addressed with $\SamplesL[p]$.
In other words, if $i$ is the first position of the $(p+1)$th run and $i-1$ the last of the $p$th run, then if we remove $\SamplesL[p] = \SA[i-1] - 1$, we remove the corresponding $1$ at position $\SA[i] - 1$ in $\MarksF$.
In addition, the corresponding entry of $\MapMarksF$ is also removed.
Finally, note that $\MapMarksF$ must be adapted to point to the corresponding entry of $\SamplesL$, once some entries of the latter are removed.

Subsampling proves to be an effective method for avoiding the excessive space required to store the locating structures.
This is particularly beneficial when the number of $\BWT$-runs $\vnBWTRuns$ is a relatively large value.
In such cases, subsampling can reduce the entries in those data structures from $\Oh(\vnBWTRuns)$ to $\Oh(\min(\vnBWTRuns,\vsTColl/\vSampFac))$ in the worst case (the reduction is much higher in practice because the samples are not uniformly distributed, as seen in Figure~\ref{fig:samples-densities}).

\begin{lemma} \label{lem:space}
The subsampled structures \emph{$\SamplesL$}, \emph{$\MarksF$} and \emph{$\MapMarksF$} use
$\min(\vnBWTRuns, 2 \lceil \vsTColl / (\vSampFac + 1) \rceil) \cdot (2 \log n + \Oh(1))$ bits of space.
\end{lemma}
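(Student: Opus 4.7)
The plan is to bound the number $k$ of surviving subsamples first, and then to add up the bit-costs of the three data structures in terms of $k$. The bound $k\le \vnBWTRuns$ is immediate since subsampling only deletes elements of $\SamplesL$. The non-trivial bound $k \le 2\lceil \vsTColl/(\vs+1)\rceil$ is what I would focus on.

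First I would establish the following invariant on the surviving sequence $\vt_1<\vt_2<\cdots<\vt_k$: for every interior index $1<j<k$,
\[
  \vt_{j+1}-\vt_{j-1} \;\ge\; \vs+1.
\]
To prove this, fix $j$ with $1<j<k$ and let $\vt_j = \vSamp_i$ in the original enumeration. At the moment the sequential procedure examines $\vSamp_i$, the most recent surviving sample is exactly $\vt_{j-1}$ (because processing is done in left-to-right order and nothing to the left of $\vSamp_i$ is touched later), and the procedure compares $\vSamp_{i+1}-\vt_{j-1}$ against $\vs$. Since $\vSamp_i$ survived, we have $\vSamp_{i+1}-\vt_{j-1}>\vs$. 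Now $\vt_{j+1}$ is the first surviving sample after $\vSamp_i$, so either $\vt_{j+1}=\vSamp_{i+1}$ or $\vt_{j+1}\ge\vSamp_{i+2}>\vSamp_{i+1}$; in either case $\vt_{j+1}-\vt_{j-1}>\vs$, proving the invariant. The main subtle point is exactly this argument — one must be careful that sequential processing does not break the correspondence between $\vt_{j-1}$ and the ``previous'' sample consulted at step $i$.

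Given the invariant, the odd-indexed subsequence $\vt_1,\vt_3,\vt_5,\ldots$ and the even-indexed subsequence $\vt_2,\vt_4,\vt_6,\ldots$ are each strictly increasing inside $[1..\vsTColl]$ with consecutive gaps at least $\vs+1$, so each contains at most $\lceil \vsTColl/(\vs+1)\rceil$ elements. Hence $k\le 2\lceil \vsTColl/(\vs+1)\rceil$, and combining with $k\le\vnBWTRuns$ yields the stated $k\le\min(\vnBWTRuns, 2\lceil \vsTColl/(\vs+1)\rceil)$.

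For the space, I would bound the three components separately in terms of $k$: the array $\SamplesL$ stores $k$ text positions using $k\lceil\log\vsTColl\rceil$ bits; the bitvector $\MarksF$ contains exactly $k$ ones in a universe of $\vsTColl$, so the sparse representation of \cite{OS07} uses $k\log(\vsTColl/k)+\Oh(k)$ bits while still supporting the required predecessor queries; and $\MapMarksF$ stores $k$ indices of $\lceil\log k\rceil$ bits each. Summing,
\[
  k\log\vsTColl + k\log(\vsTColl/k) + k\log k + \Oh(k) \;=\; 2k\log\vsTColl + \Oh(k),
\]
where the key cancellation $\log(\vsTColl/k)+\log k = \log\vsTColl$ is what makes the constant in front of $\log\vsTColl$ equal to exactly $2$. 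Substituting the bound on $k$ yields the claim. The only piece I expect to require care is the invariant argument above; the space summation is then a direct calculation.
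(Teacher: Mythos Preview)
Your proof is correct and follows essentially the same approach as the paper: both arguments rest on the invariant $\vt_{j+1}-\vt_{j-1}>\vs$ for interior surviving samples (which the paper merely asserts ``by construction'' while you argue it explicitly), and both finish with the same space summation $k\log n + k\log(n/k) + k\log k + \Oh(k) = 2k\log n + \Oh(k)$. The only cosmetic difference is the counting step: the paper partitions the text into blocks of length $\vs+1$ and observes that no block can hold three surviving samples, whereas you split the survivors into odd- and even-indexed subsequences each with gaps at least $\vs+1$; both yield the bound $2\lceil n/(\vs+1)\rceil$ directly from the same invariant.
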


\begin{proof}
If $x$ is the number of remaining samples, then for each remaining sample array $\SamplesL$ uses $\lceil \log n\rceil$ bits, bitvector $\MarksF$ uses $\log(n/x)+\Oh(1)$ bits \cite{OS07}, and $\MapMarksF$ uses $\lceil \log x \rceil$ bits. The combined size of the three arrays is then $x \cdot (2 \log n + \Oh(1))$ bits. This is the same space as in the implemented $\RIdx$ \cite{GagieNP20:FFS}, with the number of samples reduced from $\vr$ to $x$.

Our subsampling process begins with $\vnBWTRuns$ samples and subsequently removes a subset of them, thus ensuring that the number of samples never exceeds $\vnBWTRuns$.
By construction, any remaining sample $\vt_i$ is guaranteed to satisfy $\vt_{i+1} - \vt_{i-1} > \vSampFac$, so if we cut the text into blocks of length $\vSampFac + 1$, no block can contain more than $2$ samples. Therefore, $x \le \min(\vr,2\lceil \vn/(\vs+1)\rceil)$.
\end{proof}

Our indexes add the following small structure on top of the above ones, so as to mark the removed samples:

\begin{description}
\item[{\rm $\Removed[1..r]$}:] a bitvector telling which of the original samples have been removed.
Specifically, $\Removed[p] = 1$ iff the sample at the end of the $p$th $\BWT$-run was removed.
We can compute any $\rank(\Removed, p)$ in constant time using $\vnBWTRuns + \oh(\vnBWTRuns)$ bits~\cite{Cla96}, as well as $\rank_0(\Removed, p) = p - \rank(\Removed,p)$ (which counts the 0s in $\Removed[1..p]$).
\end{description}

%! Author = Dustin Cobas <dustin.cobas@gmail.com>
%! Date = 12/5/2023

\begin{figure}[t]
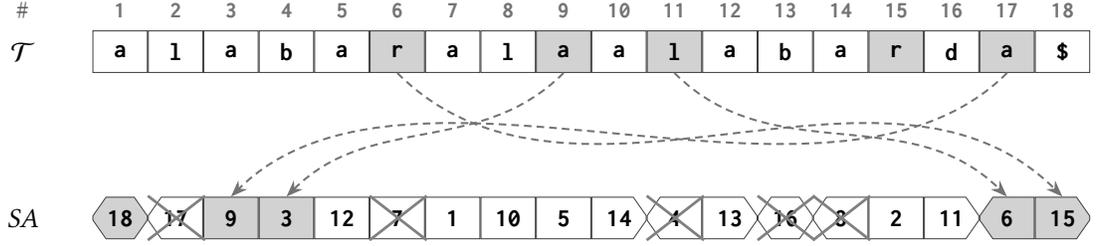

  \begin{center}
    \includestandalone[width=1\textwidth]{./srcsa-scheme-tikz}
  \end{center}
  \caption{
    Subsampled data structures used for locating mechanism of the $\SRCSA$ using a sampling factor $\vSampFac = 4$ (compare with the full sampling in Figure~\ref{fig:rcsa-scheme}).
    The gray cells in $\vTColl$ are the subsampled elements marked in the bitvector $\MarksL$.
    The gray cells in $\SA$ are the remaining samples stored in the array $\SamplesF$.
    The crossed cells in $\SA$ are the removed samples marked in the bitvector $\Removed$.
    The arrows show the mapping $\MapMarksL$.
  }
  \label{fig:srcsa-scheme}
\end{figure}

\subsubsection*{Construction.}
Once the basic structures of the $\RIdx$ able to perform $\LF$-steps are built, we can create the additional structures $\MarksF$, $\SamplesL$, $\MapMarksF$, and $\Removed$ in two additional virtual backward passes over the text, in $\Oh(\vsTColl \log(\sigma + \vsTColl / \vnBWTRuns))$ time and without extra space. A first pass performs the subsampling, in text order, thereby defining the bitvector $\Removed$. Once we know the number of remaining samples, a second traversal fills the $1$s in $\MarksF$ and the entries in $\MapMarksF$ and $\SamplesL$ for the runs whose sample was not removed.

%After setting up the initial data structures for the $\RIdx$, applying subsampling to create the $\SRIdx$ is a lightweight process.
%The space required for this subsampling step, as well as building and updating the associated data structures needed by the $\SRIdx$, is quite efficient, typically requiring only $\Oh(\vnBWTRuns)$ space and $\Oh(\vnBWTRuns \log \vnBWTRuns)$ time.

\subsubsection*{Running on the $\SRCSA$.}
A nearly identical sampling process is used on the $\SRCSA$, using the samples in the $\SamplesF$ array.
Recall that, unlike the $\RIdx$, these samples are the text positions of the first element in each $\Psi$-run.
Given this condition along with the nature of the $\SRCSA$, which relies on the $\Psi$ and $\successor(\MarksL, i)$ functions, we have opted to implement a slight variation.
Specifically, the subsampling mechanism employs a backward iteration over the samples $i = \vnBWTRuns - 1,\ldots, 2$ in $\SamplesF$.
It is straightforward to verify that this change retains the results given above.

The construction process is also similar, using two forward text traversals simulated using the $\Psi$ function. The construction time is $\Oh(n\log B)$.

%\subsection{Solving Problem 1}
\subsection{Counting with Toehold} \label{sec:problem1}

The counting algorithm of the practical $\RIdx$ is based on the data structures of a $\RLFMIdx$ variant called {\em sparse $\nameIdx{RLBWT}$} \cite[Thm.~28]{Prezza17:CCT}.
By applying a sparsification strategy primarily to the $\Start$ bitvector, this structure requires only $\vnBWTRuns \cdot ((1 + \epsilon)\log(\vsTColl / \vnBWTRuns) + \log\vsAlph + \Oh(1))$ bits (largely dominated by the described arrays $\Start$ and $\Letter$) for any small constant $\epsilon>0$.
The time complexity for backward search steps and $\LF$-steps is $\Oh((1/\epsilon) \log(\vsAlph + \vsTColl / \vnBWTRuns))$.
%For conciseness, we do not provide a detailed description of these $\RIdx$ structures, which remain unaltered in our index.

To obtain the necessary text position or \emph{toehold} along the backward search, the practical $\RIdx$ maintains the value $\SA[\vEPosPat_{i}]$ along each interval $[\vSPosPat_{i}..\vEPosPat_{i}]$, for $1 \le i \le \vsPat$.
In the non-trivial cases where $\vPat[i] \neq \BWT[\vEPosPat_{i+1}]$, the end of interval is $\vEPosPat_{i} = \LF(j)$, where $j \in [\vSPosPat_{i+1}..\vEPosPat_{i+1}]$ is the largest position with $\BWT[j] = \vPat[i]$. It is easy to see that $j$ must be the end of a $\BWT$ run, in particular of the $p$th run, with $p = \rank_1(\Start, j)$. As we do not know $j$, this run can be computed as $p = \select_c(\Letter, \rank_c(\Letter, \rank_1(\Start, \vEPosPat_{i+1})))$.
Finding $\SA[\vEPosPat_{i}]$ then requires a straightforward lookup process in the $\RIdx$, since it is precomputed and stored in array $\SamplesL$, where it holds $\SA[\vEPosPat_{i}] = \SA[j] - 1 = \SamplesL[p]$.

However, the $\SRIdx$ might have removed the sample $\SamplesL[p] = \SA[\vEPosPat_{i}]$ during its subsampling process, which is indicated by the flag $\Removed[p] = 1$.
When this happens, we use an iterative search, computing $j_k=\LF^k(j)$ for $k=1,2,\ldots$ and $j=\LF^{-1}(\vEPosPat_{i})$, until a remaining sampled $\SA[j_k]$ is found.
This is identified because $j_k$ is the last position in a $\BWT$-run (i.e., $j_k = \vsTColl$ or $\Start[j_k + 1] = 1$) and $\Removed[q] = 0$ for $q = \rank_1(\Start, j_k)$.
When we find such $j_k$, we can compute the final $\SA[\vEPosPat_{i}]$ by adjusting the sample found, $q' = \rank_0(\Removed, q)$, based on the $k$ steps in the search, obtaining $\SA[\vEPosPat_i] = \SamplesL[q'] + k$.
%\textcolor{blue}{Es también un error pero en este caso es un poco más confuso. Como lo teníamos descrito, la búsqueda de la posición $j_k=\LF^k(\vEPosPat_i)$ de la muestra restante parte con $\vEPosPat_i$ pero debería comenzar como $j_k=\LF^k(j)$ con $j=\LF^{-1}(\vEPosPat_{i})$ (igual que en Alg.~\ref{alg:srindex_count_toehold}), por lo que como está ahora hay un salto menos y deberíamos reemplazar el $+1$ con $-1$. Por consistencia y claridad quizás debamos modificar $j_k=\LF^k(j)$ y así no iría el $-1$.}

The number of $\LF$-steps in each backward search iteration of the $\SRIdx$ is bounded by the sampling factor $\vSampFac$:
the next lemma shows that for some $k < \vSampFac$ we will find a non-removed sample.

\begin{lemma} \label{lem:distance-s}
If there is a removed sample $\vSamp_j$ such that $\vt_i$ and $\vt_{i+1}$ are remaining samples satisfying $\vt_i < \vSamp_j < \vt_{i+1}$, then $\vt_{i+1} - \vt_i \le \vSampFac$.
\end{lemma}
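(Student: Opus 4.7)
The plan is to pinpoint the \emph{last} removed sample in the interval $(\vt_i, \vt_{i+1})$ and read off the inequality that triggered its removal; that inequality will turn out to be exactly $\vt_{i+1} - \vt_i \le \vSampFac$.

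First, I would set up indexing in the original sequence $\vSamp_1 < \cdots < \vSamp_{\vnBWTRuns}$. Write $\vt_i = \vSamp_a$ and $\vt_{i+1} = \vSamp_b$ for some indices $a < b$. Because the lemma assumes that at least one removed sample $\vSamp_j$ lies strictly between $\vt_i$ and $\vt_{i+1}$, and no remaining sample lies there, every $\vSamp_k$ with $a < k < b$ must have been removed by the subsampling process. In particular $b \ge a+2$, and $\vSamp_{b-1}$ is one of these removed samples. Note that $1 < b-1 < \vnBWTRuns$, since $\vSamp_1$ and $\vSamp_{\vnBWTRuns}$ are never processed and hence never removed.

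Next, I would invoke the sequential nature of the subsampling. The rule is applied in increasing order of indices $k = 2, \ldots, \vnBWTRuns-1$, and (per the clarifying example in the paper) the effective test on $\vSamp_k$ compares $\vSamp_{k+1}$ with the \emph{closest remaining sample to the left}, not necessarily $\vSamp_{k-1}$ itself. When we arrive at $k = b-1$, we have already processed and removed all of $\vSamp_{a+1},\ldots,\vSamp_{b-2}$, while $\vSamp_a = \vt_i$ survives. Hence the closest remaining sample to the left of $\vSamp_{b-1}$ at that moment is precisely $\vt_i$. The right neighbor used in the test is $\vSamp_b = \vt_{i+1}$, which has not yet been touched.

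Finally, because $\vSamp_{b-1}$ was in fact removed, the inequality that triggered removal must hold, giving
\[
  \vt_{i+1} - \vt_i \;=\; \vSamp_b - \vSamp_a \;\le\; \vSampFac,
\]
which is exactly the claim. The only real thing to be careful about is justifying that, by the time $\vSamp_{b-1}$ is tested, all intermediate candidates have already been dispatched and $\vt_i$ is genuinely the left neighbor used by the procedure; this is immediate from the left-to-right processing order together with the assumption that none of $\vSamp_{a+1},\ldots,\vSamp_{b-1}$ survived. No additional obstacle is expected, since the bound falls out of a single step of the subsampling rule applied to the last removed sample in the gap.
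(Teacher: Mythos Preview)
Your proposal is correct and follows essentially the same approach as the paper: both proofs hinge on the removal condition for the \emph{last} removed sample between $\vt_i$ and $\vt_{i+1}$, whose left neighbor at removal time is $\vt_i$ and whose right neighbor is $\vt_{i+1}$. The paper reaches that sample by tracing forward from the given $\vSamp_j$ through the chain of removals, whereas you name it directly as $\vSamp_{b-1}$; the underlying argument is the same.
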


\begin{proof}
Since our subsampling process removes samples from left to right, by the time we removed $\vSamp_j$, the current sample $\vt_i$ was already the nearest remaining sample to the left of $\vSamp_j$.
If the sample following $\vSamp_j$ was the current $\vt_{i+1}$, then $\vSamp_j$ was removed because $\vt_{i+1}-\vt_i \le s$.
Therefore, the lemma holds in this case.

Otherwise, there were other samples to the right of $\vSamp_j$, say $\vSamp_{j+1}, \vSamp_{j+2}, \ldots, \vSamp_{j+k}$, which were consecutively removed until the current sample $\vt_{i+1}$ was reached.
First, we removed $\vSamp_j$ because $\vSamp_{j+1} - \vt_i \le s$.
Then, for $1 \le l < k$, we removed $\vSamp_{j+l}$  (after having removed $\vSamp_j, \vSamp_{j+1},\ldots,\vSamp_{j+l-1}$) as $\vSamp_{j+l+1}-\vt_i \le s$.
Finally, we removed $\vSamp_{j+k}$ since $\vt_{i+1}-\vt_i \le s$, thus that the lemma holds in this case as well.
\end{proof}

This implies a fixed bound on the search, beginning from the position $j$ of the removed sample $\vSamp = \SA[j] - 1$ and extending to the surrounding remaining samples $\vt_i < \vSamp < \vt_{i+1}$.
It is sufficient to perform $k = \vSamp - \vt_i < \vSampFac$ $\LF$-steps until position $j_k = \LF^{k}(j)$ satisfies $\SA[j_k]-1=\vt_i$, which is stored in $\SamplesL[q']$ and not removed.

If we followed verbatim the modified backward search of the $\RIdx$, finding every $\SA[\vep_i]$, we would perform $\Oh(\vsPat \cdot \vSampFac)$ steps on the $\SRIdx$.
We now reduce this to $\Oh(\vsPat + \vSampFac)$ steps by noting that the only value we require is $\SA[\vep]=\SA[\vep_1]$.
Further, we need to know $\SA[\vep_{i+1}]$ to compute $\SA[\vep_i]$ only in the easy case where $\BWT[\vep_{i+1}]=\vP[i]$ and so $\SA[\vep_i]=\SA[\vep_{i+1}]-1$.
Otherwise, the value $\SA[\vep_i]$ is computed afresh.

We then proceed as follows.
We do not compute any value $\SA[\vep_i]$ during the backward search; we only remember the last (i.e., smallest) value $i'$ of $i$ where the computation was not easy, that is, where $\BWT[\vep_{i'+1}] \not= \vP[i']$.
Then, $\SA[\vep_1] = \SA[\vep_{i'}]-(i'-1)$ and we need to apply the procedure described above only once: we compute $\SA[j]$, where $j$ is the largest position in $[\vsp_{i'+1}..\vep_{i'+1}]$ where $\BWT[j]=P[i']$, and then $\SA[\vep_{i'}]=\SA[j]-1$.

%%%%%%%%%%
% Algorithm Counting and toehold with sr-index
%%%%%%%%%%
\begin{algorithm}[t]
  \IO{Query pattern $\vPat[1..\vsPat]$.}
  {Range $\langle \vsp,\vep \rangle$ on $\SA$ for $\vPat$;
  Value $\SA[\vep]$.}

  \begin{multicols}{2}
    \SetKwFunction{fnCount}{count}
\SetKwFunction{fnToehold}{findToehold}

\Function{\fnCount{$\vPat[1..\vsPat]$}}{
  $\langle \vsp, \vep \rangle \leftarrow \langle 1, \vsTColl \rangle$\;

  $\langle i_{v}, p_{v} \rangle \leftarrow \langle -1, -1 \rangle$\;

  \For{$i \leftarrow \vsPat$ \KwDownTo $1$}{
    $c \leftarrow \vPat[i]$\;
    $p \leftarrow \rank_{1}(\Start, \vep)$\;

    \If{$c \ne \Letter[p]$} {
      $\langle i_{v}, p_{v} \rangle \leftarrow \langle i, p \rangle$\;
    }

    %\BlankLine

    $\vsp \leftarrow C[c] + \rank_{c}(\BWT, \vsp{-}1) {+} 1$\;
    $\vep \leftarrow C[c] + \rank_{c}(\BWT, \vep)$\;
    \If{$\vsp > \vep$} {\Return{``$\vPat$ is not in $\vTColl$''}\;}
  }

  %\BlankLine

  $v \leftarrow \fnToehold(i_{v}, p_{v})$\;

  \Return{$\langle \vsp, \vep, v \rangle$}\;
}

\BlankLine

\Function{\fnToehold{$i_{v}$, $p_{v}$}}{%\label{fn:NextSPAndToehold}
  \If{$i_{v} = -1$}{
    \tcp{$\SA[\vsTColl]$ is stored}
    \Return{$\SA[\vsTColl] - \vsPat$}\;
  }

  %\BlankLine

  $c \leftarrow \vPat[i_{v}]$\;
  $q \leftarrow \select_{c}(\Letter, \rank_{c}(\Letter, p_{v}))$\; % (supported by the $\RLFMIdx$/$\RIdx$) \\
  $j \leftarrow \select_{1}(\Start, q + 1) - 1$\;
  $k \leftarrow 0$\;
  \While{$(j < \vsTColl$ \KwAnd \Start$[j + 1] = 0)$ \KwOr $\Removed[q] = 1$}
  { $j \leftarrow \LF(j)$ \\
    $q \leftarrow \rank_1(\Start, j)$ \\
    $k \leftarrow k + 1$
  }

%  \Return{$\SamplesL[q - \rank_{1}(\Removed, q)] + k + 1 - (i_{v} - 1)$}\;
  \Return{$\SamplesL[\rank_{0}(\Removed, q)] {+} k {-} (i_{v} {-} 1)$}\;
}

  \end{multicols}
  \BlankLine

  \caption{Counting pattern occurrences and finding value $\SA[\vep]$ with the $\SRIdx$.}
  \label{alg:srindex_count_toehold}
\end{algorithm}

Algorithm~\ref{alg:srindex_count_toehold} gives the complete pseudocode that counts while finding a toehold.
Note that, if $\vP$ does not occur in $\vT$ (i.e., $\occ=0$) we realize this after the
$\Oh(m)$ backward steps because some $\vsp_i > \vep_i$, and thus we do not spend the $\Oh(s)$ extra steps.

\paragraph{Running on the $\SRCSA$.}
Although it is possible to simulate the $\LF$-steps via the $\Psi$ function, a more intuitive and efficient approach is to search the text forwards for a sampled position, using $\Psi$ directly.
Consequently, starting from the eliminated sample $\vSamp$, the $\SRCSA$ searches for the next remaining sample, $\vt_{i+1}$, rather than for the preceding one, $\vt_{i}$.
Lemma~\ref{lem:distance-s} also shows that the subsampling factor $\vSampFac$ bounds the number of steps in this case.
Therefore, for some $k < \vSampFac$, sample $\vt_{i+1} = \SA[\Psi^{k}(j)]$ is a $\Psi$-run head stored in $\SamplesF$, with $\vSamp = \SA[j]$ (note that we use $\SamplesF$ instead of $\SamplesL$ because on the $\SRCSA$ we obtain $\SA[sp]$, not $\SA[ep]$).

% WRONG For Problem 2, finding $\SA[j-1]$ from $i=\SA[j]$, we note that Lemma~\ref{lem:distance-s} implies that, if $i-\predecessor(\First,i) > s$, then no sample was removed between $\predecessor(\First,i)$ and $i$, and therefore we can apply Eq.~(\ref{eq:rindex}) as with the $\RIdx$. Otherwise, the process boils down to the same as for Problem 1, starting from $j-1$. Although this time $j-1$ needs not be an end of run, we are guaranteed to succeed after $k=i-\predecessor(\First,i) \le s$ steps.

%\subsection{Solving Problem 2}
\subsection{Locating from Toehold} \label{sec:problem2}

We now focus on the problem of finding $\SA[j-1]$ from $i=\SA[j]-1$. If we just apply the $\RIdx$ procedure based on $\fnPhi$, we will end up at an incorrect predecessor if the correct one has been removed during subsampling. To circumvent this problem, the $\SRIdx$ will start with a procedure similar to the one used to identify the toehold, that is, iteratively searching for a remaining sample. We will show that, when this procedure fails, we can safely use the $\fnPhi$ function like the $\RIdx$.

In this context, we first calculate $j_k' = \LF^{k}(j-1)$ for $k = 0, \ldots, \vSampFac - 1$.
For each of those $j_k'$ we verify whether it is the last symbol of its run (i.e., $j_k' = \vsTColl$ or $\Start[j_k' + 1] = 1$), and the sample corresponding to this run has not been removed (i.e., $\Removed[q] = 0$, with $q = \rank_1(\Start, j_k')$).
If these conditions are met, then it can be immediately derived that $\SA[j_k'] = \SamplesL[q'] + 1$, where $q' = \rank_0(\Removed, q)$.
Consequently, we can also obtain $\SA[j-1] = \SA[j_k'] + k$.

Unlike in Section~\ref{sec:problem1}, the symbol $\BWT[j - 1]$ is not necessarily an end of run.
Therefore, there is no guarantee that a solution will be found for some $0 \le k < \vSampFac$.
However, the following property shows that, if there were some ends of runs $j_k'$, it is not possible that all were removed from $\SamplesL$.

\begin{lemma} \label{lem:sri-no-removed-samples}
If there are no remaining samples in $\SA[j-1]-\vSampFac,\ldots,\SA[j-1]-1$, then no sample was removed between $\SA[j-1]-1$ and its preceding remaining sample.
\end{lemma}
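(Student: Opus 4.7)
The plan is to argue by contradiction. Writing $q = \SA[j-1]-1$ for convenience, the hypothesis says that the range $[q-\vSampFac+1, q]$ contains no remaining sample. Let $\vt_i$ denote the preceding remaining sample, that is, the largest remaining sample satisfying $\vt_i \le q$ (this is well defined because the subsampling process never removes the first original sample $\vSamp_1$; the degenerate case $q < \vSamp_1$ makes the conclusion vacuous). Suppose for contradiction that some original sample $\vSamp_l$ was removed during subsampling with $\vt_i < \vSamp_l \le q$. The goal is to show that this forces $\vt_i$ itself to lie in the forbidden range $[q-\vSampFac+1, q]$.

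The central step is to invoke Lemma~\ref{lem:distance-s}. Since $\vSamp_l$ is a removed sample strictly sandwiched between the two consecutive remaining samples $\vt_i$ and $\vt_{i+1}$, the lemma immediately yields $\vt_{i+1} - \vt_i \le \vSampFac$.

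From there a short case split on the location of $\vt_{i+1}$ finishes the argument. If $\vt_{i+1} \le q$, then $\vt_{i+1}$ would be a remaining sample strictly larger than $\vt_i$ and still at most $q$, contradicting the maximality of $\vt_i$ as the preceding remaining sample; hence $\vt_{i+1} \ge q+1$. Combining this with $\vt_{i+1} \le \vt_i + \vSampFac$ gives $\vt_i \ge q - \vSampFac + 1$. Together with $\vt_i \le q$, this shows $\vt_i \in [q-\vSampFac+1, q]$, contradicting the hypothesis and completing the proof.

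I expect this to be a very short argument: essentially a one-line application of Lemma~\ref{lem:distance-s} followed by pigeonhole arithmetic. The main conceptual care lies in correctly orienting the case analysis around $\vt_{i+1}$ (so that the ``if $\vt_{i+1} \le q$'' branch rules itself out via the maximality of $\vt_i$), and in dispatching the trivial boundary case where no preceding remaining sample exists. I do not anticipate any deeper obstacle.
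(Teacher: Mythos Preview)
Your proof is correct and is essentially the same argument as the paper's, just phrased as a contradiction rather than directly: the paper observes that the hypothesis forces $\vt_i < \SA[j-1]-\vSampFac$ and $\vt_{i+1} > \SA[j-1]-1$, hence $\vt_{i+1}-\vt_i > \vSampFac$, and then applies the contrapositive of Lemma~\ref{lem:distance-s}. Both proofs hinge on exactly the same single invocation of Lemma~\ref{lem:distance-s} and the same arithmetic on the gap $\vt_{i+1}-\vt_i$.
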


\begin{proof}
Let $\vt_i < \SA[j-1]-1 < \vt_{i+1}$ be the samples surrounding $\SA[j-1]-1$, so the remaining sample preceding $\SA[j-1]-1$ is $\vt_i$.
Since $\vt_i < \SA[j-1]-s$, it follows that $\vt_{i+1}-\vt_i > s$ and thus, by Lemma~\ref{lem:distance-s}, no samples were removed between $\vt_i$ and $\vt_{i+1}$.
\end{proof}

This means that, if the above process fails to yield an answer, it is possible to employ Eq.~\eqref{eq:rindex_phi} directly, as proved next.

\begin{lemma} \label{lem:sri-no-removed-marks}
If there are no remaining samples in $\SA[j-1]-\vSampFac, \ldots, \SA[j-1]-1$, then subsampling removed no $1$s in \emph{$\MarksF$} between positions $i=\SA[j]-1$ and \emph{$\predecessor(\MarksF, i)$}.
\end{lemma}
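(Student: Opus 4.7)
The plan is to argue by contradiction. Suppose that subsampling removed some $1$ at text position $t$ with $\predecessor(\MarksF, i) < t \le i$; the goal is to derive a conflict with Lemma~\ref{lem:sri-no-removed-samples}. Let $k_j$ be the smallest index such that $j_{k_j} = \LF^{k_j}(j)$ begins a $\BWT$-run, so that $\predecessor(\MarksF, i) = i - k_j$ in the original $\MarksF$. First I would use that a text position $t'$ carries a $1$ in the original $\MarksF$ iff $\SA^{-1}[t'+1]$ begins a $\BWT$-run, so that restricting to $t' \in (i - k_j, i]$ the SA-preimages are exactly $j_0, j_1, \ldots, j_{k_j-1}$; by the minimality of $k_j$ none of these begin a run. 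Hence the only possible removed $1$ in the target range sits at $t = i - k_j$, originating from $j_{k_j}$.

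Next I would pair this removed $1$ with its companion removed sample via the coupling described just before the lemma: removing the $1$ at $\SA[j_{k_j}] - 1 = i - k_j$ is equivalent to removing the sample at $\SA[j_{k_j}-1] - 1 = \SA[j-1] - k_j - 1$. The identity $\SA[j_{k_j}-1] = \SA[j-1] - k_j$ follows because for each $0 \le \ell < k_j$ the positions $j_\ell$ and $j_\ell - 1$ sit in the same $\BWT$-run, so the LF-chains rooted at $j$ and at $j-1$ advance in lockstep for $k_j$ steps. I then invoke Lemma~\ref{lem:sri-no-removed-samples} via the contrapositive of Lemma~\ref{lem:distance-s}: letting $\vt_a$ denote the preceding remaining sample to $\SA[j-1]-1$ and $\vt_{a+1}$ the next one, the hypothesis gives $\vt_a \le \SA[j-1]-s-1$ and $\vt_{a+1} \ge \SA[j-1]$, so $\vt_{a+1} - \vt_a \ge s+1$ and no sample was removed anywhere in $(\vt_a, \vt_{a+1})$. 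When $k_j \le s-1$, the paired sample lies in $[\SA[j-1]-s, \SA[j-1]-1] \subseteq (\vt_a, \vt_{a+1})$, contradicting its removal and closing the easy case.

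The main obstacle will be the regime $k_j \ge s$, where the paired sample lies at text position $\SA[j-1] - k_j - 1 \le \SA[j-1]-s-1$ and could a priori fall strictly below $\vt_a$. Closing this case cleanly will require a sharper argument---combining the failure of the iterative search preceding the lemma (which already forces $k_j \ge s$) with the spacing bound $\vt_{a+1} - \vt_a \ge s+1$---to show that the paired sample either coincides with $\vt_a$ (and is therefore remaining, hence not removed) or must still land inside the forbidden interval $(\vt_a, \vt_{a+1})$. Both alternatives contradict the assumed removal and thereby justify the direct use of Eq.~\eqref{eq:rindex_phi}.
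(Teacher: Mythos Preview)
Your approach is essentially the paper's, recast as a contradiction; the key computation---locate the original predecessor of $i$ in $\MarksF$ via the $\LF$-chain and pair it with a sample through the lockstep identity---is identical. The gap is that you stop short of the one observation that closes your ``hard case'' (which is in fact the only case).

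You already invoke Lemma~\ref{lem:sri-no-removed-samples} to conclude that no sample was \emph{removed} in $(\vt_a,\vt_{a+1})$. Combine this with the trivial fact that no \emph{remaining} sample lies in $(\vt_a,\vt_{a+1})$ (that is what ``consecutive remaining samples'' means) and you get: no \emph{original} sample lies in $(\vt_a,\vt_{a+1})$. Now trace your own identity: for $0\le\ell<k_j$ the position $j_\ell-1=\LF^\ell(j-1)$ does not end a run, so $\SA[j-1]-\ell-1$ is not an original sample; at $\ell=k_j$ it is. Hence the nearest original sample below $\SA[j-1]-1$ sits at $\SA[j-1]-k_j-1$, and we have just argued that this nearest original sample is $\vt_a$. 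Therefore $k_j=\SA[j-1]-1-\vt_a$ exactly, and the paired sample is $\vt_a$ itself, which is remaining---so the $1$ at $i-k_j$ was never removed. This is precisely the paper's direct argument: it sets $k=\SA[j-1]-1-\vt_a$, observes no $1$ ever existed in $\MarksF$ at positions $i,\ldots,i-k+1$, and exhibits the surviving $1$ at $i-k$.

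Two corollaries of this computation dissolve your case split. First, since $\vt_a\le\SA[j-1]-s-1$ by hypothesis, $k_j\ge s$ always, so your ``easy case'' $k_j\le s-1$ is vacuous. Second, the paired sample can never fall strictly below $\vt_a$, eliminating the alternative you were worried about; no ``sharper argument'' is needed.
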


\begin{proof}
Let $\vt_i < \SA[j-1]-1 < \vt_{i+1}$ be the samples surrounding $\SA[j-1]-1$, and
$k=\SA[j-1]-1-\vt_i$.
Lemma~\ref{lem:sri-no-removed-samples} implies that no sample existed between $\SA[j-1]-1$ and $\SA[j-1]-k=t_i+1$, and there exists one at $t_i$.
Consequently, no $1$ existed in $\MarksF$ between positions $\SA[j]-1$ and $\SA[j]-k$ (both included), and there exists one in $\SA[j]-1-k$.
Indeed, $\predecessor(\MarksF,i)=\SA[j]-1-k$.
\end{proof}

An additional optimization, which does not alter the worst-case complexity but enhances performance in practice, is to reuse work across successive occurrences.
Let $\BWT[\vSPosMRun..\vEPosMRun]$ be a maximal run inside $\BWT[\vsp..\vep]$.
For every $\vSPosMRun \le j \le \vEPosMRun$, the first $\LF$-step will result in $\LF(j) = \LF(\vSPosMRun)+(j-\vSPosMRun)$.
Therefore, the entire sequence of values $\LF(j)$ can be computed through a single iteration of the $\LF$ function.

Consequently, rather than locating $\SA[\vsp], \ldots, \SA[\vep]$ one by one, we first report $\SA[\vep]$ (which has been previously identified), and then partition $\BWT[\vsp..\vep-1]$ into maximal runs using bitvector $\Start$. We will traverse those maximal runs $\BWT[sm..\vEPosMRun]$, from largest to smallest $sm$, with the invariant that $\SA[em+1]$ is known. We first check that the run end $\BWT[\vEPosMRun]$ is sampled, in which case we report its position and decrement $em$ (note that the offset $k$ must be added to all the results reported at level $k$ of the recursion). We then continue recursively with %$\SA[\LF(\vSPosMRun)..\LF(\vSPosMRun)+(\vEPosMRun-\vSPosMRun)-1]$;
%otherwise, we continue recursively processing
$\SA[\LF(\vSPosMRun)..\LF(\vSPosMRun)+(\vEPosMRun-\vSPosMRun)]$.
By Lemma~\ref{lem:distance-s}, every non-removed sample found during the traversal has been duly reported prior to level $k=\vSampFac$.
Upon reaching the final recursion level ($k = s$), we use Eq.~\eqref{eq:rindex_phi} to obtain $\SA[\vEPosMRun], \ldots, \SA[\vSPosMRun]$ consecutively from $\SA[\vEPosMRun+1]$.
%This value $\SA[\vEPosMRun+1]$ must have been reported because it is either $\SA[\vEPosPat]$, or a sampled end of run at some level of the recursion, or the initial value of the subsequent maximal run.
Algorithm~\ref{alg:srindex_locate} gives the complete procedure to locate the occurrences.

%%%%%%%%%%
% Algorithm Counting and toehold with sr-index
%%%%%%%%%%
\begin{algorithm}[t]
  \IO{Query pattern $\vPat[1..\vsPat]$.}
  {Occurrences of $\vPat$: $V[1..\vnOcc] = \SA[\vsp..\vep]$.}

  \begin{multicols}{2}
    \SetKwFunction{fnLocate}{locate}
\SetKwFunction{fnCount}{count}
\SetKwFunction{fnLocateRec}{locateRec}
\SetKwFunction{fnLocateBase}{locateBase}
\SetKwFunction{fnCheckSample}{checkSample}

\Function{\fnLocate{$\vPat[1..\vsPat]$}}{
  $\langle \vsp, \vep, v \rangle \leftarrow \fnCount(\vPat)$\;
  %\BlankLine

  $V[\vep - \vsp + 1] \leftarrow v$\;

  \If{$\vsp < \vep$} {
    \fnLocateRec($\vsp$, $\vep - 1$, $0$)\;
  }

  \Return{$V$}\;
}

\BlankLine

\Function{\fnCheckSample{$\vem$, $k$}}{
  $p \leftarrow \rank_1(\Start, \vem)$ \;
  \If{$\Removed[p] = 1$} {
    \Return{$\vem$}
  }

  %\BlankLine

%  $V[\vem - \vsp + 1] \leftarrow \SamplesL[p - \rank_1(\Removed, p)] + 1 + k$ \;
  $V[\vem - \vsp + 1] \leftarrow$\;
  \ \ \ \ \ \ \ \ \ \ $\SamplesL[\rank_0(\Removed, p)] + 1 + k$ \;
  \Return{$\vem - 1$} \;
}

\BlankLine

\Function{\fnLocateRec{$\vSPosMRun$, $\vEPosMRun$, $k$}} {
  \If{$k = \vSampFac$} {
    \For{$i \leftarrow \vem$ \KwDownTo $\vsm$}{
      $V[i - \vsp + 1] \leftarrow \fnPhi(V[i - \vsp + 2])$\;
    }
    \Return{} \;
  }
  %\BlankLine

  $p \leftarrow \rank_1(\Start, \vem)$ \;
  $q \leftarrow \rank_1(\Start, \vsm)$ \;
  \For{$j \leftarrow p$ \KwDownTo $q$}{
    \If{$j < p$ \KwOr $\Start[\vem + 1] = 1$} {
      $em \leftarrow $\fnCheckSample($\vem$, $k$) \;
    }

    $im \leftarrow \max(\select_1(\Start, j), \vsm)$ \;

    \If{$im > \vem$} { \Continue }

    \fnLocateRec($\LF(im)$, $\LF(\vem)$, $k+1$) \; \label{line-alg:srindex_locate:call-locate_rec}
    $\vem \leftarrow im - 1$ \;
  }

}

  \end{multicols}
  \BlankLine

  \caption{Locating pattern occurrences with the $\SRIdx$.}
  \label{alg:srindex_locate}
\end{algorithm}

\paragraph{Running on the $\SRCSA$.}
The aforementioned results are directly applicable to the $\RCSA$ to obtain the $\SRCSA$.
On this structure, the problem is to compute the value $\SA[j+1]$ based on the previously determined value $i = \SA[j]$.
It is a straightforward exercise to prove the symmetric version of Lemmas~\ref{lem:sri-no-removed-samples} and \ref{lem:sri-no-removed-marks} needed for the $\SRCSA$.
The $\SRCSA$ probes the range $\SA[j+1], \ldots, \SA[j+1]+\vSampFac-1$ using $\Psi^k(j+1)$ for $0 \le k < s$, looking for a non-removed sample.
If this fails, it makes use of the $\fnIPhi$ function of the $\RCSA$.
%This function employs the successor function in $\MarksL$.
Finally, a similar recursive process can be employed to avoid computing $\SA[\vsp], \ldots, \SA[\vep]$ individually.
In this case, the procedure uses the maximal runs within $\Psi[\vsp..\vep]$.

\subsection{The basic \emph{sr}-indexes, \texorpdfstring{$\SRIdx_0$ and $\SRCSA_0$}{\emph{sr}-index0 and \emph{sr}-CSA0}}

We have just described our most space-efficient index, which we call $\SRIdx_0$.
Its space and time complexity is established in the next theorem.

\begin{theorem} \label{thm:srindex}
The $\SRIdx_0$ uses $\vnBWTRuns \cdot ((1+\epsilon)\log(\vsTColl / \vnBWTRuns) + \log\sigma + \Oh(1)) + \min(\vnBWTRuns, 2\lceil \vsTColl / (\vSampFac + 1)\rceil) \cdot 2\log \vsTColl$ bits of space, for any constant $\epsilon > 0$, and finds all the $\vnOcc$ occurrences of $\vPat[1..\vsPat]$ in $\vTColl$ in time $\Oh((1/\epsilon)(\vsPat + \vSampFac \cdot \vnOcc) \log(\sigma + \vsTColl / \vnBWTRuns))$.
\end{theorem}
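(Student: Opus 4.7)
The plan is to obtain both bounds by decomposing the $\SRIdx_0$ into its counting component, inherited verbatim from the practical $\RIdx$ (i.e.\ the sparse $\nameIdx{RLBWT}$ of Prezza~\cite{Prezza17:CCT}), and its subsampled locating component (the arrays $\SamplesL$, $\MarksF$, $\MapMarksF$, plus the auxiliary bitvector $\Removed$), and then charging each piece separately against the space and time claimed.

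For the space, I would first recall that the sparse $\nameIdx{RLBWT}$ accounts for exactly the term $\vnBWTRuns \cdot ((1+\epsilon)\log(\vsTColl/\vnBWTRuns) + \log\sigma + \Oh(1))$. To this I add the contribution of the locating structures: Lemma~\ref{lem:space} gives $\min(\vnBWTRuns, 2\lceil \vsTColl/(\vSampFac+1)\rceil) \cdot (2\log \vsTColl + \Oh(1))$ bits for $\SamplesL$, $\MarksF$ and $\MapMarksF$ combined, while $\Removed$ uses $\vnBWTRuns + \oh(\vnBWTRuns)$ bits~\cite{Cla96}. The latter contributes $\Oh(1)$ per $\BWT$-run and is absorbed into the $\Oh(1)$ term of the counting part, so summing yields exactly the stated space.

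For the time I would treat the counting and locating phases separately. Counting with toehold (Algorithm~\ref{alg:srindex_count_toehold}) performs $\vsPat$ backward steps, each of cost $\Oh((1/\epsilon)\log(\sigma+\vsTColl/\vnBWTRuns))$ on the sparse $\nameIdx{RLBWT}$; the toehold $\SA[\vep_{i'}]$ is recomputed afresh at most once, and in the worst case this costs $\Oh((1/\epsilon)\,\vSampFac\log(\sigma+\vsTColl/\vnBWTRuns))$ additional time since, by Lemma~\ref{lem:distance-s}, within $\vSampFac-1$ $\LF$-steps we must reach a non-removed sample. Locating (Algorithm~\ref{alg:srindex_locate}) visits each of the $\vnOcc$ occurrences with at most $\vSampFac$ $\LF$-steps: either a non-removed sample is found along the way, or, at depth $\vSampFac$, Lemmas~\ref{lem:sri-no-removed-samples} and~\ref{lem:sri-no-removed-marks} guarantee that a single evaluation of $\fnPhi$ via Eq.~\eqref{eq:rindex_phi} yields the correct $\SA$ value using $\predecessor(\MarksF,\cdot)$ in $\Oh(\log(\vsTColl/\vnBWTRuns))$ time. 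Summing the two phases gives the claimed $\Oh((1/\epsilon)(\vsPat + \vSampFac\cdot \vnOcc)\log(\sigma+\vsTColl/\vnBWTRuns))$.

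The main obstacle will be the amortization inside the recursive traversal that avoids reporting each position of a maximal $\BWT$-run in $[\vsp..\vep-1]$ individually: I must verify that the number of $\LF$-steps performed at each of the $\vSampFac$ recursion levels is proportional to the occurrences still pending, not to the run length, so the global work stays $\Oh(\vSampFac\cdot \vnOcc)$ steps. Once that charging is in place, and Lemmas~\ref{lem:sri-no-removed-samples}--\ref{lem:sri-no-removed-marks} are invoked precisely at depth $\vSampFac$ to justify the fallback to $\fnPhi$, combining these bounds with the per-step cost of the sparse $\nameIdx{RLBWT}$ is routine.
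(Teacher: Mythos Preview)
Your proposal follows essentially the same decomposition and argument as the paper's proof: split space into the sparse $\nameIdx{RLBWT}$ counting component plus the subsampled locating structures (via Lemma~\ref{lem:space}) plus $\Removed$ absorbed into the $\Oh(1)$ per-run term; split time into $\Oh(m+s)$ backward/toehold steps and $\Oh(s)$ steps per occurrence, each costing $\Oh((1/\epsilon)\log(\sigma+n/r))$.

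One subtlety you gloss over that the paper handles explicitly: after subsampling, $\MarksF$ has only $\Oh(\min(r,n/s))$ ones, so $\rank_1$ (and hence $\predecessor$) on it costs $\Oh(\log(n/\min(r,n/s)))=\Oh(\log(n/r+s))$, not the $\Oh(\log(n/r))$ you state. The paper closes this by noting that Eq.~\eqref{eq:rindex_phi} is invoked at most once per reported occurrence, so the extra $\Oh(\log(n/r+s))$ is dominated by the $\Oh(s\log(\sigma+n/r))$ already charged; your final bound is unaffected, but the step should be made explicit. Regarding your flagged obstacle on amortizing the recursive run-splitting traversal, the paper simply asserts $\Oh(s)$ steps per occurrence without a detailed charging argument, so you are being more careful than the original on that point.
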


\begin{proof}
The space is the sum of the counting structures of the $\RIdx$ and our modified locating structures, according to Lemma~\ref{lem:space}.
The space of bitvector $\Removed$ is $\Oh(\vnBWTRuns)$ bits, which is accounted for in the formula.

As for the time, we have seen that the modified backward search requires $\Oh(\vsPat)$ steps if $\vnOcc=0$ and $\Oh(\vsPat + \vSampFac)$ otherwise (Section~\ref{sec:problem1}).
Each occurrence is then located in $\Oh(\vSampFac)$ steps (Section~\ref{sec:problem2}).
In total, we complete the search with $\Oh(\vsPat + \vSampFac \cdot \vnOcc)$ steps.

Each step involves $\Oh((1/\epsilon)\log(\sigma + \vsTColl / \vnBWTRuns))$ time in the basic $\RIdx$ implementation, including Eq.~\eqref{eq:rindex_phi}.
Our index includes additional $\rank$s on $\Start$ and other constant-time operations, which are all in $\Oh(\log(\vsTColl / \vnBWTRuns))$.
Since $\MarksF$ now has $\Oh(\min(\vnBWTRuns, \vsTColl / \vSampFac))$ $1$s, however, operation $\rank_1$ on it takes time $\Oh(\log (\vsTColl / \min(\vnBWTRuns, \vsTColl / \vSampFac))) =
\Oh(\log\max(\vsTColl / \vnBWTRuns, \vSampFac)) = \Oh(\log(\vsTColl / \vnBWTRuns + \vSampFac))$.
Yet, this $\rank$ is computed only once per occurrence reported, when using Eq.~\eqref{eq:rindex_phi}, so the total time per occurrence is still $\Oh(\log(\vsTColl / \vnBWTRuns + \vSampFac) + \vSampFac \cdot \log(\sigma + \vsTColl / \vnBWTRuns)) = \Oh(\vSampFac \cdot \log(\sigma + \vsTColl / \vnBWTRuns))$.
\end{proof}

Note that, in asymptotic terms, the $\SRIdx$ is never worse than the $\RLFMIdx$ with the same value of $\vSampFac$ and, with $\vSampFac=1$, it boils down to the $\RIdx$.
Using predecessor data structures of the same asymptotic space of our lighter sparse bitvectors, the logarithmic times can be reduced to loglogarithmic~\cite{GagieNP20:FFS}, but our focus is on low practical space usage.

A similar analysis, combined with Theorem~\ref{thm:rcsa_practical}, yields the analogous result for the $\SRCSA$.

\begin{theorem} \label{thm:srcsa}
The $\SRCSA_0$ uses
\begin{equation*}
  \vnBWTRuns \cdot \big((2\log(\vsTColl / \vnBWTRuns) + \log\sigma)(1 + \epsilon) + \Oh((\log\vsTColl) / B) + \Oh(1)\big) + \min(\vnBWTRuns, 2\lceil \vsTColl / (\vSampFac + 1)\rceil) \cdot 2\log \vsTColl
\end{equation*}
bits of space, for any constant $\epsilon, B > 0$, and finds all the $\vnOcc$ occurrences of $\vPat[1..\vsPat]$ in $\vTColl$ in time $\Oh(\vsPat (\log\vnBWTRuns + B) + \vnOcc (\vSampFac (\log\vnBWTRuns + B) + \log\vsTColl))$.
\end{theorem}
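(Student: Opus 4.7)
The plan is to mirror the proof of Theorem~\ref{thm:srindex} for $\SRIdx_0$, but with the $\RCSA$'s practical structures (Theorem~\ref{thm:rcsa_practical}) replacing those of the $\RIdx$, and the roles of $\LF/\fnPhi$ replaced by $\Psi/\fnIPhi$. First I would split the total space into two contributions: the counting component, which is inherited unchanged from the practical $\RCSA$ and accounts for $\vnBWTRuns\cdot((2\log(\vsTColl/\vnBWTRuns)+\log\vsAlph)(1+\epsilon)+\Oh((\log\vsTColl)/B)+\Oh(1))$ bits (absorbing the $\Oh(\sigma\log n)$ and $\delta$-code overheads into the $\epsilon$ factor and the $B$-block samples), and the locating component, which is the symmetric version of Lemma~\ref{lem:space}. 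The argument in Lemma~\ref{lem:space} is essentially structural and only uses that each retained sample contributes $\Oh(\log n)$ bits to $\SamplesF$, to the compressed $\MarksL$ bitvector, and to $\MapMarksL$; Lemma~\ref{lem:distance-s}, whose proof is symmetric under the backward iteration used in the $\SRCSA$ construction, then bounds the number of remaining samples by $\min(\vnBWTRuns,2\lceil\vsTColl/(\vSampFac+1)\rceil)$, giving the claimed $2\log\vsTColl$-bit contribution per retained sample. The extra bitvector $\Removed$ adds $\Oh(\vnBWTRuns)$ bits, absorbed into the $\Oh(1)$ term.

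For the time bound, I would reuse the step-count analysis of Sections~\ref{sec:problem1} and~\ref{sec:problem2}: the counting-with-toehold procedure uses $\Oh(\vsPat)$ backward steps plus $\Oh(\vSampFac)$ additional $\Psi$-steps triggered at most once (to recover the initial toehold $\SA[sp]$), and each located occurrence costs at most $\vSampFac$ forward $\Psi$-probes before either finding a remaining sample or, by the symmetric counterparts of Lemmas~\ref{lem:sri-no-removed-samples} and~\ref{lem:sri-no-removed-marks}, safely falling back on Eq.~\eqref{eq:rcsa_iphi}. Hence the total number of elementary operations is $\Oh(\vsPat+\vSampFac\cdot\vnOcc)$.

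I would then charge each elementary operation its cost using Theorem~\ref{thm:rcsa_practical}: the backward search step and every access to $\Psi$ cost $\Oh(\log\vnBWTRuns+B)$, which yields the $\Oh(\vsPat(\log\vnBWTRuns+B))$ counting term and the $\Oh(\vnOcc\cdot\vSampFac(\log\vnBWTRuns+B))$ component of the locating term. The remaining $\Oh(\vnOcc\log\vsTColl)$ summand accounts for the once-per-occurrence invocation of Eq.~\eqref{eq:rcsa_iphi}, which requires a $\successor$/$\rank$ query on the subsampled $\MarksL$; since $\MarksL$ now has only $\Oh(\min(\vnBWTRuns,\vsTColl/\vSampFac))$ ones, this query costs $\Oh(\log(\vsTColl/\vnBWTRuns+\vSampFac))=\Oh(\log\vsTColl)$ using the compressed bitvector representation of~\cite{OS07}, and a $\Oh(1)$ lookup in $\MapMarksL$ and $\SamplesF$. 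Summing the counting and locating contributions gives exactly the claimed overall time.

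The main obstacle I anticipate is not any single estimate but keeping the bookkeeping tight in two places: (i) justifying that the $\Oh(\vSampFac)$ additive step in the toehold phase is paid only once and not per backward step, which relies on the optimization described after Lemma~\ref{lem:sri-no-removed-marks} (only the smallest index $i'$ where $\vPat[i']\neq\BWT[\vep_{i'+1}]$ triggers the expensive computation); and (ii) verifying that the symmetric versions of Lemmas~\ref{lem:sri-no-removed-samples} and~\ref{lem:sri-no-removed-marks} indeed hold when the subsampling is performed in backward order on $\SamplesF$, so that the fallback to $\fnIPhi$ via Eq.~\eqref{eq:rcsa_iphi} after $\vSampFac$ failed $\Psi$-probes always returns the correct $\SA[j+1]$. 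Once these two points are verified, the remaining calculations are a direct transliteration of the proof of Theorem~\ref{thm:srindex}.
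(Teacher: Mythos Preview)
Your proposal is correct and follows essentially the same route as the paper: the paper's proof is extremely terse, simply stating that one carries out $\Oh(\vsPat+\vSampFac\cdot\vnOcc)$ steps at cost $\Oh(\log\vnBWTRuns+B)$ each, plus the $\Oh(\log(\vsTColl/\vnBWTRuns+\vSampFac))$ per-occurrence cost from the subsampled $\MarksL$, and then absorbs the latter into the $\log\vsTColl$ term. Your two flagged obstacles (the single $\Oh(\vSampFac)$ toehold charge and the symmetric subsampling lemmas) are exactly the points the paper waves through with ``a similar analysis'' and ``a straightforward exercise''; your more explicit bookkeeping is sound and matches the intended argument.
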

\begin{proof}
We carry out $\Oh(m+s\cdot occ)$ steps, each of which computes $\Psi$ at cost $\Oh(\log r + B)$. This yields total time $\Oh((\vsPat + \vSampFac \cdot \vnOcc)(\log\vnBWTRuns + B) + \vnOcc \log(\vsTColl / \vnBWTRuns + \vSampFac))$, which is simplified to the one given once we put together all the terms that are multiplied by $\vnOcc$.
\end{proof}

It should be noted that Theorems~\ref{thm:srindex} and \ref{thm:srcsa} can be obtained by simply choosing the smallest between the $\RIdx$ and the $\RLFMIdx$, or $\RCSA$ and $\RLCSA$, respectively.
In practice, however, the \emph{sr}-indexes perform significantly better than both extremes, providing a smooth transition that retains sparsely indexed areas of $\vTColl$ while removing redundancy in oversampled areas.
This will be demonstrated in Section~\ref{sec:result}.

\subsection{Faster and larger \emph{sr}-indexes, \texorpdfstring{$\SRIdx_1$ and $\SRCSA_1$}{\emph{sr}-index1 and \emph{sr}-CSA1}}

The $\SRIdx_0$ and the $\SRCSA_0$ guarantee locating time proportional to $\vSampFac$.
%and use almost no extra space.
%On the other hand, on Problem 2
To do this, however, they perform up to $\vSampFac$ $\LF$-steps or $\Psi$-steps to locate {\em every} occurrence, even when this turns out to be useless.
The $\SRIdx_1$ variant adds a new small component to speed up some cases:

\begin{description}
\item[{\rm $\ValidF$}:] a bitvector storing one bit per (remaining) mark in text
order, so that $\ValidF[q]=0$ iff there were removed values between the $q$th
and the $(q+1)$th $1$s of $\MarksF$.
\end{description}

With this bitvector, if we have $i=\SA[j]-1$ and $\ValidF[\rank_1(\MarksF, i)]=1$, we know that there were no removed values between $i$ and $\predecessor(\MarksF, i)$ (even if they are  less than $\vSampFac$ positions apart).
In this case we can skip the computation of $\LF^k(j-1)$ of $\SRIdx_0$, and directly use Eq.~\eqref{eq:rindex_phi}.
Otherwise, we must proceed exactly as in $\SRIdx_0$ (where it is still possible that we compute all the $\LF$-steps unnecessarily).
More precisely, this can be tested for every value between $\vSPosMRun$ and $\vEPosMRun$ so as to report some further cells before recursing on the remaining ones, in line~\ref{line-alg:srindex_locate:call-locate_rec} of Algorithm~\ref{alg:srindex_locate}.

The $\SRCSA_1$ index employs the analogous bitvector $\ValidL$ to ascertain whether values between $i$ and $\successor(\MarksL, i)$ have been removed.
In this instance, $\ValidL[q] = 0$ indicates that one or more elements were removed between the $(q-1)$th and $q$th $1$s of $\MarksL$.

The space and worst-case complexities of Theorems~\ref{thm:srindex} and~\ref{thm:srcsa} are preserved in $\SRIdx_1$ and $\SRCSA_1$.

\subsection{Even faster and larger, \texorpdfstring{$\SRIdx_2$ and $\SRCSA_2$}{\emph{sr}-index2 and \emph{sr}-CSA2}}

Our second variants, $\SRIdx_2$ and $\SRCSA_2$, add a second and significantly larger structure:

\begin{description}
\item[{\rm $\ValidAreaF$}:] an array whose cells are associated with the $0$s in $\ValidF$.
If $\ValidF[q]=0$, then $d = \ValidAreaF[\rank_0(\Valid,q)]$ is the distance from the $q$th $1$ in $\MarksF$ to the next removed value.
Each entry in $\ValidAreaF$ requires $\lceil \log \vSampFac \rceil$ bits, because removed samples must be at distance less than $\vSampFac$ from their preceding sample, by Lemma~\ref{lem:distance-s}.
\end{description}

If $\ValidF[\rank_1(\MarksF,i)]=0$, then there was a removed sample at $\predecessor(\MarksF,i)+d$, but not before.
So, if $i < \predecessor(\MarksF,i)+d$, we can still use Eq.~\eqref{eq:rindex_phi}; otherwise we must compute the $\LF$-steps $\LF^k(j-1)$ and we are guaranteed to succeed in less than $\vSampFac$ steps.
This improves performance considerably in practice, though the worst-case time complexity stays as in Theorem~\ref{thm:srindex} (and~\ref{thm:srcsa}) and the space increases by at most $\vnBWTRuns \log \vSampFac$ bits.

The $\SRCSA_2$ is based on the same fundamental concept but employs the array $\ValidAreaL$.
Let $\ValidL[q]=0$ and $\ValidAreaL[\rank_0(\ValidL, q)] = d$, for any text position $i$ such that $q = \rank_1(\MarksL, \successor(\MarksL, i))$:
If $i > \successor(\MarksL, i)-d$, the $\fnIPhi$ function (Eq.~\ref{eq:rcsa_iphi}) can be employed;
otherwise, up to $\vSampFac$ $\Psi$-steps must be executed from the corresponding $\SA$ position $j+1$ .

\section{Experimental Results} \label{sec:result}

In this section, we evaluate the performance of the proposed indexing schemes: $\RCSA$, $\SRCSA$, and $\SRIdx$.
These algorithms are implemented using C++17 and the SDSL library\footnote{Available at \url{https://github.com/simongog/sdsl-lite}.}.
The code is publicly available on GitHub (\url{https://github.com/duscob/sr-index}).

We compare our proposed solutions against existing implementations of the $\RIdx$, $\RLCSA$, and other popular indexing methods designed for repetitive text collections.
We aim to assess the effectiveness of the $\SRCSA$ and $\SRIdx$ approaches, particularly in terms of space usage and search speed, across various datasets.

The tests were conducted on a computer with two Intel Xeon processors (Silver 4110 at $2.10$ GHz) and $736$ GB of RAM\@.
The operating system was Debian Linux (version \texttt{5.10.0-0.deb10.16-amd64}).
We compiled the code with the highest optimization settings and disabled multithreading for consistency.

To ensure reliable results, we measured the average user time needed to perform several searches for random patterns of lengths 10, 20, and 30 characters in different text collections.
We report space usage in bits per symbol (bps) and search times in microseconds per occurrence ($\mu$s/occ).
We note that some indexing methods might not be suitable for all text collections or might require excessive space or time to build.
Those methods are excluded from the corresponding graphs in the section of results.

\subsection{Tested indexes}

We include the following indexes in our benchmark; their space decrease as $s$ grows. We chose the parameter range so as to cover the interesting space-time tradeoffs. In particular, larger values of $s$ for the \emph{sr}-indexes only increase the time without significantly reducing the space any further.

\begin{description}
\item[\textsf{\emph{r}-csa}:] Our index implementation, using block size $B=64$ (also for \SRCSA).

\item[\textsf{\emph{sr}-csa}:] Our index, including the three variants, with sampling values $s=4, 8, 16, 32, 64$.

\item[$\SRIdx$:] Our index, including the three variants, with sampling values $s=4, 8, 16, 32, 64$.

\item[$\RIdx$:] The $\RIdx$ implementation we build on.\footnote{From \url{https://github.com/nicolaprezza/r-index}.}

\item[\textsf{rlcsa}:] An implementation of the run-length CSA~\cite{MakinenNSV10:SRH}, which outperforms the actual $\RLFMIdx$ implementation.\footnote{From \url{https://github.com/adamnovak/rlcsa}.}
We use text sampling values $s=\vsTColl / \vnBWTRuns \cdot f / 8$, with $f=8, 10, 12, 14, 16$.

\item[\textsf{csa}:] An implementation of the CSA~\cite{Sad03}, which outperforms in practice the $\FMIdx$~\cite{FerraginaM05:ICT,FerraginaMMN07:CRS}.
This index, obtained from SDSL, acts as a control baseline that is not designed for repetitive collections. %\footnote{From {\tt https://github.com/simongog/sdsl-lite}.}
 %We replaced the compressed bitvectors by plain ones, which in this case yields better space and time.
We use text sampling parameter $s=16, 32, 64, 128$.

\item[$\GIdx$:] The best grammar-based index implementation we are aware of~\cite{CNP21}.\footnote{From \url{https://github.com/apachecom/grammar\_improved\_index}.}
We use Patricia tree sampling values $s=4, 16, 64$.

\item[$\LZIdx$ and $\LZEndIdx$:] Two variants of the Lempel--Ziv based index~\cite{KreftN13:CIR}.\footnote{From \url{https://github.com/migumar2/uiHRDC}.}

\item[$\HybridIdx$:] A hybrid between a Lempel--Ziv and a $\BWT$-based index~\cite{FKP18}.\footnote{From \url{https://github.com/hferrada/HydridSelfIndex}.}
We build it with parameters $M = 8,16$, which are the best for this case. %(the larger $M$, the larger the index).
\end{description}

\subsection{Collections}

We benchmark various repetitive text collections; Table~\ref{tab:collections} gives some basic measures on them.

\begin{description}
\item[PizzaChili:] A generic collection of real-life texts of various sorts and repetitiveness levels, which we use to obtain a general idea of how the indexes compare.
We use 4 collections of microorganism genomes (\textsf{influenza}, \textsf{cere}, \textsf{para}, and \textsf{escherichia}) and 4 versioned document collections (the English version of \textsf{einstein}, \textsf{kernel}, \textsf{worldleaders}, \textsf{coreutils}).\footnote{From \url{http://pizzachili.dcc.uchile.cl/repcorpus/real}.}

\item[Synthetic DNA:] A 100KB DNA text from PizzaChili, replicated $1{,}000$ times and each copied symbol mutated with a probability from $0.001$ (\textsf{DNA-001}, analogous to human assembled genomes) to $0.03$ (\textsf{DNA-030}, analogous to sequence reads).
We use this collection to study how the indexes evolve as repetitiveness decreases.

\item[Real DNA:] Some real DNA collections to study the performance on more massive data:
\begin{description}
%\item[\textsf{HLA}:] A dataset with copies of the short arm (p arm) of human chromosome 6~\cite{RBGCFM19}.\footnote{From \url{ftp://ftp.ebi.ac.uk/pub/databases/ipd/imgt/hla/fasta/hla\_gen.fasta}.}
%This arm contains about 60 million base pairs (Mbp) and it includes the 3 Mbp HLA region.
%That region is known to be highly variable, so the $\RIdx$ sampling should be sparse for most of the arm and oversample the HLA region.

\item[\textsf{Chr19}:] Human assembled genome collections of about 55 billion base pairs, concretely the set of 1{,}000 chromosome 19 genomes taken from the 1000 Genomes Project~\cite{1000genomes}.
\item[\textsf{Salmonella}:] Bacterial assembled genome collections of about 70 billion base pairs, concretely the set of 14{,}609 genomes from the GenomeTrakr project~\cite{stevens2017public}.

%\item[\textsf{Reads}:] A large collection of sequence reads, which tend to be considerably less repetitive than assembled genomes.\footnote{From \url{https://trace.ncbi.nlm.nih.gov/Traces/sra/?run=ERR008613}.}
%We include this collection to study the behavior of the indexes on a popular kind of bioinformatic collection with mild repetitiveness.
%In \textsf{Reads} the sequencing errors have been corrected, and thus its $n/r \approx 9$ is higher than the $n/r \approx 4$ reported on crude reads~\cite{DN21}.
\end{description}
\end{description}

\begin{table}[t]
  \begin{center}
    \begin{tabular}{l|r|r || l|r|r}
      Collection & Size & $n/r$ & Collection & Size & $n/r$ \\
      \hline
      \textsf{influenza} & 147.6 & 51.2 &
      \textsf{DNA-001} & 100.0 & 142.4 \\
      \textsf{cere} & 439.9 & 39.9 &
      \textsf{DNA-003} & 100.0 & 58.3 \\
      \textsf{para} & 409.4 & 27.4 &
      \textsf{DNA-010} & 100.0 & 26.0 \\
      \textsf{escherichia} & 107.5 & 7.5 &
      \textsf{DNA-030} & 100.0 & 11.6 \\
      \hline
      \textsf{einstein} & 447.7 & 1{,}611.2 &
      \textsf{Chr19} & 56{,}386.1 & 1{,}287.4 \\
      \textsf{kernel} & 238.0 & 92.4 &
      \textsf{Salmonella} & 70{,}242.6 & 47.0 \\
      \textsf{worldleaders} & 44.7 & 81.9 &
%      \textsf{HLA} & 53.7 & 161.4 \\
      & & \\
      \textsf{coreutils} & 195.8 & 43.8 &
%      \textsf{Reads} & 2{,}565.5 & 8.9 \\
      & & \\
    \end{tabular}
  \end{center}
  \caption{Basic characteristics of the repetitive texts used in our benchmark. Size is given in MB.}
  \label{tab:collections}
\end{table}

For each dataset, we randomly selected 500 patterns of three different sizes (10, 20, and 30 characters), mixed in a single set.
To ensure the stability and reliability of our results, we identified and removed outliers from the collected patterns using the interquartile range (IQR) method.
This process resulted in a final set of 404 to 465 patterns per dataset.

\subsection{Results}\label{subsec:results}

To simplify our analysis and later comparisons between the indexing methods, we first investigated the effectiveness of different variants within our proposed $\SRIdx$ and $\SRCSA$ methods.

\begin{figure}[t]
  \begin{center}
    \includegraphics[width=0.45\textwidth]{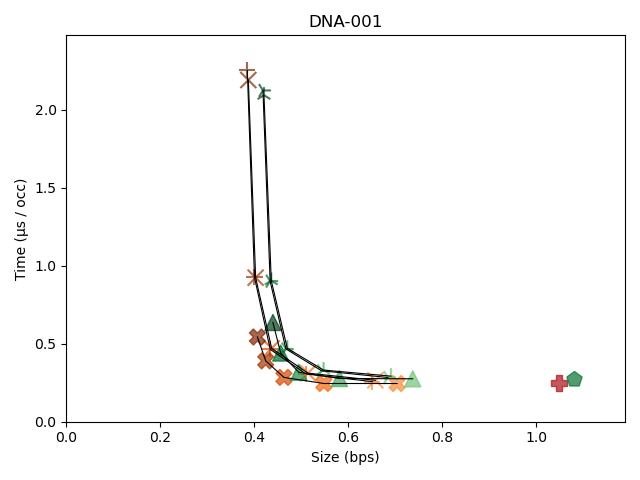}
    \includegraphics[width=0.45\textwidth]{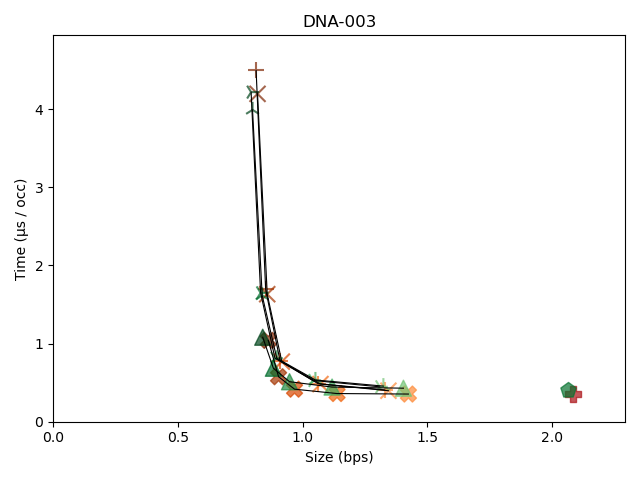}
    \includegraphics[width=0.45\textwidth]{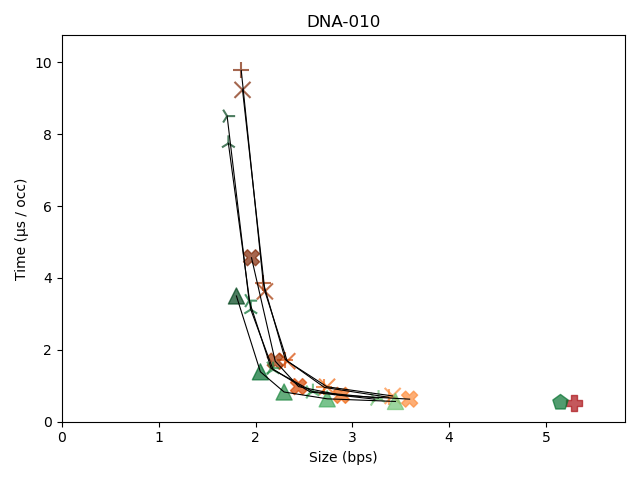}
    \includegraphics[width=0.45\textwidth]{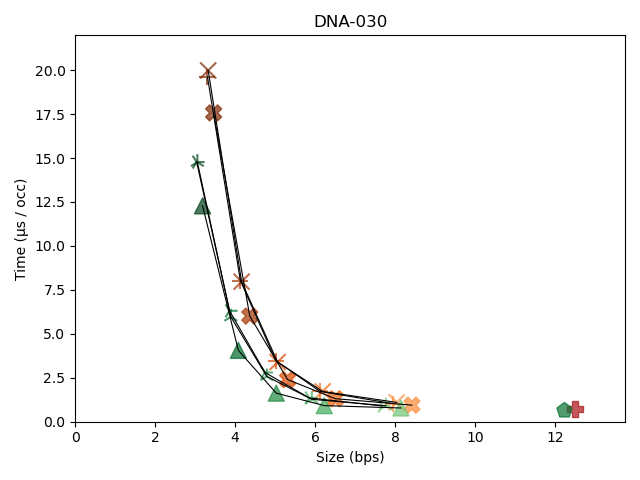}

    \includegraphics[width=0.95\textwidth]{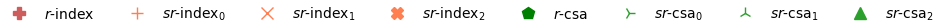}
  \end{center}
  \caption{
    Space-time tradeoffs of $\SRIdx$ and $\SRCSA$ variants on the synthetic DNA collections.
  }
  \label{fig:exp-dna-synthetic-sri}
\end{figure}

Our initial experiments focus on \nameColl{synthetic DNA} datasets; the results are shown in Figure~\ref{fig:exp-dna-synthetic-sri}.
The largest variant stands out as the best choice for both $\SRIdx$ and $\SRCSA$ in terms of balancing space efficiency and search speed.
When the sampling factor $\vSampFac$ is small, all three variants within each method exhibit similar performance.
However, as we increase the parameter $\vSampFac$, the number of samples decreases at a regular rate, but a significant difference in locating speed emerges:
$\nameIdx{\emph{sr}-index}_{2}$ and $\nameIdx{\emph{sr}-csa}_{2}$ demonstrate a substantial improvement in search time compared to the other two, while the required space remains comparable across all variants.

These findings show that the extra data we associate with samples in the third variant has a minor impact in space.
Yet, this additional sample validity information plays a crucial role in how fast the index can locate pattern occurrences.
Given these results, we will simply refer to the third variants as $\SRIdx$ and $\SRCSA$, and use them for the following comparisons.

We also note that, while the $\RIdx$ and the $\RCSA$ are almost indistinguishable, there is some difference in the subsampled variants: the $\SRIdx$ performs better with higher repetitiveness and the $\SRCSA$ stands out with lower repetitiveness, meeting at a mutation probability of 0.003.

\begin{figure}[t]
  \begin{center}
    \includegraphics[width=0.45\textwidth]{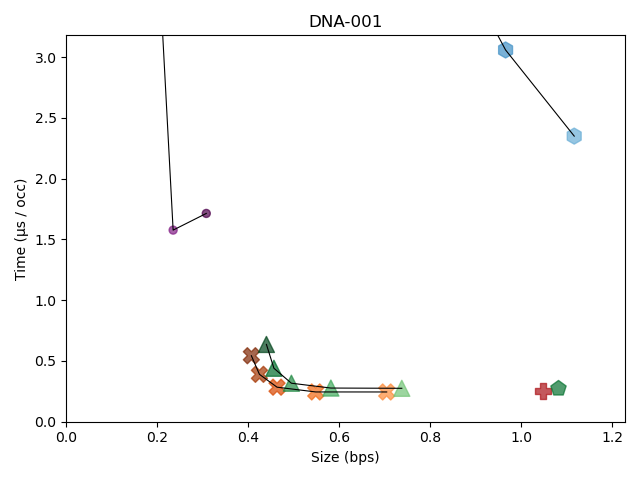}
    \includegraphics[width=0.45\textwidth]{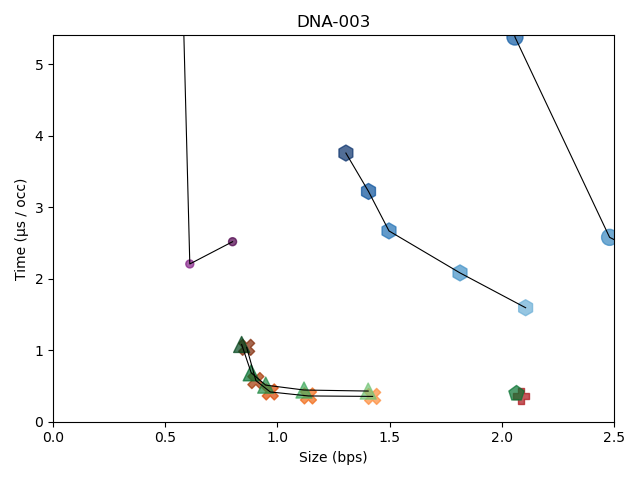}
    \includegraphics[width=0.45\textwidth]{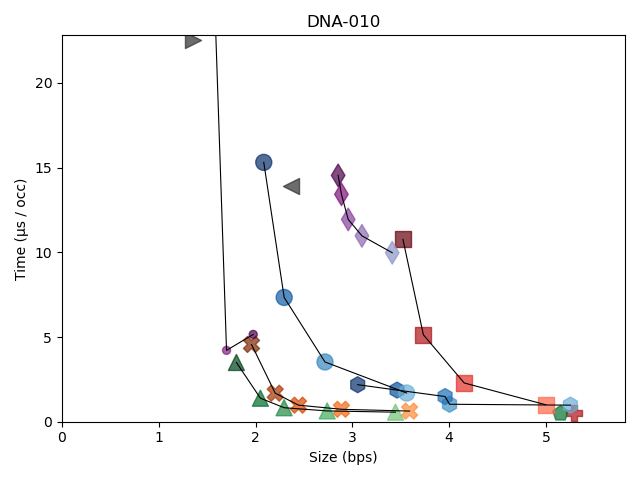}
    \includegraphics[width=0.45\textwidth]{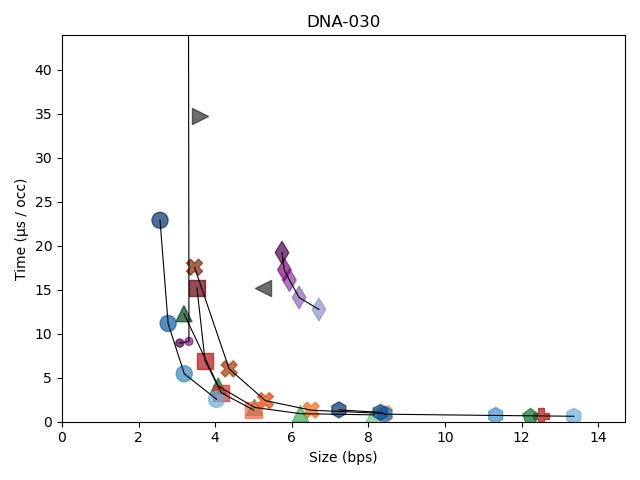}

    \includegraphics[width=0.75\textwidth]{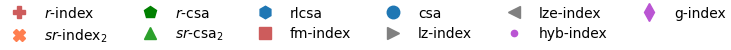}
  \end{center}
  \caption{Space-time tradeoffs on the synthetic DNA collections.}
  \label{fig:exp-dna-synthetic}
\end{figure}

Figure~\ref{fig:exp-dna-synthetic} includes the other state-of-the-art solutions in the comparison.
It can be seen that the $\RIdx$ and our $\RCSA$ are significantly faster than the others on highly repetitive scenarios. They are only matched by the $\FMIdx$ and the $\RLCSA$, which are not designed for repetitive texts, when the mutation probability reaches 0.01 (and the average run length $n/r$ approaches 25).
%For text collections with larger alphabets, like the natural language datasets of \nameColl{PizzaChili} shown in Figure~\ref{fig:exp-pizzachili-doc}, the $\RCSA$ method typically requires less space than $\RIdx$, except for the Kernel collection.
%This improvement is also seen in a some DNA-based collections such as \nameColl{DNA-010} and \nameColl{Influenza}.
%In terms of search time, there is a minor difference between the two solutions.
%For certain collections (e.g., \nameColl{DNA-001}, \nameColl{Coreutils}), $\RIdx$ might be slightly faster.
%Overall, though, our $\RCSA$ outperforms all previous methods that rely on Compressed Suffix Arrays for indexing.

The dominant indexes, however, are our main contributions: the $\SRIdx$ and the $\SRCSA$. The figures show that they can be almost as fast as the $\RIdx$ and $\RCSA$, while using less than half their space. They sharply dominate the space-time tradeoff map, even with mutation probability as high as 0.01, sweeping out all previous solutions based on the BWT, on $\Psi$, on grammars, and on Lempel-Ziv. The only alternative that stays in the Pareto curve is the \HybridIdx, which in the most repetitive texts can use up to half the space, yet at the price of being an order of magnitude slower. Only when the mutation rate reaches 0.03 and the average run length approaches 10, our \emph{sr}-indexes finally yield to the $\CSA$.

Therefore, as promised, we are able to remove a significant degree of redundancy in the {\em r}-indexes without sacrificing their outstanding time performance.

\begin{figure}[t]
  \begin{center}
    \includegraphics[width=0.45\textwidth]{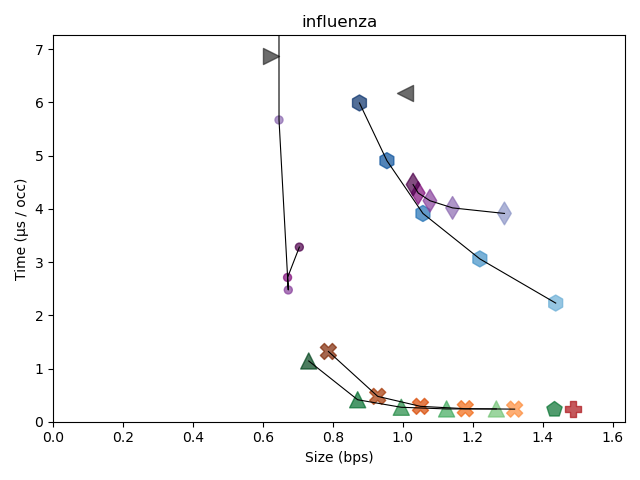}
    \includegraphics[width=0.45\textwidth]{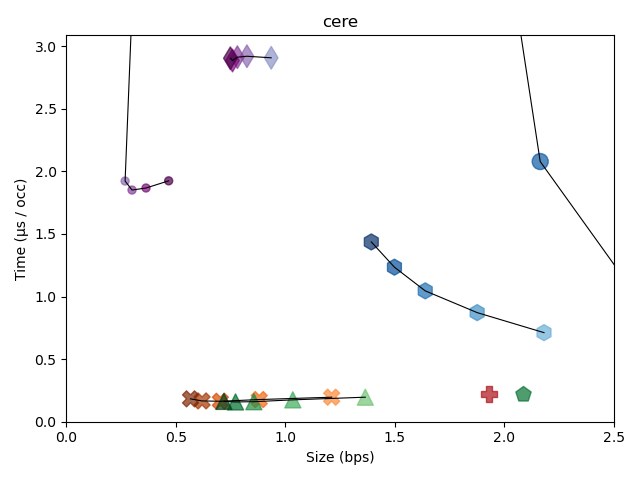}
    \includegraphics[width=0.45\textwidth]{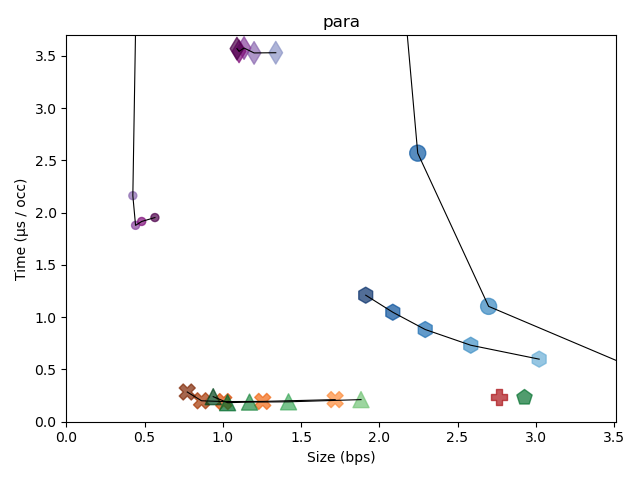}
    \includegraphics[width=0.45\textwidth]{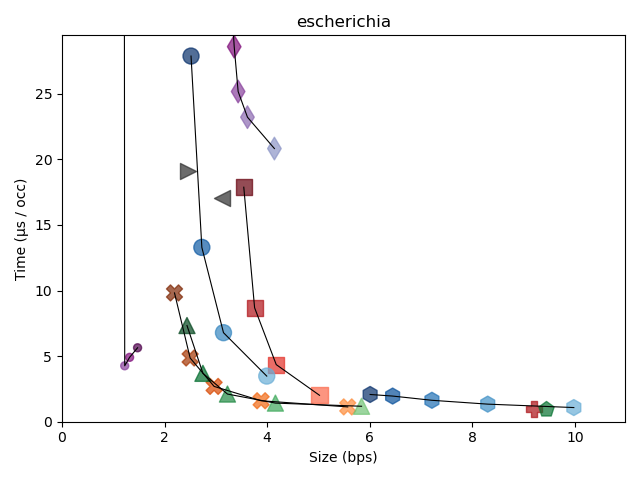}

    \includegraphics[width=0.75\textwidth]{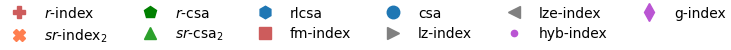}
  \end{center}
  \caption{Space-time tradeoffs on the genome PizzaChili collections.}
  \label{fig:exp-pizzachili-dna}
\end{figure}

\begin{figure}[tp]
  \begin{center}
    \includegraphics[width=0.90\textwidth]{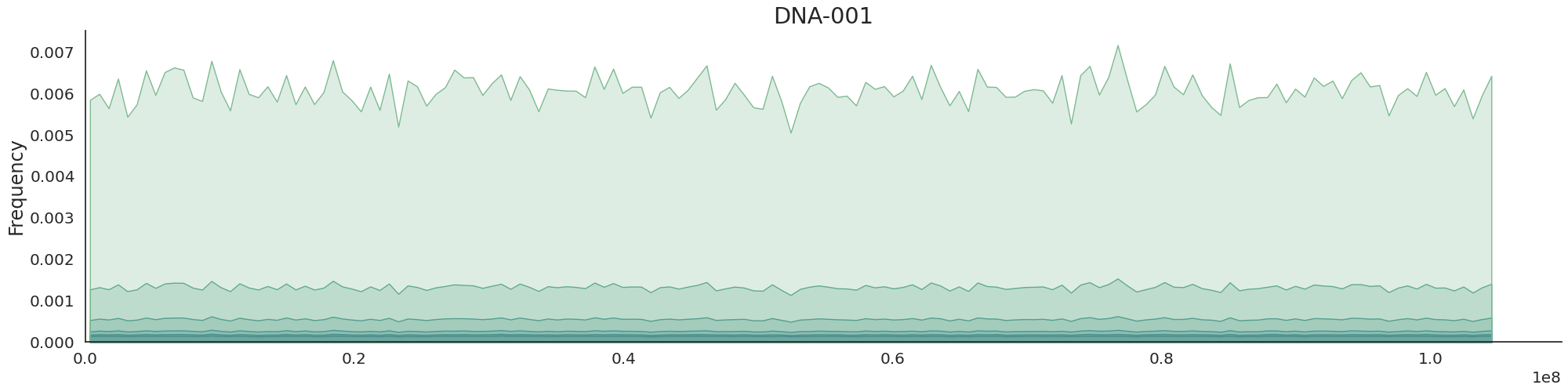}
    \includegraphics[width=0.90\textwidth]{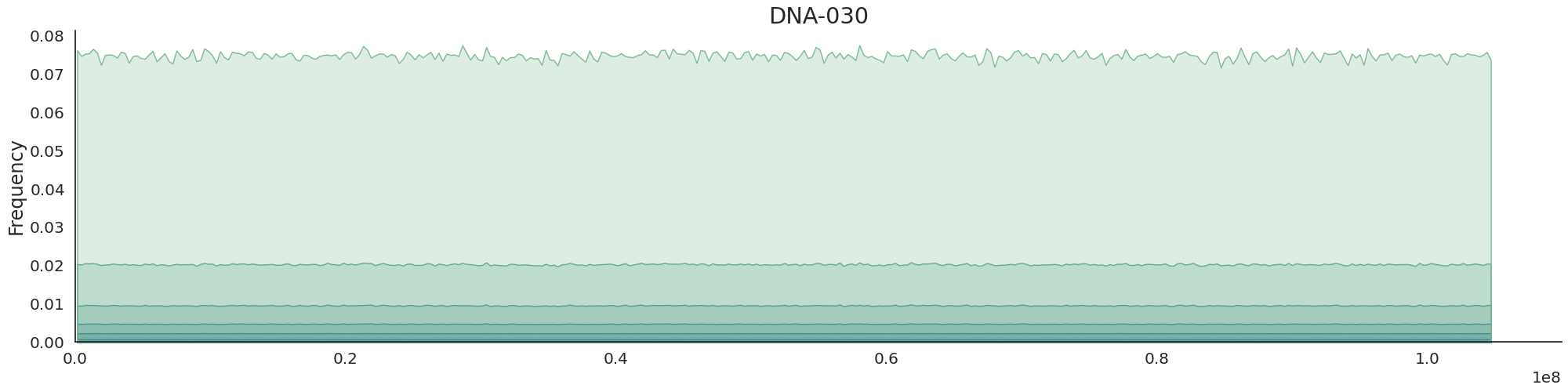}
    \includegraphics[width=0.90\textwidth]{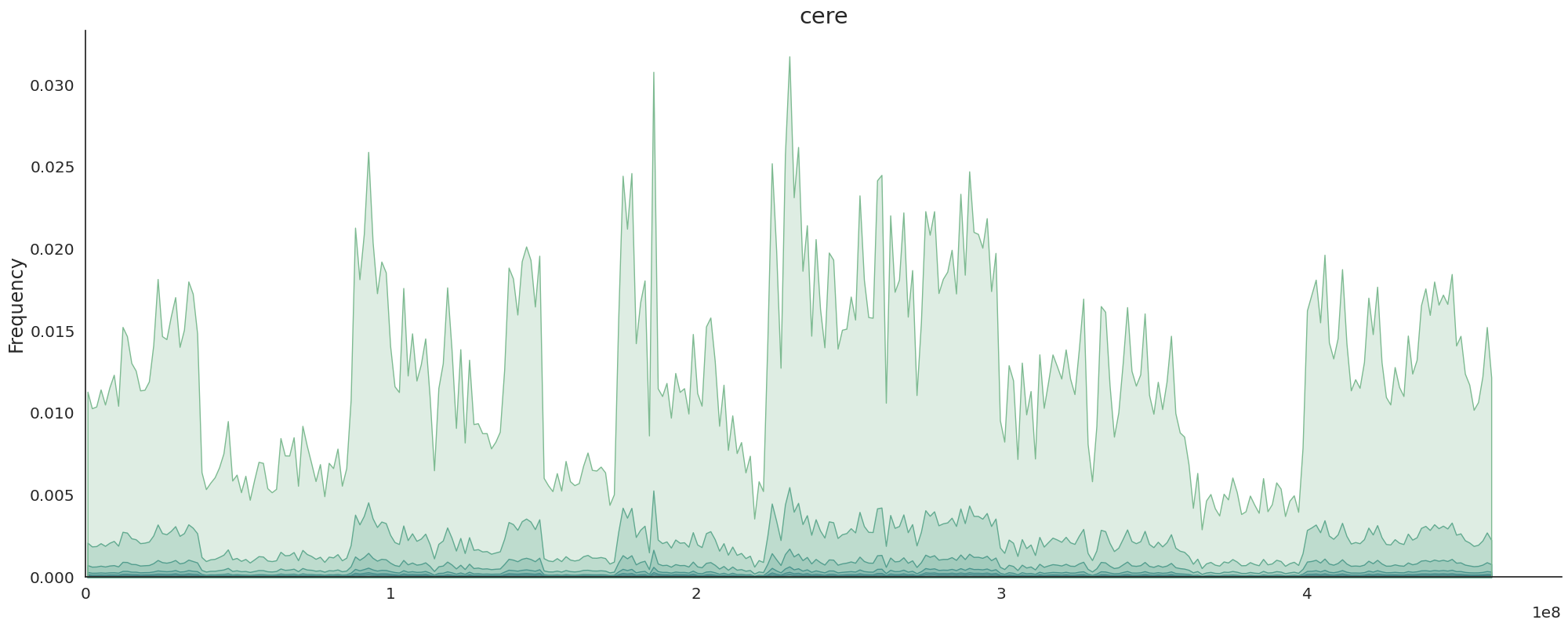}
    \includegraphics[width=0.90\textwidth]{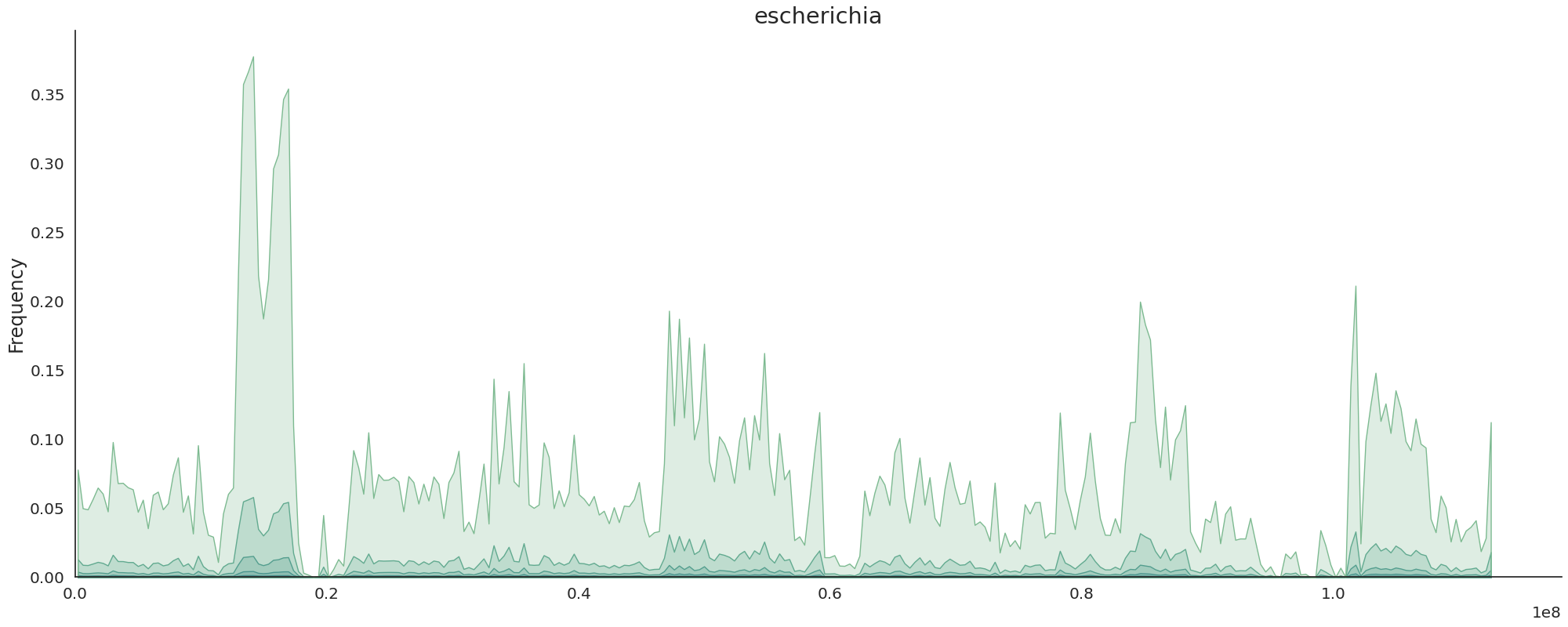}
  \end{center}
  \caption{
    Distribution of original and subsampled $\BWT$-run heads within the synthetic DNA datasets and Pizza-Chili repetitive texts.
    The $x$-axis represents the positions of these run heads along the text. The $y$-axis indicates how frequently $\BWT$-run heads appear at different locations. The original distributions are shown with the lightest color, and darker colors are used for subsampling with larger values of $s$.
  }
  \label{fig:sampling-frequencies}
\end{figure}

Figure~\ref{fig:exp-pizzachili-dna} %and~\ref{fig:exp-pizzachili-doc}
shows the performance on the real-life genome collections of \nameColl{PizzaChili}. The situation is now more varied, for example the \emph{sr}-indexes now retain the performance of their corresponding \emph{r}-indexes while using $1.5$--$4.0$ times their space. The $\SRIdx$ outperforms the $\SRCSA$ in some texts and loses to it on others, independently of the average run length of the collections. Most interestingly, our \emph{sr}-indexes perform even {\em better} than their corresponding \emph{r}-indexes and \emph{rl}-indexes in those real texts, arguably because they exploit better the non-uniform distributions of the samples in the text. For example, none of our preceding indexes based on the BWT or $\Psi$ matches our \emph{sr}-indexes even on \nameColl{Escherichia}, where the average run length is around 7.5, while the $\CSA$ outperformed our subsampled indexes on the \textsf{synthetic DNA} collections with mutation probability 0.03 and average run length around 11. The $\HybridIdx$ still belongs to the Pareto curve and also seems to benefit from non-uniformity: it now clearly outperforms our \emph{sr}-indexes on \nameColl{Escherichia}, the text with the least repetitiveness.

Figure~\ref{fig:sampling-frequencies} illustrates this phenomenon. It shows the distribution of run heads in text order on some synthetic and real texts, and how many samples survive for increasing values of $s$. Note how the distribution of samples on the synthetic texts (even if coming from random mutations over an actual DNA text) are uniform and very different from the distributions on real texts. Note also how, as $s$ grows, the samples decrease much faster on the denser areas and make the distributions tend to uniform (see the darkest areas on the real texts). For example, for $s=4$, the number of samples roughly reduce to $1/4$ on the synthetic texts, while they reduce to about a $1/7$ in the densest areas of the real texts. Our worst-case analysis better reflects the uniform case, but the subsampling is much more effective on the non-uniform histograms.

\begin{figure}[t]
  \begin{center}
    \includegraphics[width=0.45\textwidth]{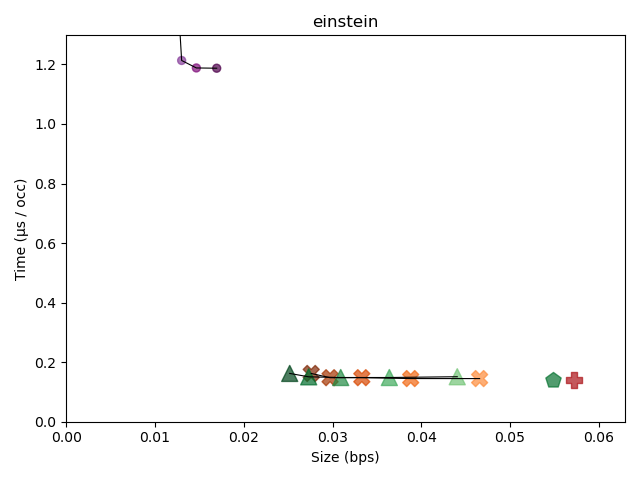}
    \includegraphics[width=0.45\textwidth]{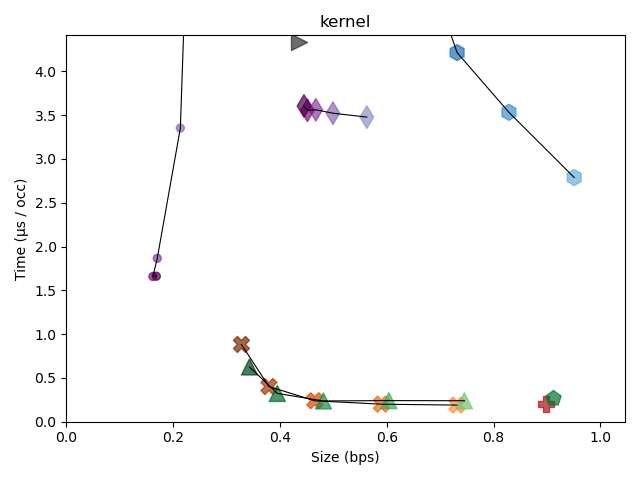}
    \includegraphics[width=0.45\textwidth]{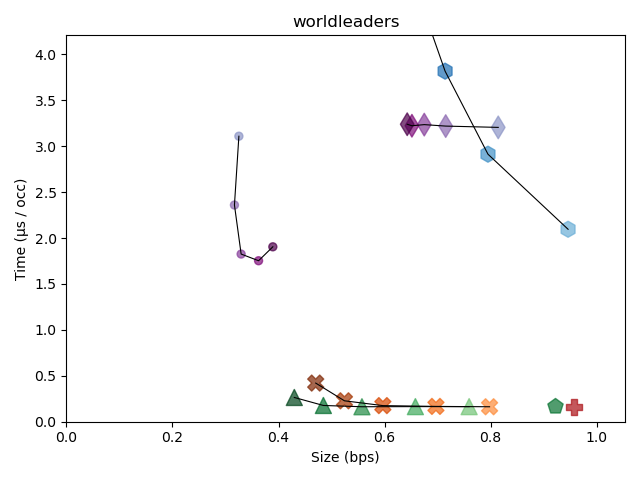}
    \includegraphics[width=0.45\textwidth]{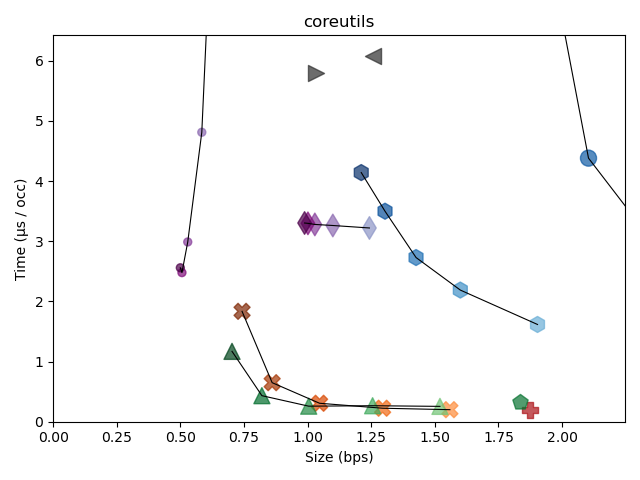}

    \includegraphics[width=0.60\textwidth]{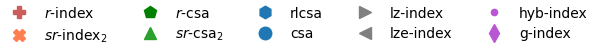}
  \end{center}
  \caption{Space-time tradeoffs on the document PizzaChili collections.}
  \label{fig:exp-pizzachili-doc}
\end{figure}

Figure~\ref{fig:exp-pizzachili-doc} shows the case of other repetitive collections in \nameColl{PizzaChili}. In these collections with larger alphabets, the $\SRCSA$ generally outperforms the $\SRIdx$, as the latter has an $\Oh(\log\sigma)$ time penalty its operations. Otherwise, the conclusions do not differ from those obtained on DNA.

Overall, we conclude that our \emph{sr}-indexes are sharply dominant when the average run length $n/r$ exceeds 10. At this point, depending on the type of text, they may be matched by other indexes, particularly the $\CSA$ and the $\HybridIdx$. Texts with those average run lengths are arguably non-repetitive anymore: even the classic indexes exceed the 2 bits per symbol used by a plain representation of the data! Finally, our \emph{sr}-indexes perform better on real than on synthetic data.

%For datasets with sufficient repetitiveness, say $\vsTColl / \vnBWTRuns$ above $25$, the sr-indexes sharply dominate as described. As the repetitiveness decreases, with $\vsTColl / \vnBWTRuns$ reaching around 10, the $\RLCSA$ and the $\CSA$ approach the sr-indexes and outperform every other repetitiveness-aware index, as expected.
%This happens on \nameColl{Escherichia} (as mentioned) and is also the case on the least repetitive \nameColl{synthetic DNA} collection, \nameColl{DNA-030}, where the mutation rate reaches $3\%$.
%In the latter collection, the repetitiveness-unaware $\CSA$ largely dominates the entire space-time map.

In general, the bits per symbol used by the \emph{sr}-indexes can be roughly predicted from $\vsTColl / \vnBWTRuns$; for example the sweet spot often uses around $40 \vnBWTRuns$ total bits, although it takes $20 \vnBWTRuns$--$30 \vnBWTRuns$ bits in some cases.
The $\RIdx$ and $\RCSA$ use $70 \vnBWTRuns$--$90 \vnBWTRuns$ bits.

\begin{figure}[t]
  \begin{center}
    \includegraphics[width=0.45\textwidth]{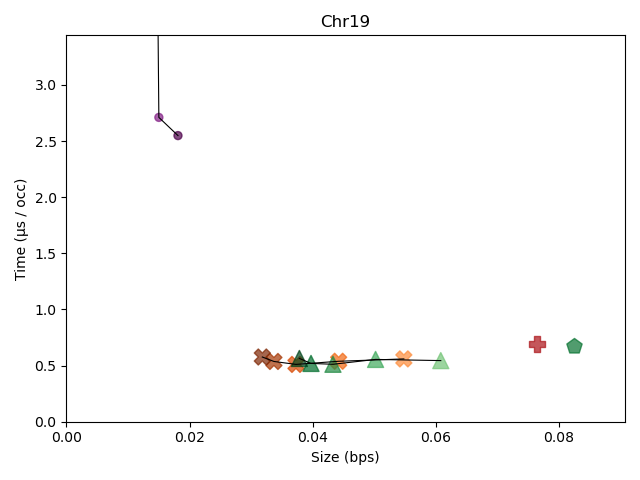}
    \includegraphics[width=0.45\textwidth]{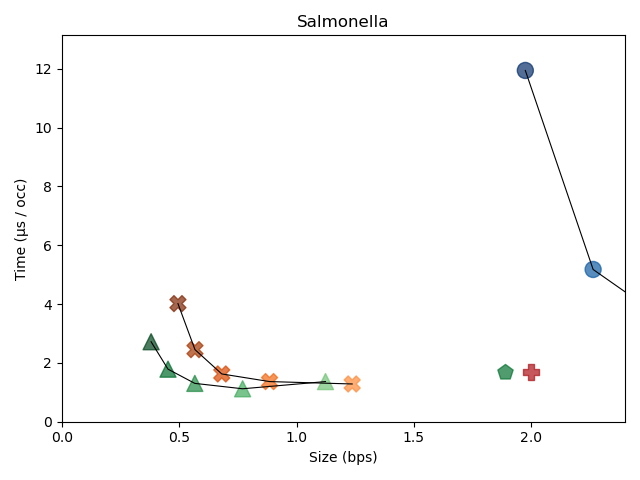}

    \includegraphics[width=0.70\textwidth]{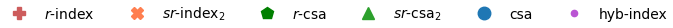}
  \end{center}
  \caption{Space-time tradeoffs on the real DNA collections.}
  \label{fig:exp-dna-real}
\end{figure}

Finally, Figure~\ref{fig:exp-dna-real} shows the results on the largest collections, \nameColl{Chr19} ($56$ GB) and \nameColl{Salmonella} ($70$ GB). We were only able to construct the $\BWT / \Psi$-related indices (except $\RLCSA$) for both, $\HybridIdx$ for \nameColl{Chr19}, and $\LZIdx$ and $\LZEndIdx$ for \nameColl{Salmonella}.
Our benchmarks show that the same observed trends scale to gigabyte-sized collections with different repetitiveness levels.
Specifically, for \nameColl{Chr19}, the $\SRIdx$ shows the best performance.
With a sampling factor of $\vSampFac = 64$, it reduces the space required by the $\RIdx$ from $0.076$ to $0.032$ bps, and even reduces the search time from $0.69$ to $0.58$ $\mu$s.
The $\HybridIdx$ requires $57\%$ of the $\RIdx$ space, but it is $4.4$ times slower.
In the case of \nameColl{Salmonella}, the $\SRCSA$ excels in the time-space tradeoff.
It reduces the $\RCSA$ space by $4.2$ times (from $1.89$ to $0.45$ bps), while the query time increases only by $4.7\%$, from $1.71$ to $1.79$ $\mu$s.

%Overall, our indexing methods $\SRIdx$ and $\SRCSA$ demonstrate exceptional suitability for processing real-world, large-scale text collections, especially in bioinformatics research. They offer the perfect balance: lightning-fast searches and drastically reduced storage needs.

\section{Conclusions} \label{sec:conclusion}

\iftoggle{srindex}{

We have introduced the $\SRIdx$, an $\RIdx$ variant that solves the problem of its relatively bloated space while retaining its search speed. We have also designed the $\RCSA$, an equivalent to the $\RIdx$ that builds on the $\RLCSA$ instead of on the $\RLFMIdx$, and also its reduced-space version, the $\SRCSA$.
The $\SRIdx$ and $\SRCSA$ match the time performance of their large-space versions, $\RIdx$ and $\RCSA$, while using $1.5$--$4.0$ times less space. Further, they sweep the table of compressed indexes for highly repetitive text collections: they are orders of magnitude faster than the others while sharply outperforming most of them in space as well.

Unlike the $\RIdx$ and the $\RCSA$, the $\SRIdx$ and the $\SRCSA$ use little space even in scenarios of mild repetitiveness, which makes them usable in a wider range of bioinformatic applications.
For example, the $\SRIdx$ uses 0.03 bits per symbol (bps) while reporting each occurrence within half a microsecond on a very repetitive gigabyte-sized collection of human genomes, where the original $\RIdx$ and $\RCSA$ use around 0.08 bps and take the same time. On a similarly sized but less repetitive collection of Salmonella genomes, the $\SRCSA$ takes 0.45 bps to answer queries in less than 2 microseconds, matching the time of the $\RIdx$ and $\RCSA$, which take about 2 bps.

On synthetic texts with random mutations, the $\SRIdx$ and the $\SRCSA$ outperform classic compressed indexes on collections with mutation rates under as much as $1\%$. Classic indexes only match them when the mutation rate reaches 3\%. At this rate the text is arguably no longer repetitive, since no index can reach the 2 bps required to store the data in plain form. Further, our indexes perform much better on {\em real} texts than on synthetic ones, as they precisely exploit the uneven coverage of text samples that arise with the $\RIdx$ and $\RCSA$.
%, though in general it is reached by repetitiveness-unaware indexes when $n/r$ approaches $10$, which is equivalent to a DNA mutation rate around 3\%.

The conference version of this article had already an impact on Bioinformatic research.
Goga et al.~\cite{GDBFGN24} simplified and reduced the $\SRIdx$ subsample by abandoning $\fnPhi$ and $\fnIPhi$, and used that subsample in a version of Rossi et al.'s~\cite{ROLGB22} tool MONI.  Goga et al.'s version finds maximal exact matches (MEMs) between a pattern and an indexed text using longest common prefix (LCP) queries between suffixes of the pattern and suffixes starting immediately after characters at boundaries of runs in the $\BWT$ of the text.  They observed that in their experiment with 1000 human chromosome 19s, ``with s = 5 the index took less than three quarters as much space as without subsampling and used only 6\% more query time''.

The $\SRIdx$ subsampling may be useful even when we have no interest in the suffix array, as in Depuydt et al.'s~\cite{DAFGL??} index for metagenomic classification.  They use Li's~\cite{Li12} forward-backward algorithm (see a recent discussion~\cite{Li24}) to find the MEMs between a DNA long read and a large collection of genomes from several different species.  Forward-backward returns the $\BWT$ intervals for the MEMs (without using the suffix array), and Depuydt et al.\ check the corresponding intervals in a run-length compressed tag array~\cite{BGGHNPS24} indicating the species of the genome containing each character in the $\BWT$.  If there are enough sufficiently long MEMs in a read and they all occur only in genomes of the same species, Depuydt et al.\ guess the read comes from an individual of that species.

When only a few distinct species are represented in the dataset, the number of runs in the tag array may be even smaller than the number of runs in the $\BWT$.  When there are many related species, however, the $\BWT$ tends to be much more run-length compressible.  In such cases, Depuydt et al.\ store a bitvector marking the boundaries between runs in the tag array, so they can check that all the occurrences of a MEM are in genomes of one species (without finding out which species that is).  They also store as a ``toehold'' the tag for the first character in each run in the $\BWT$.  When all the occurrences of a MEM are in genomes of one species, the last toehold tag they found while computing that MEM, is that species.  Those toehold tags take the place of suffix-array entries, and they can be $\SRIdx$ subsampled as well.

Another relevant line of future work is to support direct access to arbitrary entries of the suffix array and its inverse, for example to implement compressed suffix trees \cite{GagieNP20:FFS} The $\RLFMIdx$ and $\RLCSA$ \cite{MakinenNSV10:SRH}, which for pattern searching are dominated by the $\SRIdx$ and the $\SRCSA$, use their regular text sampling to compute any such entry in time proportional to the sampling step $\vs$. Obtaining an analogous result on the $\SRIdx$ or the $\RCSA$ would lead to practical compressed suffix trees for highly repetitive text collections, which to date hardly reach the barrier of 2 bps on real bioinformatic collections \cite{NOjea15,CNic21,BCGHMNR21}.
Other proposals for accessing the suffix array faster than the $\RLFMIdx$~\cite{GNF14,PZ20} illustrate this difficulty: they require even more space than the $\RIdx$.

}{
\todo[inline]{Conclusion}
}

\section*{Acknowledgements}

Funded in part by Basal Funds FB0001, ANID, Chile.
D.C. also funded by ANID/Scholarship Program/DOCTORADO BECAS CHILE/2020-21200906, Chile.
T.G. funded by NSERC Discovery Grant RGPIN-07185-2020.
G.N. also funded by Fondecyt Grant 1-230755, ANID, Chile.

%%
%% Bibliography
%%

%% Please use bibtex,

\typeout{}
% Bibliography
\bibliographystyle{ACM-Reference-Format}
\bibliography{bibliography}

%%% -*-BibTeX-*-
%%% Do NOT edit. File created by BibTeX with style
%%% ACM-Reference-Format-Journals [18-Jan-2012].

\begin{thebibliography}{47}

%%% ====================================================================
%%% NOTE TO THE USER: you can override these defaults by providing
%%% customized versions of any of these macros before the \bibliography
%%% command.  Each of them MUST provide its own final punctuation,
%%% except for \shownote{}, \showDOI{}, and \showURL{}.  The latter two
%%% do not use final punctuation, in order to avoid confusing it with
%%% the Web address.
%%%
%%% To suppress output of a particular field, define its macro to expand
%%% to an empty string, or better, \unskip, like this:
%%%
%%% \newcommand{\showDOI}[1]{\unskip}   % LaTeX syntax
%%%
%%% \def \showDOI #1{\unskip}           % plain TeX syntax
%%%
%%% ====================================================================

\ifx \showCODEN    \undefined \def \showCODEN     #1{\unskip}     \fi
\ifx \showDOI      \undefined \def \showDOI       #1{#1}\fi
\ifx \showISBNx    \undefined \def \showISBNx     #1{\unskip}     \fi
\ifx \showISBNxiii \undefined \def \showISBNxiii  #1{\unskip}     \fi
\ifx \showISSN     \undefined \def \showISSN      #1{\unskip}     \fi
\ifx \showLCCN     \undefined \def \showLCCN      #1{\unskip}     \fi
\ifx \shownote     \undefined \def \shownote      #1{#1}          \fi
\ifx \showarticletitle \undefined \def \showarticletitle #1{#1}   \fi
\ifx \showURL      \undefined \def \showURL       {\relax}        \fi
% The following commands are used for tagged output and should be
% invisible to TeX
\providecommand\bibfield[2]{#2}
\providecommand\bibinfo[2]{#2}
\providecommand\natexlab[1]{#1}
\providecommand\showeprint[2][]{arXiv:#2}

\bibitem[Bal{\'a}{\v{z}} et~al\mbox{.}(2024)]%
        {BGGHNPS24}
\bibfield{author}{\bibinfo{person}{Andrej Bal{\'a}{\v{z}}},
  \bibinfo{person}{Travis Gagie}, \bibinfo{person}{Adri{\'a}n Goga},
  \bibinfo{person}{Simon Heumos}, \bibinfo{person}{Gonzalo Navarro},
  \bibinfo{person}{Alessia Petescia}, {and} \bibinfo{person}{Jouni Sir{\'e}n}.}
  \bibinfo{year}{2024}\natexlab{}.
\newblock \showarticletitle{Wheeler maps}. In \bibinfo{booktitle}{\emph{Proc.\
  Latin American Symposium on Theoretical Informatics (LATIN)}}.
  \bibinfo{pages}{178--192}.
\newblock


\bibitem[Belazzougui and Navarro(2015)]%
        {BelazzouguiN15:OLU}
\bibfield{author}{\bibinfo{person}{Djamal Belazzougui} {and}
  \bibinfo{person}{Gonzalo Navarro}.} \bibinfo{year}{2015}\natexlab{}.
\newblock \showarticletitle{Optimal lower and upper bounds for representing
  sequences}.
\newblock \bibinfo{journal}{\emph{ACM Transactions on Algorithms}}
  \bibinfo{volume}{11}, \bibinfo{number}{4}, Article \bibinfo{articleno}{31}
  (\bibinfo{year}{2015}).
\newblock
\showISSN{1549-6325}


\bibitem[Boucher et~al\mbox{.}(2021)]%
        {BCGHMNR21}
\bibfield{author}{\bibinfo{person}{Christina Boucher}, \bibinfo{person}{Ondrej
  Cvacho}, \bibinfo{person}{Travis Gagie}, \bibinfo{person}{Jan Holub},
  \bibinfo{person}{Giovanni Manzini}, \bibinfo{person}{Gonzalo Navarro}, {and}
  \bibinfo{person}{Massimiliano Rossi}.} \bibinfo{year}{2021}\natexlab{}.
\newblock \showarticletitle{{PFP} compressed suffix trees}. In
  \bibinfo{booktitle}{\emph{Proc. 23rd Workshop on Algorithm Engineering and
  Experiments (ALENEX)}}. \bibinfo{pages}{60--72}.
\newblock


\bibitem[Burrows and Wheeler(1994)]%
        {BurrowsW94:BSL}
\bibfield{author}{\bibinfo{person}{Michael Burrows} {and}
  \bibinfo{person}{David~J. Wheeler}.} \bibinfo{year}{1994}\natexlab{}.
\newblock \bibinfo{booktitle}{\emph{{A block-sorting lossless data compression
  algorithm}}}.
\newblock \bibinfo{type}{{T}echnical {R}eport} 124.
  \bibinfo{institution}{Digital Equipment Corporation}.
\newblock


\bibitem[C{\'a}ceres and Navarro(2022)]%
        {CNic21}
\bibfield{author}{\bibinfo{person}{Manuel C{\'a}ceres} {and}
  \bibinfo{person}{Gonzalo Navarro}.} \bibinfo{year}{2022}\natexlab{}.
\newblock \showarticletitle{Faster repetition-aware compressed suffix trees
  based on Block Trees}.
\newblock \bibinfo{journal}{\emph{Information and Computation}}
  \bibinfo{volume}{285B} (\bibinfo{year}{2022}), \bibinfo{pages}{article
  104749}.
\newblock


\bibitem[Clark(1996)]%
        {Cla96}
\bibfield{author}{\bibinfo{person}{David~R. Clark}.}
  \bibinfo{year}{1996}\natexlab{}.
\newblock \emph{\bibinfo{title}{Compact {PAT} Trees}}.
\newblock \bibinfo{thesistype}{Ph.\,D. Dissertation}.
  \bibinfo{school}{University of Waterloo, Canada}.
\newblock


\bibitem[Claude et~al\mbox{.}(2021)]%
        {CNP21}
\bibfield{author}{\bibinfo{person}{Francisco Claude}, \bibinfo{person}{Gonzalo
  Navarro}, {and} \bibinfo{person}{Alejandro Pacheco}.}
  \bibinfo{year}{2021}\natexlab{}.
\newblock \showarticletitle{Grammar-compressed indexes with logarithmic search
  time}.
\newblock \bibinfo{journal}{\emph{{Journal of Computer and System Sciences}}}
  \bibinfo{volume}{118} (\bibinfo{year}{2021}), \bibinfo{pages}{53--74}.
\newblock


\bibitem[Cobas et~al\mbox{.}(2021)]%
        {CGN21}
\bibfield{author}{\bibinfo{person}{Dustin Cobas}, \bibinfo{person}{Travis
  Gagie}, {and} \bibinfo{person}{Gonzalo Navarro}.}
  \bibinfo{year}{2021}\natexlab{}.
\newblock \showarticletitle{A fast and small subsampled r-index}. In
  \bibinfo{booktitle}{\emph{Proc. 32nd Annual Symposium on Combinatorial
  Pattern Matching (CPM)}}. Article \bibinfo{articleno}{13}.
\newblock


\bibitem[Depuydt et~al\mbox{.}(tion)]%
        {DAFGL??}
\bibfield{author}{\bibinfo{person}{Lore Depuydt}, \bibinfo{person}{Omar Ahmed},
  \bibinfo{person}{Jan Fostier}, \bibinfo{person}{Travis Gagie}, {and}
  \bibinfo{person}{Ben Langmead}.} \bibinfo{year}{In preparation}\natexlab{}.
\newblock \bibinfo{title}{Metagenomic classification via {MEM}-finding and
  tagging}.
\newblock
\newblock


\bibitem[D{\'i}az-Dom{\'i}nguez and Navarro(2021)]%
        {DN21}
\bibfield{author}{\bibinfo{person}{Diego D{\'i}az-Dom{\'i}nguez} {and}
  \bibinfo{person}{Gonzalo Navarro}.} \bibinfo{year}{2021}\natexlab{}.
\newblock \showarticletitle{A grammar compressor for collections of reads with
  applications to the construction of the {BWT}}. In
  \bibinfo{booktitle}{\emph{Proc. 31st Data Compression Conference (DCC)}}.
  \bibinfo{pages}{93--102}.
\newblock


\bibitem[Ferrada et~al\mbox{.}(2018)]%
        {FKP18}
\bibfield{author}{\bibinfo{person}{H{\'{e}}ctor Ferrada},
  \bibinfo{person}{Dominik Kempa}, {and} \bibinfo{person}{Simon~J. Puglisi}.}
  \bibinfo{year}{2018}\natexlab{}.
\newblock \showarticletitle{Hybrid indexing revisited}. In
  \bibinfo{booktitle}{\emph{Proc. 20th Workshop on Algorithm Engineering and
  Experiments (ALENEX)}}. \bibinfo{pages}{1--8}.
\newblock


\bibitem[Ferragina and Manzini(2005)]%
        {FerraginaM05:ICT}
\bibfield{author}{\bibinfo{person}{Paolo Ferragina} {and}
  \bibinfo{person}{Giovanni Manzini}.} \bibinfo{year}{2005}\natexlab{}.
\newblock \showarticletitle{Indexing compressed text}.
\newblock \bibinfo{journal}{\emph{{Journal of the ACM}}} \bibinfo{volume}{52},
  \bibinfo{number}{4} (\bibinfo{year}{2005}), \bibinfo{pages}{552--581}.
\newblock


\bibitem[Ferragina et~al\mbox{.}(2007)]%
        {FerraginaMMN07:CRS}
\bibfield{author}{\bibinfo{person}{Paolo Ferragina}, \bibinfo{person}{Giovanni
  Manzini}, \bibinfo{person}{Veli M{\"{a}}kinen}, {and}
  \bibinfo{person}{Gonzalo Navarro}.} \bibinfo{year}{2007}\natexlab{}.
\newblock \showarticletitle{{Compressed representations of sequences and
  full-text indexes}}.
\newblock \bibinfo{journal}{\emph{ACM Transactions on Algorithms}}
  \bibinfo{volume}{3}, \bibinfo{number}{2} (\bibinfo{year}{2007}).
\newblock


\bibitem[Ferragina and Navarro(2024)]%
        {PizzaChili}
\bibfield{author}{\bibinfo{person}{Paolo Ferragina} {and}
  \bibinfo{person}{Gonzalo Navarro}.} \bibinfo{year}{Accessed Sept
  2024}\natexlab{}.
\newblock \bibinfo{title}{The {Pizza\&Chili} Repetitive Corpus}.
\newblock
\newblock
\newblock
\shownote{{\tt http://pizzachili.dcc.uchile.cl/repcorpus.html}}.


\bibitem[Gagie and Navarro(2019)]%
        {GNencyc18}
\bibfield{author}{\bibinfo{person}{Travis Gagie} {and} \bibinfo{person}{Gonzalo
  Navarro}.} \bibinfo{year}{2019}\natexlab{}.
\newblock \showarticletitle{Compressed indexes for repetitive textual
  datasets}.
\newblock In \bibinfo{booktitle}{\emph{Encyclopedia of Big Data Technologies}}.
  \bibinfo{publisher}{Springer}.
\newblock


\bibitem[Gagie et~al\mbox{.}(2020)]%
        {GagieNP20:FFS}
\bibfield{author}{\bibinfo{person}{Travis Gagie}, \bibinfo{person}{Gonzalo
  Navarro}, {and} \bibinfo{person}{Nicola Prezza}.}
  \bibinfo{year}{2020}\natexlab{}.
\newblock \showarticletitle{Fully-functional suffix trees and optimal text
  searching in {BWT}-runs bounded space}.
\newblock \bibinfo{journal}{\emph{{Journal of the ACM}}} \bibinfo{volume}{67},
  \bibinfo{number}{1} (\bibinfo{year}{2020}), \bibinfo{pages}{article 2}.
\newblock


\bibitem[Goga et~al\mbox{.}(2024)]%
        {GDBFGN24}
\bibfield{author}{\bibinfo{person}{Adri{\'a}n Goga}, \bibinfo{person}{Lore
  Depuydt}, \bibinfo{person}{Nathaniel~K. Brown}, \bibinfo{person}{Jan
  Fostier}, \bibinfo{person}{Travis Gagie}, {and} \bibinfo{person}{Gonzalo
  Navarro}.} \bibinfo{year}{2024}\natexlab{}.
\newblock \showarticletitle{Faster maximal exact matches with lazy {LCP}
  evaluation}. In \bibinfo{booktitle}{\emph{Proc.\ 34th Data Compression
  Conference (DCC)}}. \bibinfo{pages}{123--132}.
\newblock


\bibitem[Golynski et~al\mbox{.}(2006)]%
        {GMR06}
\bibfield{author}{\bibinfo{person}{Alexander Golynski}, \bibinfo{person}{J.~Ian
  Munro}, {and} \bibinfo{person}{S.~Srinivasa Rao}.}
  \bibinfo{year}{2006}\natexlab{}.
\newblock \showarticletitle{Rank/select operations on large alphabets: a tool
  for text indexing}. In \bibinfo{booktitle}{\emph{Proc. 17th ACM-SIAM Annual
  Symposium on Discrete Algorithms (SODA)}}. \bibinfo{pages}{368--373}.
\newblock


\bibitem[Gonz{\'a}lez et~al\mbox{.}(2014)]%
        {GNF14}
\bibfield{author}{\bibinfo{person}{Rodrigo Gonz{\'a}lez},
  \bibinfo{person}{Gonzalo Navarro}, {and} \bibinfo{person}{H{\'e}ctor
  Ferrada}.} \bibinfo{year}{2014}\natexlab{}.
\newblock \showarticletitle{Locally compressed suffix arrays}.
\newblock \bibinfo{journal}{\emph{ACM Journal of Experimental Algorithmics}}
  \bibinfo{volume}{19}, \bibinfo{number}{1} (\bibinfo{year}{2014}),
  \bibinfo{pages}{article 1}.
\newblock


\bibitem[Grossi et~al\mbox{.}(2003)]%
        {GGV03}
\bibfield{author}{\bibinfo{person}{Robrtyo Grossi}, \bibinfo{person}{Ankur
  Gupta}, {and} \bibinfo{person}{Jeffrey~S. Vitter}.}
  \bibinfo{year}{2003}\natexlab{}.
\newblock \showarticletitle{High-order entropy-compressed text indexes}. In
  \bibinfo{booktitle}{\emph{Proc. 14th Annual ACM-SIAM Symposium on Discrete
  Algorithms (SODA)}}. \bibinfo{pages}{841--850}.
\newblock


\bibitem[Grossi and Vitter(2005)]%
        {GrossiV05:CSA}
\bibfield{author}{\bibinfo{person}{Roberto Grossi} {and}
  \bibinfo{person}{Jeffrey~Scott Vitter}.} \bibinfo{year}{2005}\natexlab{}.
\newblock \showarticletitle{Compressed suffix arrays and suffix trees with
  applications to text indexing and string matching}.
\newblock \bibinfo{journal}{\emph{{SIAM Journal on Computing}}}
  \bibinfo{volume}{35}, \bibinfo{number}{2} (\bibinfo{year}{2005}),
  \bibinfo{pages}{378--407}.
\newblock


\bibitem[K{\"a}rkk{\"a}inen et~al\mbox{.}(2009)]%
        {KarkkainenMP09}
\bibfield{author}{\bibinfo{person}{Juha K{\"a}rkk{\"a}inen},
  \bibinfo{person}{Giovanni Manzini}, {and} \bibinfo{person}{Simon~J.
  Puglisi}.} \bibinfo{year}{2009}\natexlab{}.
\newblock \showarticletitle{Permuted longest-common-prefix array}. In
  \bibinfo{booktitle}{\emph{Proc. 20th Annual Symposium on Combinatorial
  Pattern Matching (CPM)}}. \bibinfo{pages}{181--192}.
\newblock


\bibitem[K{\"{a}}rkk{\"{a}}inen and Puglisi(2011)]%
        {KP11}
\bibfield{author}{\bibinfo{person}{Juha K{\"{a}}rkk{\"{a}}inen} {and}
  \bibinfo{person}{Simon~J. Puglisi}.} \bibinfo{year}{2011}\natexlab{}.
\newblock \showarticletitle{Fixed block compression boosting in {FM}-Indexes}.
  In \bibinfo{booktitle}{\emph{Proc. 18th International Symposium on String
  Processing and Information Retrieval (SPIRE)}}. \bibinfo{pages}{174--184}.
\newblock


\bibitem[Kempa and Kociumaka(2020)]%
        {KK19}
\bibfield{author}{\bibinfo{person}{Dominik Kempa} {and} \bibinfo{person}{Tomasz
  Kociumaka}.} \bibinfo{year}{2020}\natexlab{}.
\newblock \showarticletitle{Resolution of the {B}urrows--{W}heeler Transform
  conjecture}. In \bibinfo{booktitle}{\emph{Proc. 61st IEEE Symposium on
  Foundations of Computer Science (FOCS)}}. \bibinfo{pages}{1002--1013}.
\newblock


\bibitem[Kieffer and Yang(2000)]%
        {KY00}
\bibfield{author}{\bibinfo{person}{John~C. Kieffer} {and}
  \bibinfo{person}{En-Hui Yang}.} \bibinfo{year}{2000}\natexlab{}.
\newblock \showarticletitle{Grammar-based codes: {A} new class of universal
  lossless source codes}.
\newblock \bibinfo{journal}{\emph{IEEE Transactions on Information Theory}}
  \bibinfo{volume}{46}, \bibinfo{number}{3} (\bibinfo{year}{2000}),
  \bibinfo{pages}{737--754}.
\newblock


\bibitem[Kreft and Navarro(2013)]%
        {KreftN13:CIR}
\bibfield{author}{\bibinfo{person}{Sebastian Kreft} {and}
  \bibinfo{person}{Gonzalo Navarro}.} \bibinfo{year}{2013}\natexlab{}.
\newblock \showarticletitle{{On compressing and indexing repetitive
  sequences}}.
\newblock \bibinfo{journal}{\emph{Theoretical Computer Science}}
  \bibinfo{volume}{483} (\bibinfo{year}{2013}), \bibinfo{pages}{115--133}.
\newblock


\bibitem[Lempel and Ziv(1976)]%
        {LZ76}
\bibfield{author}{\bibinfo{person}{Abraham Lempel} {and} \bibinfo{person}{Jacob
  Ziv}.} \bibinfo{year}{1976}\natexlab{}.
\newblock \showarticletitle{On the complexity of finite sequences}.
\newblock \bibinfo{journal}{\emph{IEEE Transactions on Information Theory}}
  \bibinfo{volume}{22}, \bibinfo{number}{1} (\bibinfo{year}{1976}),
  \bibinfo{pages}{75--81}.
\newblock


\bibitem[Li(2012)]%
        {Li12}
\bibfield{author}{\bibinfo{person}{Heng Li}.} \bibinfo{year}{2012}\natexlab{}.
\newblock \showarticletitle{Exploring single-sample {SNP} and {INDEL} calling
  with whole-genome de novo assembly}.
\newblock \bibinfo{journal}{\emph{Bioinformatics}} \bibinfo{volume}{28},
  \bibinfo{number}{14} (\bibinfo{year}{2012}), \bibinfo{pages}{1838--1844}.
\newblock


\bibitem[Li(2024)]%
        {Li24}
\bibfield{author}{\bibinfo{person}{Heng Li}.} \bibinfo{year}{2024}\natexlab{}.
\newblock \showarticletitle{{BWT} construction and search at the terabase
  scale}.
\newblock \bibinfo{journal}{\emph{CoRR}}  \bibinfo{volume}{2409.00613}
  (\bibinfo{year}{2024}).
\newblock


\bibitem[M{\"a}kinen et~al\mbox{.}(2015)]%
        {MBCT15}
\bibfield{author}{\bibinfo{person}{Veli M{\"a}kinen}, \bibinfo{person}{Djamal
  Belazzougui}, \bibinfo{person}{Fabio Cunial}, {and}
  \bibinfo{person}{Alexandru~I. Tomescu}.} \bibinfo{year}{2015}\natexlab{}.
\newblock \bibinfo{booktitle}{\emph{Genome-Scale Algorithm Design}}.
\newblock \bibinfo{publisher}{Cambridge University Press}.
\newblock


\bibitem[M{\"a}kinen and Navarro(2005)]%
        {MakinenN05:SSA}
\bibfield{author}{\bibinfo{person}{Veli M{\"a}kinen} {and}
  \bibinfo{person}{Gonzalo Navarro}.} \bibinfo{year}{2005}\natexlab{}.
\newblock \showarticletitle{Succinct suffix arrays based on run-length
  encoding}.
\newblock \bibinfo{journal}{\emph{Nordic Journal of Computing}}
  \bibinfo{volume}{12}, \bibinfo{number}{1} (\bibinfo{year}{2005}),
  \bibinfo{pages}{40--66}.
\newblock


\bibitem[M{\"a}kinen and Navarro(2008)]%
        {MN08}
\bibfield{author}{\bibinfo{person}{Veli M{\"a}kinen} {and}
  \bibinfo{person}{Gonzalo Navarro}.} \bibinfo{year}{2008}\natexlab{}.
\newblock \showarticletitle{Dynamic entropy-compressed sequences and full-text
  indexes}.
\newblock \bibinfo{journal}{\emph{ACM Transactions on Algorithms}}
  \bibinfo{volume}{4}, \bibinfo{number}{3} (\bibinfo{year}{2008}),
  \bibinfo{pages}{article 32}.
\newblock


\bibitem[M{\"a}kinen et~al\mbox{.}(2010)]%
        {MakinenNSV10:SRH}
\bibfield{author}{\bibinfo{person}{Veli M{\"a}kinen}, \bibinfo{person}{Gonzalo
  Navarro}, \bibinfo{person}{Jouni Sir{\'e}n}, {and} \bibinfo{person}{Niko
  V{\"a}lim{\"a}ki}.} \bibinfo{year}{2010}\natexlab{}.
\newblock \showarticletitle{Storage and retrieval of highly repetitive sequence
  collections}.
\newblock \bibinfo{journal}{\emph{Journal of Computational Biology}}
  \bibinfo{volume}{17}, \bibinfo{number}{3} (\bibinfo{year}{2010}),
  \bibinfo{pages}{281--308}.
\newblock


\bibitem[Manber and Myers(1993)]%
        {ManberM93:SAN}
\bibfield{author}{\bibinfo{person}{Udi Manber} {and} \bibinfo{person}{Gene
  Myers}.} \bibinfo{year}{1993}\natexlab{}.
\newblock \showarticletitle{Suffix arrays: A new method for on-line string
  searches}.
\newblock \bibinfo{journal}{\emph{{SIAM Journal on Computing}}}
  \bibinfo{volume}{22}, \bibinfo{number}{5} (\bibinfo{year}{1993}),
  \bibinfo{pages}{935--948}.
\newblock


\bibitem[Navarro(2021)]%
        {Nav20}
\bibfield{author}{\bibinfo{person}{Gonzalo Navarro}.}
  \bibinfo{year}{2021}\natexlab{}.
\newblock \showarticletitle{Indexing highly repetitive string collections, part
  {II}: Compressed indexes}.
\newblock \bibinfo{journal}{\emph{{ACM Computing Surveys}}}
  \bibinfo{volume}{54}, \bibinfo{number}{2} (\bibinfo{year}{2021}),
  \bibinfo{pages}{article 26}.
\newblock


\bibitem[Navarro and M{\"a}kinen(2007)]%
        {NavarroM07:CFT}
\bibfield{author}{\bibinfo{person}{Gonzalo Navarro} {and} \bibinfo{person}{Veli
  M{\"a}kinen}.} \bibinfo{year}{2007}\natexlab{}.
\newblock \showarticletitle{Compressed full-text indexes}.
\newblock \bibinfo{journal}{\emph{{ACM Computing Surveys}}}
  \bibinfo{volume}{39}, \bibinfo{number}{1} (\bibinfo{year}{2007}),
  \bibinfo{pages}{article 2}.
\newblock


\bibitem[Navarro and Ord{\'o}{\~n}ez(2016)]%
        {NOjea15}
\bibfield{author}{\bibinfo{person}{Gonzalo Navarro} {and}
  \bibinfo{person}{Alberto Ord{\'o}{\~n}ez}.} \bibinfo{year}{2016}\natexlab{}.
\newblock \showarticletitle{Faster compressed suffix trees for repetitive
  collections}.
\newblock \bibinfo{journal}{\emph{ACM Journal of Experimental Algorithmics}}
  \bibinfo{volume}{21}, \bibinfo{number}{1} (\bibinfo{year}{2016}),
  \bibinfo{pages}{article 1.8}.
\newblock


\bibitem[Navarro and Sep{\'u}lveda(2019)]%
        {NS19}
\bibfield{author}{\bibinfo{person}{Gonzalo Navarro} {and}
  \bibinfo{person}{V{\'i}ctor Sep{\'u}lveda}.} \bibinfo{year}{2019}\natexlab{}.
\newblock \showarticletitle{Practical indexing of repetitive collections using
  {R}elative {L}empel--{Z}iv}. In \bibinfo{booktitle}{\emph{Proc. 29th Data
  Compression Conference (DCC)}}. \bibinfo{pages}{201--210}.
\newblock


\bibitem[Nishimoto and Tabei(2020)]%
        {NT20}
\bibfield{author}{\bibinfo{person}{Takaaki Nishimoto} {and}
  \bibinfo{person}{Yasuo Tabei}.} \bibinfo{year}{2020}\natexlab{}.
\newblock \showarticletitle{Faster queries on {BWT}-runs compressed indexes}.
\newblock \bibinfo{journal}{\emph{CoRR}}  \bibinfo{volume}{2006.05104}
  (\bibinfo{year}{2020}).
\newblock


\bibitem[Okanohara and Sadakane(2007)]%
        {OS07}
\bibfield{author}{\bibinfo{person}{Daisuke Okanohara} {and}
  \bibinfo{person}{Kunihiko Sadakane}.} \bibinfo{year}{2007}\natexlab{}.
\newblock \showarticletitle{Practical entropy-compressed rank/select
  dictionary}. In \bibinfo{booktitle}{\emph{Proc. 9th Workshop on Algorithm
  Engineering and Experiments (ALENEX)}}. \bibinfo{pages}{60--70}.
\newblock


\bibitem[Prezza(2017)]%
        {Prezza17:CCT}
\bibfield{author}{\bibinfo{person}{Nicola Prezza}.}
  \bibinfo{year}{2017}\natexlab{}.
\newblock \emph{\bibinfo{title}{Compressed Computation for Text Indexing}}.
\newblock \bibinfo{thesistype}{Ph.\,D. Dissertation}.
  \bibinfo{school}{University of Udine, Italy}.
\newblock


\bibitem[Puglisi and Zhukova(2020)]%
        {PZ20}
\bibfield{author}{\bibinfo{person}{Simon~J. Puglisi} {and}
  \bibinfo{person}{Bella Zhukova}.} \bibinfo{year}{2020}\natexlab{}.
\newblock \showarticletitle{Relative {Lempel--Ziv} compression of suffix
  arrays}. In \bibinfo{booktitle}{\emph{Proc. 27th International Symposium on
  String Processing and Information Retrieval (SPIRE)}}.
  \bibinfo{pages}{89--96}.
\newblock


\bibitem[Rossi et~al\mbox{.}(2022)]%
        {ROLGB22}
\bibfield{author}{\bibinfo{person}{Massimiliano Rossi}, \bibinfo{person}{Marco
  Oliva}, \bibinfo{person}{Ben Langmead}, \bibinfo{person}{Travis Gagie}, {and}
  \bibinfo{person}{Christina Boucher}.} \bibinfo{year}{2022}\natexlab{}.
\newblock \showarticletitle{MONI: A pangenomic index for finding maximal exact
  matches}.
\newblock \bibinfo{journal}{\emph{Journal of Computational Biology}}
  \bibinfo{volume}{29}, \bibinfo{number}{2} (\bibinfo{year}{2022}),
  \bibinfo{pages}{169--187}.
\newblock


\bibitem[Sadakane(2002)]%
        {Sad02}
\bibfield{author}{\bibinfo{person}{Kunihiko Sadakane}.}
  \bibinfo{year}{2002}\natexlab{}.
\newblock \showarticletitle{Succinct representations of $lcp$ information and
  improvements in the compressed suffix arrays}. In
  \bibinfo{booktitle}{\emph{Proc. 13th Annual ACM-SIAM Symposium on Discrete
  Algorithms (SODA)}}. \bibinfo{pages}{225--232}.
\newblock


\bibitem[Sadakane(2003)]%
        {Sad03}
\bibfield{author}{\bibinfo{person}{Kunihiko Sadakane}.}
  \bibinfo{year}{2003}\natexlab{}.
\newblock \showarticletitle{New text indexing functionalities of the Compressed
  Suffix Arrays}.
\newblock \bibinfo{journal}{\emph{Journal of Algorithms}} \bibinfo{volume}{48},
  \bibinfo{number}{2} (\bibinfo{year}{2003}), \bibinfo{pages}{294--313}.
\newblock


\bibitem[Stevens et~al\mbox{.}(2017)]%
        {stevens2017public}
\bibfield{author}{\bibinfo{person}{Eric~L. Stevens}, \bibinfo{person}{Ruth
  Timme}, \bibinfo{person}{Eric~W. Brown}, \bibinfo{person}{Marc~W. Allard},
  \bibinfo{person}{Errol Strain}, \bibinfo{person}{Kelly Bunning}, {and}
  \bibinfo{person}{Steven Musser}.} \bibinfo{year}{2017}\natexlab{}.
\newblock \showarticletitle{The public health impact of a publically available,
  environmental database of microbial genomes}.
\newblock \bibinfo{journal}{\emph{Frontiers in Microbiology}}
  \bibinfo{volume}{8} (\bibinfo{year}{2017}), \bibinfo{pages}{808}.
\newblock


\bibitem[{The 1000 Genomes Project Consortium}(2015)]%
        {1000genomes}
\bibfield{author}{\bibinfo{person}{{The 1000 Genomes Project Consortium}}.}
  \bibinfo{year}{2015}\natexlab{}.
\newblock \showarticletitle{A global reference for human genetic variation}.
\newblock \bibinfo{journal}{\emph{Nature}}  \bibinfo{volume}{526}
  (\bibinfo{year}{2015}), \bibinfo{pages}{68--74}.
\newblock


\end{thebibliography}

% \printbibliography

\end{document}